\documentclass[lettersize,onecolumn]{IEEEtran}

\usepackage{tabularx}

\usepackage{tabularx}
\usepackage{graphicx}
\usepackage{multicol}
\usepackage{hyperref}
\usepackage{algorithmic}
\usepackage{algorithm}
\usepackage{array}
\usepackage[caption=false,font=normalsize,labelfont=sf,textfont=sf]{subfig}
\usepackage{textcomp}
\usepackage{stfloats}
\usepackage{url}
\usepackage{verbatim}
\usepackage{graphicx}

\usepackage[square,numbers,sort&compress]{natbib}
\bibliographystyle{IEEEtran}

\usepackage{graphicx}

\usepackage{float}
\usepackage{amsmath}
\usepackage{hyperref} 
\usepackage{amsthm}
\usepackage{amsmath}
\usepackage{mathrsfs}
\usepackage{amssymb} 
\usepackage{bbm,dsfont}   
\usepackage{enumerate} 

\usepackage[table]{xcolor}
\usepackage[font=small,labelfont=bf]{caption}
\usepackage{multirow}
\usepackage{tikz}
\usetikzlibrary{calc}
\usepackage{pifont}%
\newcommand{\parbreak}{ \vspace{0.3 em}}
\usepackage{physics}

\usepackage{xcolor}

\usepackage{tabularx}

\usepackage{tabularx}

\newcounter{protocol}
\renewcommand{\theprotocol}{\arabic{protocol}}

\newenvironment{protocol}[1]
  {\par\addvspace{\topsep}
   \noindent
   \tabularx{\linewidth}{@{} X @{}}
    \hline
    \refstepcounter{protocol}\textbf{Protocol \theprotocol} #1 \\
    \hline}
  { \\
    \hline
   \endtabularx
   \par\addvspace{\topsep}}

\newcommand{\sbline}{\\[.5\normalbaselineskip]}
\usepackage{graphicx}
\usepackage{subfig}
\usepackage{hyperref} 
\hypersetup{
    colorlinks,
    linkcolor={blue!80!black},%
    citecolor={green!30!black},
    urlcolor={blue!80!black}
}

\newtheorem{theorem}{Theorem}
\theoremstyle{remark}	\newtheorem{lemma}[theorem]{Lemma}
\theoremstyle{remark}	\newtheorem{corollary}[theorem]{Corollary}
\theoremstyle{remark}	\newtheorem{proposition}[theorem]{Proposition}
\theoremstyle{remark} \newtheorem{definition}{Definition}

\theoremstyle{remark} \newtheorem{remark}{Remark}
\theoremstyle{remark} 

\theoremstyle{remark}	\newtheorem*{discussion*}{Discussion}

\hyphenation{op-tical net-works semi-conduc-tor IEEE-Xplore}

\begin{document}

\title{Network Oblivious Transfer via Noisy Channels: Limits and Capacities}

\author{Hadi Aghaee$^\dagger$, Bahareh Akhbari$^*$,~\IEEEmembership{Member,~IEEE}, Christian Deppe$^\dagger$,~\IEEEmembership{Senior Member,~IEEE}\\
\emph{$^\dagger$Institute for Communications Technology, Technische Universität Braunschweig, Braunschweig, Germany\\
$^*$Faculty of Electrical Engineering, K. N. Toosi University of Technology, Tehran, Iran}\\
Email: (hadi.aghaee, christian.deppe)@tu-bs.de, akhbari@kntu.ac.ir

\thanks{This paper was presented in part at the 2025 IEEE International Symposium on Information Theory (ISIT 2025, \cite{AghaeeDeppe2025}), and was presented in 5th Workshop on Enabling Security, Trust, and Privacy in 6G Wireless Systems at the 2025 IEEE Global Communications Conference (GLOBECOM 2025, \cite{Hadi-limit}).}
}



\maketitle

\begin{abstract}
In this paper, we study the information-theoretic limits of oblivious transfer via noisy channels. We also investigate oblivious transfer over a noisy multiple-access channel with two non-colluding senders and a single receiver. The channel is modeled through correlations among the parties, who may be honest-but-curious or, in the case of the receiver, potentially malicious. We first revisit the information-theoretic limits of two-party oblivious transfer and then extend these results to the multiple-access setting. For honest-but-curious participants, we introduce a multiparty protocol that reduces a general multiple access channel to a suitable correlation model. In scenarios with a malicious receiver, we characterize an achievable oblivious transfer rate region.
\end{abstract}

\begin{IEEEkeywords}
Oblivious transfer, Multiple access channel, Bounds for OT capacity, Limits of OT.
\end{IEEEkeywords}

\section{Introduction}
\IEEEPARstart{O}{blivious} Transfer (OT), a fundamental primitive in secure multiparty computation (MPC), plays a central role in the design of cryptographic protocols. OT is complete in the sense that, given access to an OT protocol between two parties, any function can be securely computed between them. More precisely, if OT is assumed to be a trusted primitive available between two parties, then it suffices to achieve general secure computation, even in the presence of malicious adversaries \cite{Gold}. The most basic form of OT is the $1$-out-of-$2$ OT, where a sender (Alice) holds two distinct messages and a receiver (Bob) selects one of them to receive, without revealing his choice to Alice. At the same time, Bob learns nothing about the unchosen message. A natural extension is the $1$-out-of-$m$ OT, in which Alice holds 
$m$ messages (e.g., bit strings), and Bob retrieves exactly one, without gaining information about the others or revealing his selection to Alice.

\parbreak OT was first introduced by Rabin \cite{Rabin1}. In Rabin's form, Alice sends a message to Bob with the probability of $\frac12$ while she remains oblivious to whether or not Bob received the message. This model is called the "\emph{Erasure Channel}" with the erasure probability equal $\frac12$. After that, a basic OT protocol was introduced by Even, Goldreich, and Lempel (EGL) \cite{EGL}. It is well-known that achieving multi-party security (as a basic model) in noise-free communication is impossible \cite{Kilian1}. It has been shown that achieving two-party secure communication over a noiseless channel is possible by randomness sharing \cite{Rudolf1, Winter1}. Shared randomness consists of random variables known to all communicating parties but independent of the message being transmitted \cite{Rudolf2}.

\parbreak Up to now, some primary channels have been studied for OT purposes obtained from noise. The binary symmetric channel (BSC) has made a more outstanding contribution \cite{Crepeau1, Crepeau2, Stebila}. However, all the cryptogates and channels that can be used for obtaining OT are characterized by Kilian in the case of passive adversaries \cite{Kilian2}. It should be mentioned that most of the research works in this field have been done from the perspective of the basics of cryptography, and there are a few sources that study the problem of OT from the information theory point of view. However, the OT capacity of noisy channels is generally unknown. It is known to be non-negligible if the players (sometimes we call senders and receivers as players/parties) are committed to the protocol and implement it faithfully, not turning away from additional information (\emph{honest-but-curious players}). Still, in the case of fully malicious players (active adversaries), non-zero rates have not ever been achieved \cite{Crepeau3}.

\parbreak As the first step, Nascimento and Winter study the OT capacity of noisy correlations \cite{Winter3}, wherein they characterized which noisy channels and distributions are useful for obtaining OT. In \cite{Winter1}, they showed that for honest-but-curious players, the OT capacity of noisy resources is positive by achieving a lower bound that coincides with the upper bound of \cite{Rudolf1}. The OT capacity of the binary erasure channel (BEC) is studied in \cite{Imai}, in which the OT capacity is $C_{\text{OT}}=\frac12$ with erasure probability $\frac12$ that is a property of the channel/system model in the case of honest-but-curious players and a lower bound is calculated in the case of fully malicious players. Ahlswede and Csiszar achieved a lower and upper bound on the OT capacity of noisy channels \cite{Rudolf1}. The upper bound is general and valid for every noisy channel with honest-but-curious players, while the lower bound is just valid for a special reduced version of a DMC, wherein the channel outputs are separable into two distinct sets: fully erased bits and fully received bits. It can be easily seen that the upper bound in \cite{Rudolf1} and the lower bound in \cite{Winter1} for a special reduced version coincide. An improved upper bound compared to \cite{Rudolf1} is proved in \cite{Rao} based on a monotonicity property of the tension region in the channel model. Also, symmetric private information retrieval (SPIR) has been studied under the terminology OT, with a noisy channel
between the server and the client to achieve information-theoretic security, e.g., \cite{amir1,amir2, amir3}. 

\parbreak The limits of OT are another important bottleneck in secure two-party computation, so that is addressed in various cryptographic-based papers \cite{Clev, Blum}. Some limitations relate to the capability of cheating parties, and some of them relate to the power of noise over the communication channels \cite{Damgard}. The information-theoretical limits of OT between two parties over a shared channel between three nodes are first addressed in \cite{Hadi-limit} from the same authors. In this paper, we have a more general contribution. As a motivation, we would like to draw attention to the following statement by Kurt Gödel \cite{Godel}:

\parbreak
\begin{quote}
\small
“Any consistent formal system within which a certain amount of arithmetic can be carried out is incomplete; there are statements of arithmetic which are true, but not provable within the system.”\\
\hfill — K.~Gödel, \textit{On Formally Undecidable Propositions of Principia Mathematica and Related Systems}, 1931.
\end{quote}

\parbreak As Gödel showed that no formal system can be both complete and consistent, one can say no cryptographic primitive, even as complete as OT, can escape the limits imposed by the underlying communication models and assumptions.

\parbreak In this work, we also aim to study bounds for the OT capacity of the two-user Discrete Memoryless Multiple Access Channel (DM-MAC) as one of the primary network models from the perspective of network information theory. The MAC refers to a communication scenario where multiple users send information over a shared channel to a receiver. This setup is typical in communication systems, such as cellular networks, where several devices need to communicate with a base station or an access point. We consider the following system model: Two senders, both send two independent messages (two strings) over a noisy channel to a receiver. The receiver then has to choose only one string from each sender, and the senders are assumed to be legitimate relative to each other (\emph{non-colluding senders}). This means that there is no criterion of secrecy between them. 

\parbreak This paper is organized as follows: Some seminal definitions are presented in Section \ref{Sec: Per}. Section \ref{Sec: related} is dedicated to related works and known results. The system model and main results are presented in Sections \ref{Sec: model} and \ref{Sec: OT-MAC}, respectively. We provide some examples in Section \ref{Sec: examples}, and a brief discussion in Section \ref{Sec: conc}.
\section{Preliminaries}\label{Sec: Per}
We use the well-known notation of information theory in addition to the following notations:
We use capital letters (e.g., $X$) to denote random variables, with the specific alphabet $\mathcal X$ defined by the context in which $X$ is used. Lowercase letters (e.g., $x$) represent realizations of the corresponding random variables. Bold uppercase letters (e.g., $\mathbf{X}$) denote random $n$-tuples.

\subsection*{1. Notation for Tuples}
\begin{itemize}
    
    \item Suppose $\mathcal{A}\subset \mathbb{N}$. Then $(\mathcal{A})$ represents the tuple formed by arranging the elements of $\mathcal{A}$ in increasing order:
    \[
    (\mathcal{A}) = (a_i |\, a_i \in A, \, i = 1, 2, \dots, |\mathcal{A}|), \quad \text{with } a_i < a_{i+1} \text{ for } i \geq 1
    \]
    \textit{Example}: If $\mathcal{A} = \{1, 3, 2, 9, 4\}$, then $(\mathcal{A}) = (1, 2, 3, 4, 9)$.
    
    \item For two sets $\mathcal{A} \subset \{1, 2, \dots, k\}$, and $\mathbf{X}$, we have:
    \[
    \mathbf{X}|_{\mathcal{A}} = \mathbf{X}|_{(\mathcal{A})} = \big\{ x_i|\,i\in \mathcal{A}\big\}.
    \]
    \textit{Example}: If $\mathbf{X} = (a, b, c, d, e, f, g)$ and $\mathcal{A} = \{6, 3, 1\}$, then $\mathbf{X}|_{\mathcal{A}} = (a, c, f)$.
    \item When a member $i$ is removed from a set $\mathcal{F}$, we denote the case by: 
    \[
    \mathcal{F}\setminus\{i\}.
    \]
    \textit{Example:} Given $\mathcal{F}=\{a, b, c, d,e\}$, we have: $ \{a, b, d,e\}=\mathcal{F}\setminus\{c\}$.
\end{itemize}

\subsection*{2. Markov Chains}
Random variables $X, Y, Z$ form a Markov chain $X - Y - Z$ when $X$ and $Z$ are conditionally independent given $Y$. That is, if $X \in \mathcal{X}$, $Y \in \mathcal{Y}$, and $Z \in \mathcal{Z}$, then $X - Y - Z$ implies:
\[
\forall x \in \mathcal{X}, \, \forall y \in \mathcal{Y}, \, \forall z \in \mathcal{Z}: \quad P_{X,Z|Y}(x,z|y) = P_{X|Y}(x|y) \cdot P_{Z|Y}(z|y)
\]
\subsection*{3. Erasure Count Function}
Given a sequence $y \in \{0, 1, e\}^n$, where $e$ indicates an erasure. We denote the erasure count function by:
\begin{align*}
    \Delta(y^n) &= |\{i \in \{1, 2, \dots, n\} : y_i = e \}|, \\
    \overline{\Delta}(y^n) &= |\{i \in \{1, 2, \dots, n\} : y_i \neq e \}|,
\end{align*}
where $y_i$ is a realization of $Y$.

\subsection*{4. Information Theoretic Definitions}

The min-entropy of a discrete random variable $X $ is 
\[
H_\infty(X) = \min_x \log \left( \frac{1}{P_X(x)} \right).
\]

Its conditional version is 
\[
H_\infty(X|Y) = \min_y H_\infty(X|Y=y).
\]

The zero-entropy and its conditional version are defined as 
\[
H_0(X) = \log |\{ x \in \mathcal{X} : P_X(x) > 0 \}|,
\]
and 
\[
H_0(X|Y) = \max_y H_0(X|Y=y).
\]

The statistical distance over two probability distributions $P_X $ and $P_Y $, defined over the same domain $\mathcal{X} $, is 
\[
\| P_X - P_Y \| = \frac{1}{2} \sum_{x \in \mathcal{X}} |P_X(x) - P_Y(x)|.
\]

For $\epsilon \geq 0 $, the $\epsilon $-smooth min entropy is 
\[
H_\infty^\epsilon(X) = \max_{X' : \| P_{X'} - P_X \| \leq \epsilon} H_\infty(X').
\]

Similarly,
\[
H_\infty^\epsilon(X|Y) = \max_{X'Y' : \| P_{X'Y'} - P_{XY} \| \leq \epsilon} H_\infty(X'|Y').
\]

Let $P_{UVW}$ be a probability distribution over $\mathcal{U}\times\mathcal{V}\times\mathcal{W}$ For any $\epsilon > 0 $ and $\epsilon' > 0$ it holds that \cite{Holenstein}: 

\begin{equation}\label{eq: holenstein-1}
    H_\infty^{\epsilon+\epsilon'}(U|V,W) \geq H_\infty(U|W) + H_\infty^\epsilon(V|U,W) - H_0(V|W) - \log(\frac{1}{\epsilon'}).
\end{equation}

\parbreak Also, $H^{\epsilon}_{\infty}(U,V|W)$ can be bounded from below and above as \cite{Holenstein}: 

\begin{equation}\label{eq: holenstein-2}
    H_\infty^{\epsilon+\epsilon'}(U|V,W) + H_{0}(V|W) + \log(\frac{1}{\epsilon'})\leq H^{\epsilon}_{\infty}(U,V|W)\leq H_{\infty}(U|W) + H^{\epsilon}_{\infty}(V|U,W).
\end{equation}

Combining \eqref{eq: holenstein-1} and \eqref{eq: holenstein-2}, concludes: 
\begin{equation}\label{eq: holenstein-1 + holenstein-1}
    H^{\epsilon}_{\infty}(U,V|W)\geq H_\infty(U|W) + H_\infty^\epsilon(V|U,W). 
\end{equation}
\begin{lemma}\label{lemm: min-entropy vs smooth}
    For any random variable such $U$ and $V$, we have: 
        \[H^\epsilon_\infty(U|V)-\log (\frac{1}{\epsilon})\leq H_\infty(U|V)\leq H_\infty^\epsilon(U|V).
        \]
\end{lemma}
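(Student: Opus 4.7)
I would split the statement into its two inequalities and treat them separately, since they have very different flavours.

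The right inequality $H_\infty(U|V)\le H_\infty^\epsilon(U|V)$ is immediate from the definition of smooth min-entropy: the joint distribution $P_{UV}$ is itself in the feasible set of the maximization defining $H_\infty^\epsilon(U|V)$ (since $\|P_{UV}-P_{UV}\|=0\le\epsilon$), so the supremum over $\epsilon$-close joint distributions cannot be smaller than $H_\infty(U|V)$.

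For the left inequality $H_\infty^\epsilon(U|V)-\log(1/\epsilon)\le H_\infty(U|V)$ the plan is to argue by contraposition. Write $h^\star:=H_\infty^\epsilon(U|V)$ and pick an optimizer $(U',V')$, so that $\|P_{U'V'}-P_{UV}\|\le\epsilon$ and $P_{U'|V'}(u|v)\le 2^{-h^\star}$ for every $(u,v)$ in the support of $P_{V'}$. Suppose, for contradiction, that $H_\infty(U|V)<h^\star-\log(1/\epsilon)$. The definition $H_\infty(U|V)=\min_v H_\infty(U|V=v)$ then produces a pair $(u_0,v_0)$ with $P_{U|V}(u_0|v_0)>\epsilon\cdot 2^{-h^\star}$, equivalently
\[
P_{UV}(u_0,v_0)\;>\;\epsilon\cdot 2^{-h^\star}\,P_V(v_0).
\]
Combining this lower bound with the min-entropy estimate $P_{U'V'}(u_0,v_0)\le 2^{-h^\star}P_{V'}(v_0)$ and with the marginal consequence $|P_V(v_0)-P_{V'}(v_0)|\le 2\epsilon$ of the joint TV constraint, the aim is to show that the resulting mass imbalance at $(u_0,v_0)$ (possibly aggregated over a family of such ``heavy'' atoms) already forces $\|P_{UV}-P_{U'V'}\|>\epsilon$, contradicting the smoothing constraint.

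The main obstacle is keeping the constants tight enough to land at exactly $\log(1/\epsilon)$ rather than some weaker $\log(c/\epsilon)$; in particular the differing marginals $P_V$ and $P_{V'}$ have to be reconciled without losing a constant factor in the conditional-probability bound. A cleaner route I would take if the single-atom estimate proves too lossy is to use the coupling characterization of TV distance: any $P_{U'V'}$ within TV distance $\epsilon$ of $P_{UV}$ admits a coupling agreeing with $P_{UV}$ on an event $\mathcal{E}$ of probability at least $1-\epsilon$. Restricting to $\mathcal{E}$ and renormalizing can then be shown to inflate the per-atom conditional probability by at most a factor $1/\epsilon$, which is exactly the additive $\log(1/\epsilon)$ penalty claimed by the lemma, while bypassing the awkward bookkeeping between $P_V$ and $P_{V'}$.
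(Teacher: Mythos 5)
You've got the easy direction $H_\infty(U|V)\le H_\infty^\epsilon(U|V)$ right, and it is exactly the paper's one-line argument: the original $P_{UV}$ lies in the feasible set of the maximization defining $H_\infty^\epsilon$.

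For the left inequality the gap in your plan is structural, not just the ``constant bookkeeping'' you flag. Exhibiting one atom $(u_0,v_0)$ with $P_{U|V}(u_0|v_0)>\epsilon\cdot 2^{-h^\star}$ while $P_{U'|V'}(u_0|v_0)\le 2^{-h^\star}$ does not force $\|P_{U'V'}-P_{UV}\|>\epsilon$: the discrepancy is in a \emph{conditional}, and an $\epsilon$-ball in TV is wide enough to remove up to $\epsilon$ of \emph{joint} mass from $(u_0,v_0)$ --- indeed all of it, whenever $P_{UV}(u_0,v_0)\le\epsilon$. A concrete test: take $V$ trivial and let $U$ place mass $1/2$ on one atom and spread the rest uniformly over $2^k-1$ others, so $H_\infty(U)=1$; the uniform distribution on $2^k$ points is within TV distance $\epsilon=1/2-2^{-k}$ of this, so $H_\infty^\epsilon(U)\ge k$ while $\log(1/\epsilon)\approx 1$. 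This shows the intended contradiction never materializes, and it also undermines the coupling fallback: conditioning on an event of probability $\ge 1-\epsilon$ inflates \emph{joint} masses only by $1/(1-\epsilon)$, not $1/\epsilon$, and once the denominator $P_V(v)$ shifts under conditioning, the conditional $P_{U|V,\mathcal{E}}$ can move by an uncontrolled amount. For what it's worth, the paper's own derivation of this direction rests on the pointwise multiplicative claim $P_{U|V=v}(u)/(1-\epsilon)\le P_{U'|V'=v}(u)$ for every $\epsilon$-close $(U',V')$, which TV-closeness does not give (the right side can vanish while the left is positive), so that argument also has a hole and you should be wary of the stated left-hand bound itself.
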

\begin{proof}
    In Appendix \ref{App: proof-lemm: min-entropy vs smooth}.
\end{proof}

\begin{definition}\label{Renyi entropy}
    Given a random variable $X$ with alphabet $\mathcal{X}$ and probability distribution $p_X$, the \textit{Rényi entropy of order two} of a random variable $X$ is given by:
    \[
    H_2 (X) = \log_2\left(\frac{1}{P_c(X)}\right).
    \]
    where the \textit{collision probability} $P_c(X)$ is the probability that two independent trials of $X$ produce the same outcome. It is defined as:
    \[
    P_c(X) = \sum_{x \in \mathcal{X}} p_X(x)^2.
    \]
    For a given event $\mathcal{E}$, the conditional distribution $p_{X|\mathcal{E}}$ is employed to define the \textit{conditional collision probability} $P_c(X|\mathcal{E})$ and the \textit{conditional Rényi entropy of order 2}, $H_2(X|\mathcal{E})$.
\end{definition}

\begin{lemma}\cite[
Corollary 2.12]{Holenstein}\label{lemm: Holenstein}
    Let $P_{X^nY^n}$ be independent and identically distributed (i.i.d.) according to $P_{XY}$ over the alphabet $\mathcal{X}^n \times \mathcal{Y}^n$. For any $\epsilon > 0$, we have
\[
H_\infty^\epsilon(X^n | Y^n) \geq nH(X | Y) - 4\sqrt{n \log(1/\epsilon)}\log|\mathcal{X}|.
\]
\end{lemma}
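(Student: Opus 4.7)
The statement is a quantitative Asymptotic Equipartition Property (AEP) for the smooth conditional min-entropy, and I would prove it by combining a concentration inequality for the i.i.d. log-likelihood sum with a smoothing construction that realizes the defining optimization of $H_\infty^\epsilon$.

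First I would unfold the conditional log-likelihood: because $P_{X^nY^n}$ is a product distribution,
\[
-\log P_{X^n|Y^n}(X^n|Y^n) \;=\; \sum_{i=1}^n Z_i, \qquad Z_i := -\log P_{X|Y}(X_i|Y_i),
\]
where the $Z_i$ are i.i.d. with $\mathbb{E}[Z_i] = H(X|Y)$. A high-probability lower bound of the form $\sum_i Z_i \geq nH(X|Y) - \Delta$ with probability at least $1-\epsilon$ is exactly what one needs in order to exhibit an $\epsilon$-close sub-distribution whose conditional min-entropy is at least $nH(X|Y) - \Delta$, so the problem reduces to producing such a concentration bound with $\Delta \leq 4\sqrt{n\log(1/\epsilon)}\log|\mathcal{X}|$.

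The concentration step is the technical core. The nuisance is that $Z_i$ is unbounded in general, since $-\log P_{X|Y}(x|y)$ can be arbitrarily large when the conditional probability is small. I would handle this by truncation: replace $P_{X|Y}$ with a modified kernel whose atoms are either at least $2^{-M}$ or zero, for some threshold $M$ of order $\log|\mathcal{X}|$ with a small slack controlled by $\epsilon$. The modified product distribution lies within a statistical distance that can be absorbed into the smoothing budget, and on the modified distribution the $\tilde Z_i$ are bounded in $[0,M]$ with mean $H(X|Y)$ up to a controlled bias. Hoeffding's inequality applied to $\sum_i \tilde Z_i$ then gives a tail probability of order $\exp(-ct^2/(nM^2))$; equating this with $\epsilon$ and solving for $t$ yields the required deviation of order $\log|\mathcal{X}|\sqrt{n\log(1/\epsilon)}$, with the prefactor $4$ absorbing the Hoeffding constant, the truncation bias, and the mass redistributed by the smoothing.

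With the concentration bound in hand, define
\[
\mathcal{G} = \Big\{ (x^n,y^n) : -\log P_{X^n|Y^n}(x^n|y^n) \geq n H(X|Y) - 4\sqrt{n\log(1/\epsilon)}\log|\mathcal{X}| \Big\},
\]
so that $\Pr[(X^n,Y^n) \in \mathcal{G}] \geq 1-\epsilon$. Forming $P'_{X^nY^n}$ by restricting $P_{X^nY^n}$ to $\mathcal{G}$, one has $\|P' - P_{X^nY^n}\| \leq \epsilon$, while by the defining property of $\mathcal{G}$ every atom of the conditional $P'_{X^n|Y^n}$ has mass at most $2^{-(nH(X|Y) - 4\sqrt{n\log(1/\epsilon)}\log|\mathcal{X}|)}$, delivering the claimed min-entropy bound via the definition of $H_\infty^\epsilon(X^n|Y^n)$. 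The part I expect to require real care is the truncation bookkeeping: one has to balance simultaneously the statistical distance introduced by killing small atoms, the mean shift between $\mathbb{E}[\tilde Z_i]$ and $H(X|Y)$, and the Hoeffding slack, so that all three contributions combine into the single clean prefactor $4\log|\mathcal{X}|$; a naive Hoeffding attempt fails because the raw log-likelihood is unbounded.
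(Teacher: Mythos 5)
Your overall route matches the paper's: both write $-\log P_{X^n|Y^n}(X^n|Y^n) = \sum_i Z_i$ with $Z_i = -\log P_{X|Y}(X_i|Y_i)$, apply a Hoeffding-type concentration to the sum, and then restrict $P_{X^nY^n}$ to the good set $\mathcal{G}$ to produce an $\epsilon$-close distribution realizing the min-entropy bound. Where you and the paper differ is precisely the step you flagged as ``the technical core.'' The paper treats the concentration step by asserting that $Z_i \in [0,\log|\mathcal{X}|]$, justified by the claim $1/P_{X|Y}(x|y) \leq |\mathcal{X}|$. That inequality is simply false in general: it would require $P_{X|Y}(x|y) \geq 1/|\mathcal{X}|$ for every atom, whereas pigeonhole only guarantees that the \emph{largest} atom of $P_{X|Y}(\cdot\,|y)$ is at least $1/|\mathcal{X}|$. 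A binary alphabet with $P_{X|Y}(0|y)=0.01$ already violates it. So the paper applies Hoeffding to a random variable that is not bounded in the claimed interval, and you were right to distrust the naive Hoeffding attempt; your truncation idea is what a correct proof needs.

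That said, your proposal also stops short of a complete argument, and the gap is exactly where you anticipated it would be. If you truncate $Z_i$ at level $M$ to get $W_i = \min(Z_i, M) \in [0,M]$, the per-coordinate mean shift is $b = \mathbb{E}[(Z_i - M)^+]$, which scales roughly like $|\mathcal{X}|\,2^{-M}$. Hoeffding on $\sum W_i$ then costs a deviation of order $M\sqrt{n\log(1/\epsilon)}$, while the accumulated mean shift costs $nb$. If $M$ is chosen of order $\log|\mathcal{X}|$ plus a constant, $nb$ is linear in $n$ and swamps the target bound; if $M$ grows like $\log n$ to kill $nb$, the Hoeffding term picks up an extra $\log n$ factor relative to the stated $\log|\mathcal{X}|$. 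Getting the clean prefactor $4\log|\mathcal{X}|$ therefore requires more than ``absorbing constants''; this is likely why tighter treatments of this AEP in the literature state the prefactor as $\log(|\mathcal{X}|+c)$ for some small absolute constant $c$, or use a Bernstein/Bennett-type inequality driven by a variance bound rather than a range bound. So: your structure is right, you caught a genuine error in the paper's own proof, but the bookkeeping you deferred is not a routine constant-chase and would need to be done carefully (or the lemma statement adjusted) for the argument to close.
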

\begin{proof}
    In Appendix \ref{App: proof-lemm: Holstein}.
\end{proof}
\begin{definition}\label{Hash}
A function $h: \mathcal{R} \times \mathcal{X} \to \{0,1\}^n$ is a \textit{two-universal hash function} \cite{Carter} if, for any $x_0 \neq x_1 \in \mathcal{X}$ and for $R$ uniformly distributed over $\mathcal{R}$, it holds that
\begin{equation}\label{eq: hashs}
    \Pr(h(R, x_0) = h(R, x_1)) \leq 2^{-n}.
\end{equation}

Similarly, given two independent hash functions $h_1: \mathcal{R} \times \mathcal{X} \to \{0,1\}^n$ and $h_2: \mathcal{T} \times \mathcal{Y} \to \{0,1\}^m$, for any $x_0 \neq x_1 \in \mathcal{X}$, $y_0 \neq y_1 \in \mathcal{Y}$, and for $R, T$ uniformly distributed over $\mathcal{R}$ and $\mathcal{T}$, respectively, it holds that
\begin{equation}\label{eq: hashs-double}
    \Pr(h_1(R, x_0) = h_1(R, x_1)\cap h_2(T, y_0) = h_2(T, y_1)) \leq 2^{-(n+m)}.
\end{equation}

 An example of a two-universal class is the set of all linear mappings from $\{0,1\}^n$ to $\{0,1\}^r$. 
 \end{definition}
   
A random variable $X$ over $\mathcal{X}$ is said to be $\epsilon$-close to uniform with respect to $Z$ over $\mathcal{Z}$ if
\[
\|P_{XZ} - (P_U \times P_Z)\| \leq \epsilon,
\]
where $U$ is uniformly distributed over $\mathcal{X}$.

\begin{lemma} \cite{Winter1, Jurg} (Distributed leftover hash lemma)\label{DLHL}
    Let $\epsilon > 0$, $\epsilon' \geq 0$, and let $g_i: \mathcal{T}_i \times \mathcal{X}_i \to \{0, 1\}^{n_i}$ for $1 \leq i \leq m$ be two-universal hash functions. Assume random variables $X_i$ over $\mathcal{X}_i$, $1 \leq i \leq m$, where for any subset $\mathcal{S} \subseteq \{1, 2, \dots, m\}$, and $\mathbf{X}|_\mathcal{S} = X_{\mathcal{S}(1)}, X_{\mathcal{S}(2)}, \dots, X_{\mathcal{S}(|\mathcal{S}|)}$, we have
\[
H_\infty^{\epsilon'}(\mathbf{X}|_\mathcal{S} | Z) \geq \sum_{i \in \mathcal{S}} n_i + 2 \log(1/\epsilon),
\]
where $T_1, \dots, T_m$ are uniformly distributed over $\mathcal{T}_1 \times \dots \times \mathcal{T}_m$, and are independent of $X_1, \dots, X_m$, and $Z$. Then, it holds that the tuple $(g_1(T_1, X_1), \dots, g_m(T_m, X_m))$ is $(2^m \epsilon / 2 + 2^m \epsilon')$-close to uniform with respect to $T_1, \dots, T_m, Z$.

\parbreak We also briefly note that following from \cite[Th. 17.3.3]{Cover}, it directly follows from the previously defined $X_i$, $T_i$, $g_i(T_i, X_i)$, and $Z$ that
\[
I(g_i(T_i, X_i); T_i Z) \leq -\epsilon'' \log \frac{\epsilon''}{2^{n_i} |\mathcal{Z}| |\mathcal{T}_i|},
\]
where $\epsilon'' = 2^m \epsilon / 2 + 2^m \epsilon'$ and $I(g_i(T_i, X_i); T_i Z)$ is defined as
\[
I(g_i(T_i, X_i); T_i Z) = H(g_i(T_i, X_i)) - H(g_i(T_i, X_i) | T_i Z),
\]
which $I$ and $H$ represent the Shannon mutual information and the Shannon entropy, respectively.

\end{lemma}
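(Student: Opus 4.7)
The plan is to combine the standard smooth leftover hash lemma (the $m=1$ case) with a hybrid argument that replaces the $m$ hash outputs with truly uniform strings one at a time, using the chain-rule inequalities \eqref{eq: holenstein-1}--\eqref{eq: holenstein-1 + holenstein-1} to verify that at each step the relevant conditional smooth min-entropy is still large enough for the single-party LHL to apply.

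First I would introduce the hybrids
\[
H_k \;=\; \bigl(U_1,\ldots,U_k,\; g_{k+1}(T_{k+1},X_{k+1}),\ldots,g_m(T_m,X_m),\; T_1,\ldots,T_m,\; Z\bigr),
\]
for $k=0,1,\ldots,m$, where $U_1,\ldots,U_m$ are independent truly uniform strings of lengths $n_1,\ldots,n_m$. Then $H_0$ is the actual joint distribution whose closeness to uniform we want to bound, and $H_m$ is the target uniform one, so by the triangle inequality it suffices to bound $\|P_{H_{k-1}}-P_{H_k}\|$ for each $k$ and sum the resulting contributions.

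Next, for each single step I would apply the smooth LHL in its conditional form to party $k$, treating the tuple
\[
W_k \;=\; \bigl(T_1,\ldots,T_{k-1},T_{k+1},\ldots,T_m,\;Z,\;U_1,\ldots,U_{k-1},\;g_{k+1}(T_{k+1},X_{k+1}),\ldots,g_m(T_m,X_m)\bigr)
\]
as side information. For the single-step bound to give at most $(\epsilon/2+\epsilon')$, I need
\[
H_\infty^{\epsilon'}(X_k \mid W_k) \;\geq\; n_k + 2\log(1/\epsilon).
\]
The independent seeds $T_j$ for $j\neq k$ and the independent uniform pads $U_1,\ldots,U_{k-1}$ can be dropped from the conditioning without changing the smooth min-entropy. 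The remaining reduction is the delicate step: conditioning on the $(m-k)$ tail hash outputs costs at most $\sum_{j>k}n_j$ bits of smooth min-entropy, and combining this with the subset hypothesis applied to $\mathcal{S}=\{k,k+1,\ldots,m\}$,
\[
H_\infty^{\epsilon'}\bigl(X_k,\ldots,X_m \mid Z\bigr)\;\geq\;\sum_{j\geq k}n_j + 2\log(1/\epsilon),
\]
together with the chain rule \eqref{eq: holenstein-1 + holenstein-1}, yields the required lower bound on $H_\infty^{\epsilon'}(X_k\mid W_k)$.

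Finally, summing the $m$ single-step contributions and propagating the smoothing parameter through each nested chain-rule application gives the stated aggregate bound $(2^m\epsilon/2+2^m\epsilon')$, the exponential factor arising from the way the smoothing error doubles every time one passes through a chain-rule step in \eqref{eq: holenstein-1}. The mutual-information consequence stated at the end of the lemma follows by plugging this statistical distance into \cite[Th.~17.3.3]{Cover} and using the data-processing inequality to isolate the $i$-th output. The main obstacle I expect is purely bookkeeping: tracking how $\epsilon$ and $\epsilon'$ accumulate across the $m$ hybrid steps and across the nested applications of the smooth chain rule, so that the aggregate collapses exactly to the stated form rather than to a messier expression.
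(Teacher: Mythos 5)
The paper does not prove this lemma; it is quoted verbatim from \cite{Winter1,Jurg}, so there is no in-text argument to compare against. On its own merits, your hybrid proposal has a genuine gap at the single-step reduction. For the $k$-th hybrid you must establish
\[
H_\infty^{\epsilon'}\bigl(X_k \mid g_{k+1}(T_{k+1},X_{k+1}),\ldots,g_m(T_m,X_m),\,T,\,Z\bigr) \;\geq\; n_k + 2\log(1/\epsilon),
\]
but the paper's chain rules \eqref{eq: holenstein-1}--\eqref{eq: holenstein-1 + holenstein-1} cannot deliver this from the subset hypothesis. Peeling the tail hash outputs off the conditioning costs up to $\sum_{j>k}n_j$ bits relative to $H_\infty^{\epsilon'}(X_k\mid Z)$, and the singleton hypothesis only gives $H_\infty^{\epsilon'}(X_k\mid Z)\geq n_k+2\log(1/\epsilon)$, so you are short by exactly $\sum_{j>k}n_j$. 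Switching to $\mathcal{S}=\{k,\ldots,m\}$ does not close the hole: the hypothesis controls the \emph{joint} quantity $H_\infty^{\epsilon'}(X_k,\ldots,X_m\mid Z)$, but since $(X_k,G_{>k})$ with $G_{>k}=(g_j(T_j,X_j))_{j>k}$ is a coarse-graining of $(X_k,X_{>k},T_{>k})$, its min-entropy can only be smaller, and extracting $H_\infty^{\epsilon'}(X_k\mid X_{>k},Z)$ from the joint via \eqref{eq: holenstein-2} would require subtracting $H_0(X_{>k}\mid Z)$, which is not controlled by $\sum_{j>k}n_j$ because the $\mathcal{X}_j$ are arbitrary alphabets. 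Asserting without justification that the loss is ``at most $\sum_{j>k}n_j$ bits of smooth min-entropy'' is precisely the missing step.

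There is a secondary but telling mismatch: a hybrid over $m$ steps charged $(\epsilon/2+\epsilon')$ each sums to $m(\epsilon/2+\epsilon')$, and the chain rule \eqref{eq: holenstein-1} sends $\epsilon\mapsto\epsilon+\epsilon'$, not $\epsilon\mapsto 2\epsilon$, so the $2^m$ in the stated bound does not arise from the mechanism you invoke. In the cited source the lemma is established by a direct $\ell_2$/collision-probability computation: one bounds the squared statistical distance from uniform, expands the square, and the cross terms organize into a sum over all $2^m-1$ nonempty subsets $\mathcal{S}$, each term controlled by the (smooth min-/collision-) entropy of $\mathbf{X}|_\mathcal{S}$ given $Z$. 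That structure simultaneously explains why the hypothesis is required for \emph{every} subset and where the factor $2^m$ comes from; the hybrid reduction recovers neither feature, and the per-step min-entropy bound you would need is exactly the kind of statement the $\ell_2$ argument is designed to sidestep.
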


\parbreak Consider a Discrete Memoryless Channel (DMC) with a transition matrix 
$W=\lbrace W(y|x),x \in\mathcal{X}, y \in\mathcal{Y} \rbrace$. There are two assumptions:
\begin{enumerate}
    \item \emph{Free Resources}: Alice and Bob have unlimited computing power, independent local randomness, and access to a noiseless public communication channel for unlimited rounds.
    \item \emph{Honest-but-Curious Model}: Both parties follow the protocol honestly but may use all available information to infer what they should remain ignorant about.
\end{enumerate} 

\emph{The general two-party protocol (Figure \ref{fig: general}):}
\begin{itemize}
    \item \emph{Initial Views:} Alice and Bob start with initial knowledge or views $U'$ and $V'$, respectively.
    \item \emph{Random Experiments:} Alice generates random variable $M$, and Bob generates random variable $N$ independently of each other and $(U',V')$.
    
    \item \emph{Message Exchange:} Alice sends Bob a message $C_1$ as a function of $U'$ and $M$. Bob responds with $C_2$, a function of $V'$, $N$ and $C_1$.

    \item \emph{Alternating Messages:} In subsequent rounds, they alternately send messages $C_3, C_4,\dots, C_{2t}$, which are functions of their instantaneous views.
    \item \emph{Final Views:} At the end of the protocol, Alice's view $U$ is $(U',M,\mathbf{C})$ and Bob's view $V$ is $(V',N,\mathbf{C})$, where $\mathbf{C}=C_1,\dots, C_{2t}$.

\end{itemize}
\begin{figure*}[tb]
\includegraphics[scale=1,trim={-2.65cm 19.2cm 1cm 1.5cm},clip]{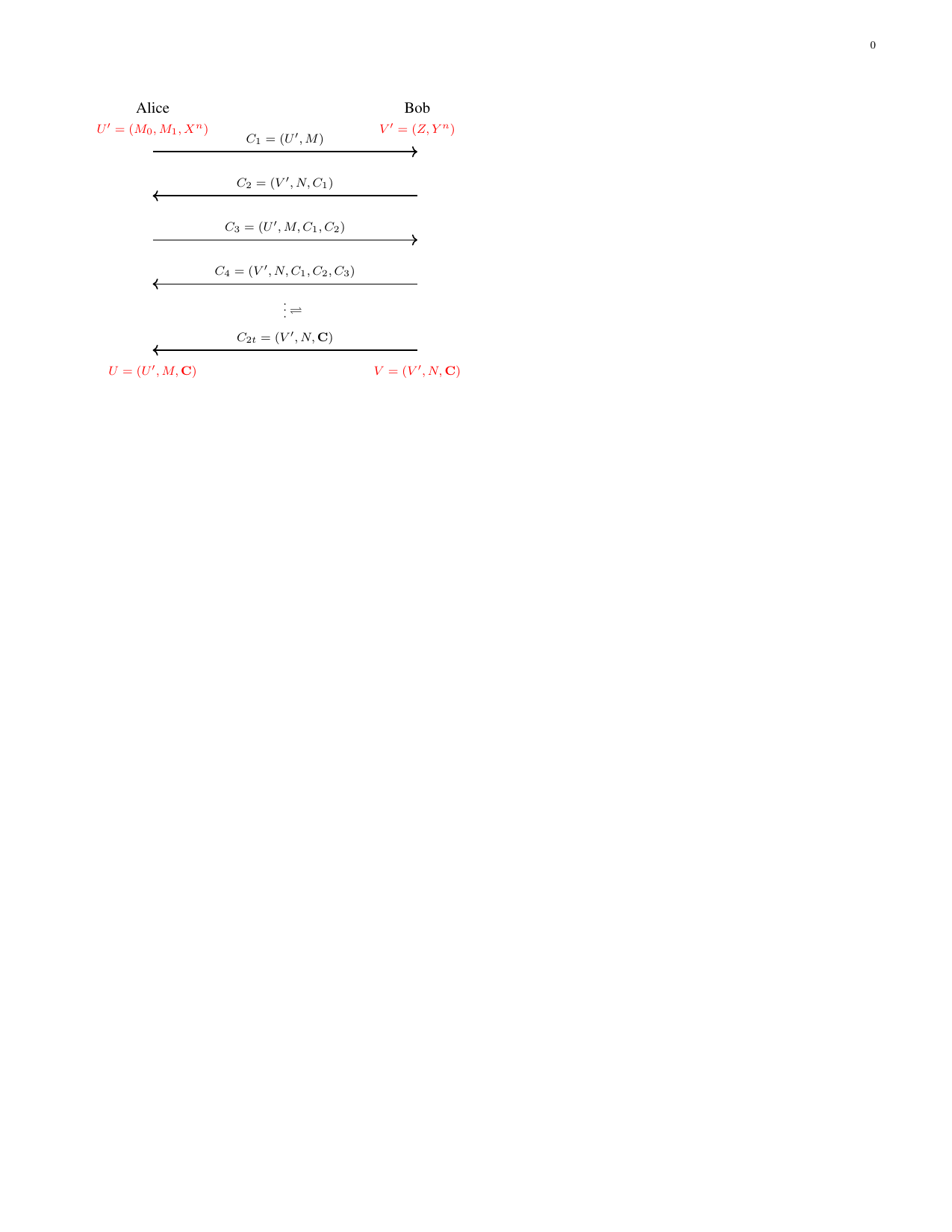} %

\caption{The general two-party protocol between Alice and Bob from the perspective of source model.}
\label{fig: general}
\end{figure*}
There are two models: The channel model and the source model.
In the source model, Alice's initial view is $U' = (M_0, M_1, X^n)$ and Bob's initial view is $V' = (Z, Y^n)$ where $(M_0, M_1)$ are binary strings and  uniformly distributed on $\lbrace0,1\rbrace^k$, and $Z\in \lbrace 0,1\rbrace$ is a binary bit. 

\parbreak In the channel model, Alice starts with her initial view $U' = (M_0,M_1)$ and Bob with his initial view $V' = Z$. In this case, Alice and Bob may perform any noisy protocol with $n$ access to the DMC with their initial views, where $M_0$, $M_1$ and $Z$ are independent, and $M_0$, $M_1$ are uniformly distributed on $\{0,1\}^k$.  

\begin{lemma}\cite[Corollary 4]{Bennet}\label{entropy hash}
Let $p_{X,Y}$ be any probability distribution, where $X \in \mathcal{X}$, $Y \in \mathcal{Y}$, and $y$ is a specific realization of $Y$. Assume that $H_2(X|Y = y) \geq c$ for some constant $c$. Let $\mathcal{K}$ be a universal class of functions mapping $X$ to $\{0,1\}^l$, and let $\kappa$ be sampled uniformly from $\mathcal{K}$. Then:
\begin{align*}
    H(\kappa(X)|\kappa, Y = y) & \geq l - \log(1 + 2^{l-c})\\
    & \geq l - \frac{2^{l-c}}{\ln 2}.
\end{align*}
\end{lemma}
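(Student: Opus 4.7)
The plan is to pass through the Rényi entropy of order two, using the universal-hash property to control the expected collision probability of $\kappa(X)$, and then converting back to Shannon entropy by the standard inequality $H \geq H_2$.

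First, I would fix a realization $y$ of $Y$ and condition on it throughout. Let $X_1, X_2$ be i.i.d.\ draws from $p_{X \mid Y=y}$. Since $H_2(X \mid Y=y) \geq c$, Definition~\ref{Renyi entropy} gives immediately
\[
P_c(X \mid Y=y) \;=\; \sum_{x} p_{X\mid Y}(x\mid y)^2 \;\leq\; 2^{-c}.
\]
For each realization $\kappa$ of the hash key, the conditional collision probability of the output equals
\[
P_c(\kappa(X)\mid \kappa,\,Y=y) \;=\; \sum_{x_1,x_2} p(x_1\mid y)\,p(x_2\mid y)\,\mathbf{1}\{\kappa(x_1)=\kappa(x_2)\}.
\]
I would then take the expectation over $\kappa$ (uniform on $\mathcal{K}$) and split the sum into the diagonal term $x_1=x_2$ and the off-diagonal part. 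By the two-universal property \eqref{eq: hashs}, the off-diagonal term is bounded by $2^{-l}$, and the diagonal term is exactly $P_c(X\mid Y=y) \leq 2^{-c}$. This yields
\[
\mathbb{E}_\kappa\bigl[P_c(\kappa(X)\mid \kappa,Y=y)\bigr] \;\leq\; 2^{-c} + (1-2^{-c})\,2^{-l} \;\leq\; 2^{-c} + 2^{-l}.
\]

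Next, I would lift this collision bound to a Shannon-entropy bound. For each fixed $\kappa$, Jensen's inequality gives $H(\kappa(X)\mid \kappa=\kappa, Y=y) \geq H_2(\kappa(X)\mid \kappa=\kappa, Y=y) = -\log P_c(\kappa(X)\mid \kappa=\kappa,Y=y)$. Averaging over $\kappa$ and applying Jensen's inequality once more to the concave function $-\log(\cdot)$,
\[
H(\kappa(X)\mid \kappa, Y=y) \;\geq\; -\log\mathbb{E}_\kappa\bigl[P_c(\kappa(X)\mid \kappa,Y=y)\bigr] \;\geq\; -\log\!\bigl(2^{-c}+2^{-l}\bigr) \;=\; l - \log\!\bigl(1+2^{l-c}\bigr),
\]
which is the first stated inequality. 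The second inequality then follows from $\ln(1+x) \leq x$ for $x\geq 0$, i.e.\ $\log(1+x) \leq x/\ln 2$, applied with $x = 2^{l-c}$.

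The main obstacle I anticipate is the Jensen step: one has to be careful that the ``conditional $H_2$'' produced by averaging $-\log P_c$ over $\kappa$ is compatible with the Shannon conditional entropy $H(\kappa(X)\mid \kappa, Y=y)$ as written in the lemma. The correct reading is that conditioning on $\kappa$ refers to averaging over its uniform distribution, which is exactly what the two successive Jensen applications handle. Once that interpretation is fixed, the rest is a direct computation with the universal-hash inequality and the elementary bound on $\log(1+x)$.
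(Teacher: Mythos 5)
The paper does not prove this lemma; it is imported verbatim as \cite[Corollary 4]{Bennet}, so there is no in-paper argument to compare against. Your reconstruction is the standard privacy-amplification derivation from that source and is essentially correct: bound $\mathbb{E}_\kappa[P_c(\kappa(X)\mid\kappa,Y=y)]$ by splitting into the diagonal term (exactly $P_c(X\mid Y=y)\leq 2^{-c}$) and the off-diagonal term (controlled by two-universality, $\leq 2^{-l}$), then pass from collision probability to Shannon entropy via $H\geq H_2=-\log P_c$ and Jensen. The chain $H(\kappa(X)\mid\kappa,Y=y)=\mathbb{E}_\kappa H(\kappa(X)\mid\kappa=\kappa,Y=y)\geq\mathbb{E}_\kappa[-\log P_c]\geq -\log\mathbb{E}_\kappa[P_c]\geq l-\log(1+2^{l-c})$ is valid, and $\log(1+x)\leq x/\ln 2$ gives the final bound.

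One small correction: you describe $-\log(\cdot)$ as concave when applying Jensen the second time, but it is convex on $(0,\infty)$; fortunately the inequality direction you actually use, $\mathbb{E}[-\log Z]\geq -\log\mathbb{E}[Z]$, is exactly the convex Jensen inequality, so the derivation is unaffected. A second stylistic point: you first say the off-diagonal term is ``bounded by $2^{-l}$'' and then write the sharper intermediate bound $2^{-c}+(1-2^{-c})2^{-l}$; to get the latter you need to keep the off-diagonal weight $1-P_c(X\mid Y=y)$ explicit before substituting $P_c(X\mid Y=y)\leq 2^{-c}$. Since only $\mathbb{E}_\kappa[P_c]\leq 2^{-c}+2^{-l}$ is ultimately needed, this looseness in exposition does not compromise the result.
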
 

\begin{lemma}\cite[Lemma 3]{Rudolf1}\label{lemma: Rudolf}
    Let $X$, $Y$, and $Z$ be random variables defined on the finite sets $\mathcal{X}$, $\mathcal{Y}$, and $\mathcal{Z}$, respectively. For any $z_1, z_2 \in \mathcal{Z}$ with $p \triangleq \Pr[Z = z_1] > 0$ and $q \triangleq \Pr[Z = z_2] > 0$, the following inequality holds:
\[
| H(X | Y, Z = z_1) - H(X | Y, Z = z_2) | \leq 1 + 3 \log |\mathcal{X}| \sqrt{\frac{(p + q) \ln 2}{2pq} I(X,Y; Z)}.
\]
\end{lemma}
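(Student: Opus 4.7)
The strategy is a two-step reduction: first bound the difference of conditional entropies by the total-variation distance $\delta = \|P_{XY|Z=z_1} - P_{XY|Z=z_2}\|$ through a continuity estimate, then bound $\delta$ by the mutual information $I(X,Y;Z)$ through Pinsker's inequality. The two estimates multiply to give the stated bound.

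For the Pinsker step, I would condition on the event $E = \{Z \in \{z_1, z_2\}\}$, introduce the mixture $P_{XY|E} = \tfrac{p}{p+q}\,P_{XY|Z=z_1} + \tfrac{q}{p+q}\,P_{XY|Z=z_2}$, and exploit the algebraic identities $\|P_{XY|Z=z_1} - P_{XY|E}\| = \tfrac{q}{p+q}\delta$ and $\|P_{XY|Z=z_2} - P_{XY|E}\| = \tfrac{p}{p+q}\delta$. Applying Pinsker's inequality in base $2$ (i.e., $\|\mu-\nu\|^2 \leq \tfrac{\ln 2}{2}\,D(\mu\|\nu)$) to each divergence and averaging with weights $\tfrac{p}{p+q}, \tfrac{q}{p+q}$ yields
\[
\frac{pq}{(p+q)^2}\,\delta^2 \;\leq\; \frac{\ln 2}{2}\, I(X,Y; B\mid E),
\]
where $B = \mathbb{1}[Z = z_1]$ is the binary indicator in the conditional space. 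Since the chain rule gives $I(X,Y;Z) \geq (p+q)\,I(X,Y;B\mid E)$, rearranging gives exactly
\[
\delta \;\leq\; \sqrt{\tfrac{(p+q)\ln 2}{2pq}\, I(X,Y;Z)},
\]
which is the quantity under the square root in the lemma.

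For the continuity step, I would expand
\[
H_P(X|Y) - H_Q(X|Y) = \sum_y [P(y) - Q(y)]\,H_Q(X|y) + \sum_y P(y)\,[H_P(X|y) - H_Q(X|y)],
\]
bound the first sum by $2\delta\log|\mathcal{X}|$ using $\sum_y|P(y)-Q(y)| \leq 2\delta$ and $H(X|y) \leq \log|\mathcal{X}|$, and apply a Fannes-type estimate to each term of the second sum. The key identity
\[
\sum_y P(y)\,\|P_{X|y} - Q_{X|y}\| \;\leq\; 2\,\|P_{XY} - Q_{XY}\| \;=\; 2\delta
\]
keeps only $\log|\mathcal{X}|$ (not $\log|\mathcal{Y}|$) in the leading coefficient, and Jensen's inequality applied to the concave binary-entropy function absorbs the residual $h(\cdot)$ contributions into the additive constant bounded by $1$. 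Together these produce a bound of the form $|H(X|Y,Z=z_1) - H(X|Y,Z=z_2)| \leq 3\,\delta\log|\mathcal{X}| + 1$.

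I expect the main obstacle to be the bookkeeping in the continuity step to land precisely on the coefficient $3$, rather than the $4$ one obtains from a naive triangle-inequality split: this requires routing the $\log|\mathcal{X}|$ weight carefully through the joint TV identity above and avoiding double-counting between the marginal-shift and conditional-shift contributions. The Pinsker step is comparatively clean once one works in the conditional space on $\{Z \in \{z_1, z_2\}\}$, where the two weights $\tfrac{p}{p+q}$ and $\tfrac{q}{p+q}$ conspire to give exactly the factor $\tfrac{1}{2pq}$ (rather than the looser $\tfrac{1}{pq}$ that a direct triangle-inequality application would yield).
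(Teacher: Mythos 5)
Your two-step factorisation (Pinsker on the mixture, then a Fannes-type continuity bound) is the right architecture, and the Pinsker half is correct as written: with $\tilde P = \tfrac{p}{p+q}P_1 + \tfrac{q}{p+q}P_2$ one indeed has $\|P_1-\tilde P\|=\tfrac{q}{p+q}\delta$, $\|P_2-\tilde P\|=\tfrac{p}{p+q}\delta$, the weighted average of the Pinsker bounds gives $\tfrac{pq}{(p+q)^2}\delta^2 \le \tfrac{\ln 2}{2}\,I(X,Y;B\mid E)$, and $I(X,Y;Z)\ge(p+q)\,I(X,Y;B\mid E)$, which produces exactly $\delta\le\sqrt{\tfrac{(p+q)\ln 2}{2pq}\,I(X,Y;Z)}$. (Note the paper itself only cites this as \cite[Lemma~3]{Rudolf1} and supplies no proof, so you are being compared against the source, not against anything in this manuscript.)

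The gap, which you have correctly anticipated, is in the continuity half. As you have set it up, the two pieces give $2\delta\log|\mathcal X|$ and $2\delta\log|\mathcal X|+1$ respectively, i.e.\ coefficient $4$, and ``routing the $\log|\mathcal X|$ weight carefully'' does not by itself reduce this. The concrete device you are missing is a centering step on the marginal-shift term. Because $\sum_y [P(y)-Q(y)]=0$, you may subtract any constant $c$ from $H_Q(X\mid Y=y)$ inside the sum; taking $c=\tfrac12\log|\mathcal X|$ gives $|H_Q(X\mid Y=y)-c|\le\tfrac12\log|\mathcal X|$, so
\[
\Bigl|\sum_y [P(y)-Q(y)]\,H_Q(X\mid Y=y)\Bigr| \;\le\; \tfrac12\log|\mathcal X|\cdot 2\|P_Y-Q_Y\| \;\le\; \delta\log|\mathcal X|,
\]
halving the first contribution. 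Combined with $\sum_y P(y)\|P_{X|y}-Q_{X|y}\|\le\delta+\|P_Y-Q_Y\|\le 2\delta$ and a per-$y$ Fannes bound of the form $|H(p)-H(q)|\le \|p-q\|\log|\mathcal X|+h(\|p-q\|)$ followed by Jensen on $h$, the second contribution is at most $2\delta\log|\mathcal X|+1$, and the total is $3\delta\log|\mathcal X|+1$ as required. Without the centering observation the argument stalls at the coefficient $4$ you already noted, so this is a genuine (if narrow) missing idea rather than mere bookkeeping.
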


\begin{lemma}\label{lemm: test unit}
    Consider a DM-MAC with two senders and one receiver defined by transition matrix $W(y|x_1,x_2)$. For pair words $(x_1^n, x_2^n)$ and $(\tilde{x}_1^n, \tilde{x}_2^n)$ with Hamming distances $d_H(x_1^n,\tilde{x}_1^n)\geq\delta n$, $d_H(x_2^n,\tilde{x}_2^n)\geq\delta n$, such that
    \[
    \forall x_1\in\mathcal{X}_1, x_2\in\mathcal{X}_2, \forall \mathcal{P} \,\text{p.d. with}\, \mathcal{P}(x_1,x_2) = 0,  \norm{W_{x_1,x_2}-\sum_{\tilde{x}_1,\tilde{x}_2}p(\tilde{x}_1,\tilde{x}_2)W_{\tilde{x}_1,\tilde{x}_2}}\geq\eta,
    \]
    one has, with $\epsilon = \frac{\delta^4 \eta^2}{2 \lvert \mathcal{X}_1 \rvert^ 2 \lvert \mathcal{X}_2 \rvert^ 2 \lvert \mathcal{Z} \rvert}$

    \[
    W^n_{\tilde{x}_1,\tilde{x}_2}(\mathcal{T}^n_{W,\epsilon}(x_1^n,x_2^n))\leq 2 \exp (-\frac{n\epsilon^4}{2}),
    \]
    where $\mathcal{T}^n_{W,\epsilon}(x_1^n,x_2^n)$ is the set of joint typical sequences and $W_{x_1^n x_2^n} = W_{x_{1,1}x_{2,1}} \otimes W_{x_{1,2}x_{2,2}} \otimes \ldots \otimes W_{x_{1,n}x_{2,n}}$.
\end{lemma}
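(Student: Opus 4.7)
This is a MAC analogue of the standard ``counterfeiting'' lemma in the Nascimento--Winter / Ahlswede--Csisz\'ar spirit: feeding the alternative input $(\tilde{x}_1^n, \tilde{x}_2^n)$ into $W$ cannot mimic the joint-typicality ball of $(x_1^n, x_2^n)$ except with exponentially small probability. The plan is to isolate a single input pair $(a^*,b^*) \in \mathcal{X}_1 \times \mathcal{X}_2$ at which the non-redundancy hypothesis bites, and then apply a letter-wise Hoeffding bound to rule out matching the target output statistics there.

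\textbf{Step 1 (pigeonhole to invoke non-redundancy).} Let $D = \{i : (x_{1,i}, x_{2,i}) \neq (\tilde{x}_{1,i}, \tilde{x}_{2,i})\}$; since each coordinate-wise Hamming distance is at least $\delta n$, we have $|D| \geq \delta n$. Partitioning $D$ by the value of $(x_{1,i}, x_{2,i})$ across the $|\mathcal{X}_1||\mathcal{X}_2|$ possible pairs, pigeonhole gives an $(a^*,b^*)$ with $D^* := \{i \in D : (x_{1,i}, x_{2,i}) = (a^*,b^*)\}$ satisfying $|D^*| \geq \delta n / (|\mathcal{X}_1||\mathcal{X}_2|)$. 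On $D^*$ the alternative inputs $(\tilde{x}_{1,i}, \tilde{x}_{2,i})$ never equal $(a^*,b^*)$, so their empirical distribution $\mathcal{P}^*$ satisfies $\mathcal{P}^*(a^*,b^*) = 0$, and the hypothesis then delivers $\| W_{a^*,b^*} - \sum_{\tilde{x}_1,\tilde{x}_2} \mathcal{P}^*(\tilde{x}_1,\tilde{x}_2) W_{\tilde{x}_1,\tilde{x}_2} \| \geq \eta$.

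\textbf{Step 2 (Hoeffding concentration and parameter balance).} Under $W^n_{\tilde{x}_1^n, \tilde{x}_2^n}$ the outputs $\{y_i\}_{i \in D^*}$ are independent with $y_i \sim W_{\tilde{x}_{1,i}, \tilde{x}_{2,i}}$, so Hoeffding's inequality applied letter-by-letter over $\mathcal{Z}$ (plus a union bound over $|\mathcal{Z}|$ output symbols) shows that the empirical $y$-distribution on $D^*$ is within $\eta/2$ of its expectation $\sum_{\tilde{x}_1,\tilde{x}_2} \mathcal{P}^*(\tilde{x}_1,\tilde{x}_2) W_{\tilde{x}_1,\tilde{x}_2}$, and hence at total-variation distance at least $\eta/2$ from $W_{a^*,b^*}$, outside an event whose probability is controlled by the target $2\exp(-n\epsilon^4/2)$. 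On the other hand, any $y^n \in \mathcal{T}^n_{W,\epsilon}(x_1^n,x_2^n)$ must, by the definition of joint typicality, have its empirical conditional $y$-distribution given $(x_1,x_2)=(a^*,b^*)$ within $O(\epsilon|\mathcal{Z}|)$ of $W_{a^*,b^*}$; since only the fraction $|D^*|/|I(a^*,b^*)| \geq \delta/(|\mathcal{X}_1||\mathcal{X}_2|)$ of $I(a^*,b^*)$ lies in $D^*$, this global $\epsilon$-constraint translates into at most an $O(\epsilon |\mathcal{X}_1||\mathcal{X}_2||\mathcal{Z}|/\delta)$ constraint on the $D^*$-block. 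The choice $\epsilon = \delta^4 \eta^2 / (2|\mathcal{X}_1|^2|\mathcal{X}_2|^2|\mathcal{Z}|)$ is engineered precisely so that this induced constraint is strictly smaller than $\eta/2$, contradicting Step~1 and forcing the claimed exponential bound.

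\textbf{Anticipated difficulty.} The conceptual part---pigeonhole plus non-redundancy plus Hoeffding---is fairly clean; the real obstacle is the parameter bookkeeping that yields the precise exponent $n\epsilon^4/2$ and the explicit $\epsilon$. Three ingredients must dovetail: (i) the pigeonhole fraction $\delta/(|\mathcal{X}_1||\mathcal{X}_2|)$, (ii) the Hoeffding deviation, which contributes a squared factor and thus accounts for the $\eta^2$ in the numerator, and (iii) the dilution ratio from $D^* \subseteq I(a^*,b^*)$, which introduces a second squaring and explains the $\delta^4$ in $\epsilon$. One must also be careful that the joint-typicality window $\epsilon$ is defined over all of $(x_1^n,x_2^n,y^n)$, so the per-block constraint inherits an extra $|\mathcal{X}_1||\mathcal{X}_2|$ factor compared to the single-sender version of the lemma.
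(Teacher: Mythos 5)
Your overall strategy---isolate a high-density block where non-redundancy bites via pigeonhole, then apply a concentration inequality---is the same general line the paper takes, and your Step~1 is actually cleaner than the paper's: pigeonholing on the joint pair $(x_{1,i},x_{2,i})$ over $D=\{i:(x_{1,i},x_{2,i})\neq(\tilde{x}_{1,i},\tilde{x}_{2,i})\}$ avoids the awkwardness of intersecting separately-pigeonholed index sets $\mathcal{I}_{x_1}\cap\mathcal{I}_{x_2}$, and you correctly verify that the empirical distribution $\mathcal{P}^*$ on $D^*$ has $\mathcal{P}^*(a^*,b^*)=0$, which is exactly what triggers the $\eta$-separation hypothesis.

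However, Step~2 has a genuine gap in the ``translation'' from the global typicality constraint to a constraint on the $D^*$-block. Joint typicality of $y^n$ with $(x_1^n,x_2^n)$ constrains the empirical output law on the full block $I(a^*,b^*)=\{i:(x_{1,i},x_{2,i})=(a^*,b^*)\}$, not on $D^*$; and since $\hat{P}_{I(a^*,b^*)}=\alpha\hat{P}_{D^*}+(1-\alpha)\hat{P}_{I(a^*,b^*)\setminus D^*}$ with $\alpha=|D^*|/|I(a^*,b^*)|$, the mixture can be close to $W_{a^*,b^*}$ even when $\hat{P}_{D^*}$ is far from it, provided the complement deviates in the opposite direction. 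So the typicality constraint alone implies no bound on $\hat{P}_{D^*}$. To close the gap you need to exploit that, under $W^n_{\tilde{x}_1^n,\tilde{x}_2^n}$, the complement positions have $(\tilde{x}_{1,i},\tilde{x}_{2,i})=(a^*,b^*)$ and hence $y_i\sim W_{a^*,b^*}$, so their empirical concentrates near $W_{a^*,b^*}$. The cleanest fix is to apply Hoeffding once to the \emph{entire} block $I(a^*,b^*)$: its expected empirical is $\alpha\bar{W}+(1-\alpha)W_{a^*,b^*}$, which is at distance $\geq\alpha\eta\geq\delta\eta/(|\mathcal{X}_1||\mathcal{X}_2|)$ from $W_{a^*,b^*}$, so concentration around this shifted mean directly contradicts membership in $\mathcal{T}^n_{W,\epsilon}(x_1^n,x_2^n)$ once $\epsilon$ is set below half that gap. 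This single Hoeffding on $I(a^*,b^*)$ (matching the paper's use of the Chernoff bound from \cite[Lemma 4]{Winter1} on the count $\pi_{x_1x_2y}$) replaces your ``translation'' step; with it, the rest of your bookkeeping sketch goes through and in fact yields an exponent considerably better than the lemma's stated $n\epsilon^4/2$, which is a loose bound coming from the specific Chernoff variant the paper cites.
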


\begin{proof}
    In Appendix \ref{App: proof-lemm: test unit}.
\end{proof}
\begin{definition}(Reduction)
    Throughout this paper, we consider two types of reduction: (i) \emph{Cryptographic reduction:} A cryptographic reduction can be seen as a black-box that reduces one transfer mechanism $Q$ to $R$. If we reduce $Q$ to any noisy transfer, we assume the existence of a black-box that takes a bit from Alice, complements it with a certain probability, and sends the result to Bob. We also use some computational cryptographic reductions in the paper so that the cheating capability of a malicious party can be bounded within a certain range, mostly known as \emph{slightly unfair behavior}. (ii) \emph{Structural reduction:} A structural reduction means that we bound a general interaction between all parties in a communication system or a network to a weaker to limited task.  
\end{definition}
\section{Related Works and Known Results}\label{Sec: related}
\subsection{Limits of secure two-party computations}
\subsubsection{Yao's Millionaires' Problem}
One of the foundational problems that introduced the concept of secure two-party computation is Yao’s Millionaires’ Problem. In this thought experiment, proposed by Andrew Yao in 1982, two millionaires wish to determine who is richer without revealing their actual wealth to each other. Formally, each party holds a private input—Alice has $x$ and Bob has $y$—and they aim to compute the function $f(x, y) = (x > y)$ without disclosing any other information about $x$ or $y$. To solve this problem, Yao introduced the garbled circuit technique, where one party constructs a Boolean circuit representing the function and encrypts (or “garbles”) its components. The other party evaluates the garbled circuit using encrypted inputs obtained via oblivious transfer, ultimately learning only the output. This protocol demonstrated the feasibility of general secure computation between two parties and laid the groundwork for modern cryptographic protocols in this domain \cite{Yao}.

\parbreak
\subsubsection{Blum's Fair Coin Tossing}

Blum’s seminal fair coin tossing (FCT) protocol \cite{Blum} represents one of the earliest cryptographic solutions for enabling two distrustful parties to agree on a random bit in a fair manner. Consider a scenario where Alice and Bob must decide who gets to use their shared home office during a scheduling conflict. Since neither party trusts the other to flip a coin honestly, they employ a cryptographic mechanism to simulate a fair coin toss. Each party independently selects a random value and commits to it using a secure hash function, effectively locking in their choice without revealing it. Once both commitments are exchanged (e.g., over a public channel), they disclose their original values. The XOR or sum of the two values determines the winner: for example, if the result is even, Alice wins; if odd, Bob does. Crucially, the commitment phase prevents either party from altering their input based on the other’s choice, thereby ensuring fairness. If one party refuses to open their commitment or submits an invalid reveal, they automatically lose the toss, which discourages dishonest behavior. This mechanism exemplifies rational fairness in distributed systems, where the optimal strategy for both participants is to follow the protocol faithfully.

\parbreak
Although this game-theoretic interpretation of fairness aligns naturally with rational behavior, it has received comparatively little attention in the mainstream literature on secure MPC \cite{Yao, Yao86, Oded}. Instead, the field has gravitated toward a more rigorous and adversarially robust notion known as \emph{unbiasability}. Under this definition, the protocol must guarantee that no coalition of malicious participants can skew the outcome of the computation (in particular, a coin toss) regardless of their strategy. Blum’s original coin-tossing scheme does not meet this stronger requirement; while it prevents a player from biasing the outcome in their favor, it does not fully eliminate the possibility of adversarial influence. The concept of unbiasability has been extensively investigated in cryptographic research, where it is well-established that achieving it requires an honest majority among participants \cite{Oded, Rabin2}. Notably, Cleve’s impossibility result \cite{Clev} proves that in the presence of a dishonest minority comprising half or more of the parties, unbiasable coin tossing becomes fundamentally unachievable.

\parbreak
\subsubsection{Cleve’s Impossibility Result for Fair Coin Tossing}

Consider the task of two parties, Alice and Bob, jointly generating a uniformly random bit $b \in \{0,1\}$ without relying on a trusted third party. The desired properties of such a protocol are: (i) if both participants behave honestly, they agree on the same random bit $b$; and (ii) if one participant is dishonest, they cannot significantly bias the outcome in their favor. The challenge lies in designing a protocol that upholds these guarantees under adversarial conditions. Cleve’s seminal impossibility result demonstrates that in the two-party setting, achieving strong fairness is impossible. The core intuition is based on the observation that in any interactive protocol, one party can abort at an advantageous point in the execution. For example, a malicious Bob may monitor the progress of the protocol, and at each step, estimate the conditional probability distribution of the final output. If he determines that aborting the protocol at a particular round yields a more favorable distribution—e.g., increasing the likelihood that the output is a certain bit—he can halt execution strategically. This potential for selective aborting allows a dishonest party to introduce bias, even if only indirectly. Cleve formalized this insight by proving that in any two-party protocol for coin tossing, one party can skew the distribution of the final bit beyond what would be possible in an ideal, unbiased setting. Thus, perfect fairness, where no party can gain an advantage through deviation is fundamentally unachievable when one participant may behave maliciously.

\parbreak
\emph{Cleve's FCT protocol:} Bob begins the protocol by generating $r$ pairs of public and private keys: $(K_1, T_1), (K_2, T_2), \ldots,$\\$(K_r, T_r)$ where $r$ denotes the number of rounds and serves as a security parameter. Each $K_i$ is a public key associated with round $i$, and $T_i$ is the corresponding private (trapdoor) key. A trapdoor function $F$ is employed (a one-way function that is easy to evaluate but computationally difficult to invert without knowledge of the trapdoor). Additionally, Bob selects $r$ random bits $x_1, x_2, \ldots, x_r$, and sends the following to Alice:  
(i) the sequence of public keys $K_1, K_2, \ldots, K_r$, and  
(ii) the encrypted random bits $F_{K_1}(x_1), F_{K_2}(x_2), \ldots, F_{K_r}(x_r)$, where $F_{K_i}(x_i)$ denotes the application of $F$ using public key $K_i$ to encrypt bit $x_i$. For each round $i = 1, 2, \ldots, r$, the protocol proceeds as follows:  
Alice selects a random bit $y_i \in \{0,1\}$ and sends it to Bob. In response, Bob reveals the corresponding private key $T_i$. Alice then verifies that $T_i$ is indeed the correct trapdoor for $K_i$. If the verification fails, Alice replaces the unrevealed bits $x_i, x_{i+1}, \ldots, x_r$ with random values and proceeds with the rest of the protocol. Once all $r$ rounds are completed, both parties compute the XORs $x_1 \oplus y_1, x_2 \oplus y_2, \ldots, x_r \oplus y_r$, and determine the majority value among them. This majority bit serves as the final output of the protocol—a jointly generated random bit agreed upon by both parties. 

\parbreak As Cleve’s analysis shows, a malicious Alice has only limited power to influence the outcome, 
since she does not have access to Bob’s random bits $x_i$ during the execution of the protocol. 
Because of the one-wayness of the trapdoor function~$F$, and her inability to invert 
$F_{K_i}(x_i)$ without the corresponding trapdoor~$T_i$, she cannot predict or bias the XOR 
outputs before the trapdoors are revealed. 

In contrast, a malicious Bob has greater influence: he may abort the protocol prematurely by 
withholding the private key~$T_i$ at some round~$i$. In such a case, Alice replaces the unknown 
values $x_i, x_{i+1}, \ldots, x_r$ with random bits, thereby introducing uncertainty into the 
final majority computation. Nevertheless, Cleve shows that even in the worst case, such 
selective aborting can alter the majority outcome by only a small amount; specifically, the 
resulting bias is upper bounded by $O\!\left(\frac{1}{\sqrt{r}}\right)$. He also establishes a 
lower bound of $\frac{1}{2r}$, demonstrating that a nonzero bias is unavoidable in any 
finite-round protocol. 

Thus, perfect fairness—understood as zero bias—is unattainable for two-party coin-tossing 
protocols with a finite number of rounds. As the number of rounds $r$ increases, the bias 
decreases but never completely disappears.

\subsubsection{From OT to FCT}

To illustrate the connection between OT and FCT, we present a simple protocol that uses OT to generate a shared random bit $b \in \{0,1\}$ in a manner that limits the ability of either party to bias the outcome, even in the presence of malicious behavior. The protocol proceeds as follows: Alice begins by choosing two independent random bits $m_0, m_1 \in \{0,1\}$. Bob selects a random bit $z \in \{0,1\}$, which serves as his choice in the OT protocol. Through the execution of the OT, Bob learns only the bit $m_z$, while gaining no information about $m_{1-z}$. Simultaneously, Alice remains unaware of Bob’s selection bit $z$. After the OT phase, both parties locally compute values that contribute to the shared random bit. Alice calculates $b_A = m_0 \oplus m_1$, and Bob computes $b_B = m_z \oplus z$. The final output is defined as $b = b_A \oplus b_B$, which both parties can agree upon after exchanging their local values. Specifically, they verify the consistency of their computations by checking whether $b_A \oplus b_B = 0$. If the check fails, the protocol is aborted. This construction ensures that neither Alice nor Bob can significantly influence the outcome. Alice cannot bias the bit $b$ since she has no knowledge of Bob’s choice $z$, while Bob cannot bias $b$ because he learns only one of the two bits selected by Alice, and remains ignorant of the other. Nevertheless, the protocol does not achieve perfect fairness. As established by Cleve’s impossibility result, any finite-round protocol can be susceptible to slight bias: a malicious party may abort at a strategic point or manipulate their input to influence the final result. To mitigate this, the protocol can be repeated multiple times, with the final bit determined by the majority outcome. While this repetition reduces the overall bias, it cannot eliminate it entirely.

\subsection{OT from the perspective of information theory}
\subsubsection{In praise of noise}

As widely recognized in information theory, \emph{noise} refers to any random or unpredictable interference that disrupts signal transmission. This interference may take the form of physical disturbances (such as static on a radio) or conceptual ambiguities (such as misinterpretation in language). Fundamentally, noise embodies uncertainty or entropy—the unpredictable component of a system that contributes to its complexity. It represents a manifestation of disorder or randomness within communication systems, highlighting the inherent imperfections of real-world transmission processes.

\parbreak Claude Shannon’s Noisy Channel Coding Theorem~\cite{Shannon1} demonstrates that reliable communication is achievable even in the presence of noise, provided the information transmission rate remains below a critical limit known as \emph{channel capacity}. This foundational result reveals that noise, while disruptive, is not entirely detrimental; rather, it can be mitigated through appropriate encoding schemes and the strategic use of \emph{redundancy}. Shannon’s insight challenges the notion that randomness and disorder are purely negative phenomena, illustrating that they can be harnessed within structured frameworks to preserve the integrity of information.

\parbreak Within the framework of Shannon’s theory, noise is intimately linked to \emph{entropy}, which quantifies the uncertainty or unpredictability in a system. A system with high entropy exhibits greater disorder and reduced predictability. Noise elevates entropy by introducing ambiguity into the transmitted message, thereby increasing the uncertainty faced by the receiver.

\parbreak In a broader conceptual sense, entropy serves as a metaphor for disorder or chaos in the universe, with far-reaching implications for how we understand complexity and the emergence of order. Information theory’s treatment of noise thus intersects with deeper philosophical considerations about the relationship between chaos and structure. Michel Serres, in particular, explores this theme through the metaphor of the \emph{parasite} \cite{Serres}, suggesting that noise is not merely disruptive but can serve as a generative force. According to Serres, noise introduces feedback and transformation within communicative systems, acting as a catalyst for innovation and the emergence of new meanings. Importantly, one of the most profound insights drawn from noise is its role in enabling the possibility of \emph{secure communication}. In the sense of OT, now we know why OT can not be obtained from scratch and the existence of noisy resources is a crucial condition for OT, as well as other cryptographic primitives. When no noisy resource is available, the system lacks the necessary randomness to obfuscate the client's actions from the servers. This absence of noise leads to a situation of complete determinism, where all parties can eventually deduce the hidden information, violating the principles of OT.

\subsubsection{OT capacity}
As we stated before, a few research works consider the problem of OT from the information theory perspective. Nascimento and Winter \cite{Winter1, Winter3} simplified the problem of a general noisy correlation (a point-to-point channel) by reducing it to a Slightly Unfair Noisy Symmetric Basic Correlation (SU-SBC). They demonstrated that any non-perfect noisy point-to-point channel or correlation can be transformed into a Slightly Unfair Noisy Channel/Correlation (SUNC/SUCO), and that any SUNC/SUCO can be used to implement a specific SU-SBC. Ultimately, they showed that in this reduced framework, it is possible to achieve $\binom{m}{1}\text{-OT}^k$ ($1$-out-of-$m$ OT with strings' length equal $k$) at a positive rate, assuming that the sender behaves in an honest-but-curious manner. These papers are significant for several reasons:

\begin{enumerate}
    \item They introduce the concept of the oblivious transfer capacity of a DMC, defined as the supremum of all achievable rates $R$ such that $\frac{k}{n} \geq R - \gamma$, where $\gamma > 0$ and $n$ is the number of channel uses. A positive number $R$ is an achievable OT rate for a given DMC if for $n \to \infty$ there exist $(n, k)$ protocols with $\frac{k}{n}\to R$ such that protocols are correct and secure. 
    \item They also address a malicious model where a malicious player can deviate arbitrarily from the channel statistics in up to $\delta n$ instances. In such cases, the deviating player will be detected by the other party with a certain probability.
\end{enumerate}

Ahlswede and Csisz\'ar analyzed the problem under the honest-but-curious model \cite{Rudolf1}. They derived a general upper bound for the OT capacity of a point-to-point DMC, which aligns with the lower bound established by Nascimento and Winter \cite{Winter1}. Furthermore, when reduced to a specific erasure channel, they demonstrated a lower bound on the OT capacity. Thus, it remains uncertain what the exact OT capacity of a noisy DMC is in general. In practice, for general channels, a potential way is to first convert the channel into a Generalized Erasure Channel (GEC) via alphabet extension and erasure emulation, followed by the application of a general construction for GEC. As a starting point, we begin with a protocol by Ahlswede and Csisz\'ar originated from the general two-party protocol of Section \ref{Sec: Per}:

\parbreak
\parbreak
\textit{Two-Party OT Protocol \cite{Rudolf1}:} Consider the following two-party secure computation in the sense of OT: Alice has two strings $M_0$ and $M_1$ and aims to send them over the noisy point-to-point channel $\mathcal{W}: \mathcal{X}\rightarrow \mathcal{Y}$ to Bob. Bob has to choose one of them by inputting a bit of $Z\in\{0,1\}$ to the channel. Alice should be unaware of the unselected string, while Bob has only one string at the end of the protocol $(M_{Z})$. Suppose the main channel is an erasure channel assisted by a noiseless channel with unlimited capacity. The OT capacity in this setup is given by $\min(p, 1 - p)$, where $p$ is the erasure probability \cite{Rudolf1, Imai}. Let $r < \min(p, 1 - p)$. The protocol by Ahlswede and Csiszár \cite{Rudolf1}, based on a technique originally introduced for a BSC in \cite[Sec. 6.4]{Crepeau2}, proceeds as follows: Alice starts by transmitting a sequence $ \mathbf{X} = X^n \sim \text{Bernoulli}(\frac12)$ of i.i.d. bits over the channel. Bob observes the channel output $\mathbf{Y}$. Let $E$ denote the set of indices at which $\mathbf{Y}$ is erased, and let $\overline{E}$ represent the set of indices where $\mathbf{Y}$ is not erased. If $|E| < nr$ or $|\overline{E}| < nr$, Bob aborts the protocol, as there are not enough erased or unerased bits to complete the protocol. From $\overline{E}$, Bob randomly selects a subset $S_Z$ of cardinality $nr$. From $E$, Bob randomly picks a subset $S_{\overline{Z}}$, also of size $nr$. Bob then shares the sets $S_0$ and $S_1$ with Alice via the public channel, where $S_0$ and $S_1$ are either $S_Z$ and $S_{\overline{Z}}$, respectively or vice versa. Alice cannot determine which of $S_0$ and $S_1$ corresponds to $E$ (erased positions) and which to $\overline{E}$ (non-erased positions) due to the independent nature of channel erasures. Using $S_0$ and $S_1$, Alice computes keys $\mathbf{X}|_{S_0}$ and $\mathbf{X}|_{S_1}$ and employs these keys to encrypt her strings, which she sends to Bob over the public channel as $M_0 \oplus \mathbf{X}|_{S_0}$ and $M_1 \oplus \mathbf{X}|_{S_1}$. Bob, who only knows the sequence corresponding to $\overline{E}$, can decrypt only one of these encrypted strings, depending on whether $S_Z = S_0$ or $S_Z = S_1$. This enables Bob to retrieve one of the two keys, $M_Z$, while he learns nothing about the other key, $M_{\overline{Z}}$. If $X$ is not uniformly distributed over $\{0,1\}$, the strings $\mathbf{X}|_{S_j}, j\in\{0,1\}$ are not directly suitable as encryption keys. They need to be transformed into binary strings of length $k < nr$ with a distribution approximately uniform over $\{0,1\}^k$. It is well-known that for any $\delta > 0$, when $n$ is large, there exists a mapping $\kappa : \{0,1\}^{nr} \rightarrow \{0,1\}^k$ with $k = n(H(X) - \delta)$ such that $k - H(\kappa(X^n))$ is exponentially small. 

\parbreak In \cite{chou}, the author extends the above protocol to pairwise oblivious transfer over a noiseless binary adder channel involving two senders and one receiver, assuming they are non-colluding and honest-but-curious. Each sender has two strings, and Bob has to choose one string from each sender while the unselected strings are hidden from his view. In this system, the output is defined as the sum of the inputs, $Y=X_1+X_2$, commonly referred to as the Binary Erasure Multiple Access Channel (BE-MAC). This channel uniquely determines the inputs except when they differ, in which case one can not identify the inputs with certainty, effectively resulting in an erasure. Specifically, erasures occur in two out of four possible input scenarios. The OT capacity of this channel is shown to be $R_1+R_2\leq \max_{P_{X_1}P_{X_2}} H(X_1,X_2|Y)=\frac12$.

\section{System Model}\label{Sec: model}

We assume that the availability of noise is provided in two main forms. Also, we consider the main OT channel with two senders and one receiver:

\begin{enumerate}
    \item \textit{Discrete Memoryless MAC:} A two-user MAC $W : \mathcal{X}_1 \times \mathcal{X}_2 \to \mathcal{Y}: (\mathcal{X}_1\times \mathcal{X}_2, p(y|x_1,x_2), \mathcal{Y}) $, connecting three parties, Alice-1, Alice-2 and Bob, which can be used $n $ times. For an input sequence $x_i^n = x_{i, 1} x_{i,2} \ldots x_{i,n} $, the output distribution over $\mathcal{Y}^n $ is given by:
   \[
   W_{x_1^n x_2^n}^{n} = W_{x_{1,1}x_{2,1}} \otimes W_{x_{1,2}x_{2,2}} \otimes \ldots \otimes W_{x_{1,n}x_{2,n}}.
   \]

    \item \textit{i.i.d. Realizations:} A tuple of random variables $(X_1, X_2, Y)$, where Alice-$i$ sends $X_i$ and Bob receives $Y $. The distribution of these variables is given by $P_{X_1X_2Y} $, defined over the finite sets $\mathcal{X}_i $ and $\mathcal{Y} $.
\end{enumerate}

\parbreak In both cases, the alphabets $\mathcal{X}_1, \mathcal{X}_2  $, and $\mathcal{Y} $ are finite.

\parbreak A key concept when analyzing noisy channels is the idea of \textit{redundant symbols}~\cite{Winter-BC}. We have the following definition for DM-MACs, presented in \cite{Winter1} for the point-to-point channel.

\begin{definition}\label{redundant: MAC-channel}
    A \textit{two-sender DM-MAC} $W(y|x_1, x_2) $, characterized by its conditional probability distribution $W(y|x_1, x_2) $ of the output $y $ given inputs $x_1 $ from Alice-1 and $x_2 $ from Alice-2, is said to be \textit{nonredundant} if none of its output distributions $W_{x_1, x_2}(y) $ (induced by fixed inputs $(x_1, x_2) $) can be expressed as a convex combination of the other output distributions. Formally, this means:
\[
\forall \,i\in\mathcal{T}\setminus\{(x_1,x_2)\}, \forall P(x_1, x_2) \text{ such that } P\{i\in\mathcal{T}\setminus\{(x_1,x_2)\}\} = 0, \quad W_{i\in\mathcal{T}\setminus\{(x_1,x_2)\}} \neq \sum_{x_1, x_2} P(x_1, x_2) W_{x_1, x_2},
\]
for any possible distinct input pairs $\mathcal{T}=\{(x_1, x_2),(x_1', x_2),(x_1, x_2'), (x_1', x_2')\} \in \mathcal{X}_1 \times \mathcal{X}_2$.
\begin{itemize}
    \item \textit{Geometric Interpretation:}  
    In geometric terms, each output distribution $W_{x_1, x_2} $ is a distinct extremal point of the polytope $\mathcal{W} = \text{conv}\{ W_{x_1, x_2} : (x_1, x_2) \in \mathcal{X}_1 \times \mathcal{X}_2 \} $, where $\mathcal{X}_1 $ and $\mathcal{X}_2 $ are the input alphabets of Senders 1 and 2, respectively. The polytope $\mathcal{W} $ represents the convex hull of all output distributions over the probability simplex on the output alphabet $\mathcal{Y} $.
    \item \textit{Constructing a Nonredundant MAC:}  
    To construct a nonredundant version of the MAC, $\overline{W}(y|x_1, x_2) $, we can remove all input pairs $(x_1, x_2) $ for which the output distribution $W_{x_1, x_2} $ is not extremal. This results in a reduced set of input pairs for which $W_{x_1, x_2} $ forms the set of extremal points of $\mathcal{W} $. The original MAC can still be simulated using the reduced MAC by reconstructing the removed distributions $W_{x_1, x_2} $ as convex combinations of the extremal distributions from $\overline{W} $.
    
    \parbreak This process ensures that the MAC retains its original operational capacity while simplifying its representation by eliminating redundancy in its input space.
    \end{itemize}
\end{definition}
A more intuitive definition based on the correlations is presented below. 
\begin{definition}\label{redundant: MAC-correlation}
      
    Consider a two-user DM-MAC characterized by random variables $X_1 $, $X_2 $ (inputs from the two senders) and $Y $ (output), with joint distribution/correlation $P(X_1, X_2, Y) $. The correlation is said to be \textit{nonredundant} if:
\begin{itemize}
    \item For any possible distinct input pairs $\mathcal{T}=\{(x_1, x_2),(x_1', x_2),(x_1, x_2'), (x_1', x_2')\} \in \mathcal{X}_1 \times \mathcal{X}_2$:
    \[\Pr\big\{Y|(X_1,X_2) = i\in\mathcal{T}\big\}\neq\Pr\big\{Y|(X_1,X_2)=j\in\mathcal{T}\setminus\{i\}\big\}.
    \]
    \item Symmetrically, the above condition also applies to redundancy in $X_1 $ (for fixed $X_2 $) or $X_2 $ (for fixed $X_1 $), similarly to $Y $.
\end{itemize}
   
    \textit{Resolving Redundancy:}  
    If there is redundancy, the MAC can be made nonredundant by collapsing indistinguishable input pairs $(x_1, x_2) $ that fail the above inequality into a single equivalent pair. Similarly, redundant output symbols $y_1, y_2 $ can be merged into one.

    \parbreak
    \textit{Geometric Interpretation:}  
    In the MAC context, redundancy occurs when the joint distribution $P(Y|X_1, X_2) $ does not map injectively over distinct input combinations $(x_1, x_2) $. This can be resolved by projecting to the set of unique conditional distributions $\Pr(Y | X_1, X_2) $, thereby defining an equivalent nonredundant MAC.

\end{definition}
\begin{definition}\label{perfectness: MAC}
    For a DM-MAC with input random variables $X_1$ and $X_2$, and output $Y$, we define \textit{perfect correlation} as follows:
    
    \parbreak The MAC is perfectly correlated if, given $Y$, both $X_1$ and $X_2$ can be determined with certainty.
    This implies:
    \[
    H(X_1, X_2 | Y) = 0.
    \]

    \parbreak Similarly, a MAC can be called \textit{perfect} if its joint output distributions (conditioned on input pairs) have mutually disjoint support. Specifically:
    
    \parbreak Given the output $Y$, the pair $(X_1, X_2)$ is uniquely determined.
    Formally, this means that for all $y \in \mathcal{Y}$, there exists at most one pair $(x_1, x_2)$ such that $P_{Y | X_1, X_2}(y | x_1, x_2) > 0$.
\end{definition}
As proved in \cite{Winter1} for point-to-point channels, a perfect DM-MAC (even after removing the redundancy) cannot be used for oblivious transfer, even against passive adversaries. This relates to the concept of noise and the emergence of noisy resources for cryptographic intents. We know that the noise produces uncertainty or entropy. It is the unpredictable aspect of a system that adds complexity. Conceptually, it can be seen as the manifestation of disorder or randomness in communication systems, emphasizing the non-perfect nature of the real world. Noise plays a crucial role in securing communication by hindering an eavesdropper's ability to extract meaningful information from the transmitted message. In a noisy channel, the inherent noise limits the amount of information an unauthorized party can access, regardless of their computational power. This concept is central to Wyner's wiretap channel model \cite{Wyner}, which demonstrates how noise can be leveraged to ensure that the legitimate receiver decodes the message accurately. In contrast, experiencing additional noise, the eavesdropper cannot gather sufficient information to reconstruct the message. As is clear, a perfect channel can be simulated by a noiseless channel where the input(s) can be obtained with certainty from the channel output(s). Obviously, such channels cannot be used for cryptographic intents with unconditional security (information-theoretic secrecy).

\parbreak As is proved in \cite{Imai,Rudolf1}, the OT capacity of the point-to-point erasure channel (BEC) with erasure probability $\frac12$ is equal $\frac12$. Here, we want to investigate whether the OT is possible over a special BE-MAC. We introduce the channel as a correlation between the senders (Alice-1, Alice-2) and the receiver (Bob). Before that, we delve deeper into the \emph{unfairness} in the channel/correlation model. Damg\aa rd \emph{et. al.} \cite{Damgard}, introduced unfairness so that an unfair player could change the communication channel parameters within a certain range. In \cite{Winter1}, this concept is limited so that an unfair player who deviates from the channel statistics in $\delta n$ positions will be caught by the other party with probability $\geq 1-C_1 \exp (-C_2 \delta^2 n)$, where $C_1$ and $C_2$ are two small positive numbers. There are two senders in our channel model. The concept of unfairness can also be extended to the MAC model. To control the fairness of other players in the last $n$ rounds, all players have access to a test unit. When both senders send a pair symbol to Bob over the channel, Bob will ask for the input symbols with probability $\frac12$. He will tell both senders his output when he has received the senders' response. If Alice-$i$ is unfair, she tells her input wrong; if Bob is unfair, he tells his output wrong. The test between players is to check out the samples after $n$ uses of the channel for joint typicality relative to $P_{X_1 X_2 Y}$.

\begin{definition}\label{BE-MAC-special case}
    Consider a DM-MAC characterized by random variables $X_1 $, $X_2 $ uniformly distributed over $\{0,1\}$ (inputs from the two senders) and $Y $ (output), with joint distribution $P(X_1, X_2, Y) $. Let $p = \frac12$ and $\mathcal{Y}$ of $Y$ be partitioned into two disjoint sets: $\mathcal{Y}=\mathcal{E}_{10}\cup\mathcal{E}_{01}$ of non-zero probability under the distribution of $Y$. The channel/correlation has the following properties: 
    \begin{itemize}
        \item For all $y_{10}\in\mathcal{E}_{10}, y_{01}\in\mathcal{E}_{01}$ and $x_i\in\{0,1\}, i\in\{1,2\}$,
        \[
        \Pr \{Y = y_{10}|X_1=x_1,X_2=x_2\} = \Pr \{Y = y_{01}|X_1=x_1,X_2=x_2\} = \frac{1}{2}.
        \]
        \item $\mathcal{E}_{10}$ is the set of received pairs $(X_1=x_1, X_2= e)$, and $\mathcal{E}_{01}$ is the set of received pairs $(X_1=e, X_2= x_2)$, where $e$ is an erased bit. 
    \end{itemize}
\end{definition}

\parbreak Now, we present a more general SU-SBC over DM-MAC. 

\begin{definition}\label{SBC in MAC}[$\text{SU-SBC}_{p, W, W'}$]
    Consider a DM-MAC characterized by random variables $X_1 $, $X_2 $ uniformly distributed over $\{0,1\}$ (inputs from the two senders) and $Y $ (output), with joint distribution $P(X_1, X_2, Y) $. Let $0<p<1$ and $\mathcal{Y}$ of $Y$ be partitioned into five disjoint sets: $\mathcal{Y}=\mathcal{U}_{11}\cup\mathcal{U}_{10}\cup\mathcal{E}\cup\mathcal{U}_{01}\cup\mathcal{U}_{00}$ of non-zero probability under the distribution of $Y$. Note that $\mathcal{E} = \mathcal{E}_{00}\cup\mathcal{E}_{10}\cup\mathcal{E}_{01}$. A \textit{symmetric basic correlation (SBC)} over this channel can be defined as follows: 
    \begin{itemize}
        \item For all $y\in \mathcal{E}_{00}$, $\Pr\{Y=y|X_1=1, X_2=1\} = \Pr\{Y=y|X_1=1, X_2=0\} =\Pr\{Y=y|X_1=0, X_2=1\} =\Pr\{Y=y|X_1=0, X_2=0\}=(1-p)^2$.
        \item For all $y\in \mathcal{E}_{10}\cup\mathcal{E}_{01}$, $\Pr\{Y=y|X_1=1, X_2=1\} = \Pr\{Y=y|X_1=1, X_2=0\} =\Pr\{Y=y|X_1=0, X_2=1\} =\Pr\{Y=y|X_1=0, X_2=0\}=2p(1-p)$.
        \item (Symmetry) For all $y_{11}\in \mathcal{U}_{11}, y_{10}\in \mathcal{U}_{10}, y_{01}\in \mathcal{U}_{01}, y_{00}\in \mathcal{U}_{00}$, and $x_i\in\{0,1\}, i\in\{1,2\}$

        \[
        \Pr \big\{Y=y_{i\in \mathcal{T}'}|(X_1X_2)=j\in\mathcal{T}'\} = \Pr \{Y=y_{i'\in\mathcal{T}'\setminus\{i\}}|(X_1X_2)=j'\in\mathcal{T}'\setminus\{j\}\big\},
        \]
        for $\mathcal{T}' = \{00,10,01,11\}$.
        \item (Non-redundancy) For all $y_{11}\in \mathcal{U}_{11}, y_{10}\in \mathcal{U}_{10}, y_{01}\in \mathcal{U}_{01}, y_{00}\in \mathcal{U}_{00}$, and $x_i\in\{0,1\}, i\in\{1,2\}$
        
        \[
        \Pr\big\{Y=y_{i\in\mathcal{T}'}|(X_1X_2) = i\}>\Pr\{Y=y_{i}|(X_1X_2)=j\in\mathcal{T}'\setminus\{i\}\big\},
        \]
        for $\mathcal{T}' = \{00,10,01,11\}$.
        \item $\Pr\{Y\in\mathcal{E}\} = 1-p^2$.
    \end{itemize}
\end{definition}

\parbreak From the senders' point of view, it looks like uniform inputs to a DM-MAC, while for Bob, it looks like the output of a distinguishable mixture of three channels: a complete erasure MAC $W''(y|x_1,x_2): \{0,1\}\times\{0,1\}\to\mathcal{E}_{00}$, a partial erasure channel $W'(y|x_1,x_2): \{0,1\}\times\{0,1\}\to\mathcal{E}_{01}\cup \mathcal{E}_{10}$ (Definition \ref{BE-MAC-special case}) which erases either $x_1$ or $x_2$, and a channel $W:\{0,1\}\times\{0,1\}\to\mathcal{U}_{11}\cup\mathcal{U}_{10}\cup\mathcal{U}_{01}\cup\mathcal{U}_{11}$, with conditional probabilities $W(y|x_1,x_2) = \frac{1}{p^2}\Pr\{Y=y|X_1=x_1, X_2=x_2\}$. If Bob finds $y \in \mathcal{E}_{00}$, he has no information at all about the inputs. If Bob finds $y \in \mathcal{E}_{01}$, he has no information at all about Alice-1's input. Similarly, if he finds $y \in \mathcal{E}_{10}$, he has no information at all about Alice-2's input. For $y \in \mathcal{U}_{i}$, where $i \in \mathcal{T}' = \{00, 01, 10, 11\}$, he has a (more or less weak) indication that $x_1x_2 = i$, as the likelihood for $x_1x_2 = j \in \mathcal{T}' \setminus \{i\}$ is smaller.

\parbreak The correlation is clearly fully characterized by $p, W$, and $W'$. Hence, we denote this distribution as $\text{SBC}_{p,W,W'}$. If it is used slightly unfairly, we denote it as $\text{SU-SBC}_{p,W,W'}$. We reduce the $\text{SU-SBC}_{p,W,W'}$ to the case in which both inputs are erased or both of them are decoded. The new SU-SBC is demonstrated by $\text{SU-SBC}_{p,W}$ since the sub-channel $W'$ defined in Definition \ref{BE-MAC-special case} is removed. 

\begin{definition}\label{SBC in MAC-reduced}[$\text{SU-SBC}_{p, W}$]
    Consider a DM-MAC characterized by random variables $X_1 $, $X_2 $ uniformly distributed over $\{0,1\}$ (inputs from the two senders) and $Y $ (output), with joint distribution $P(X_1, X_2, Y) $. Let $0<p<1$ and $\mathcal{Y}$ of $Y$ be partitioned into five disjoint sets: $\mathcal{Y}=\mathcal{U}_{11}\cup\mathcal{U}_{10}\cup\mathcal{E}\cup\mathcal{U}_{01}\cup\mathcal{U}_{00}$ of non-zero probability under the distribution of $Y$. A \textit{symmetric basic correlation (SBC)} over this channel can be defined as follows: 
    \begin{itemize}
        \item For all $y\in \mathcal{E}$, $\Pr\{Y=y|X_1=1, X_2=1\} = \Pr\{Y=y|X_1=1, X_2=0\} =\Pr\{Y=y|X_1=0, X_2=1\} =\Pr\{Y=y|X_1=0, X_2=0\}=1-p$.
        \item (Symmetry) For all $y_{11}\in \mathcal{U}_{11}, y_{10}\in \mathcal{U}_{10}, y_{01}\in \mathcal{U}_{01}, y_{00}\in \mathcal{U}_{00}$, and $x_i\in\{0,1\}, i\in\{1,2\}$,

        \[
        \Pr \big\{Y=y_{i\in \mathcal{T}'}|X_1X_2=j\in\mathcal{T}'\big\} = \Pr \big\{Y=y_{i'\in\mathcal{T}'\setminus\{i\}}|X_1X_2=j'\in\mathcal{T}'\setminus\{j\}\big\},
        \]
        for $\mathcal{T}' = \{00,10,01,11\}$.
        \item (Non-redundancy) For all $y_{11}\in \mathcal{U}_{11}, y_{10}\in \mathcal{U}_{10}, y_{01}\in \mathcal{U}_{01}, y_{00}\in \mathcal{U}_{00}$, and $x_i\in\{0,1\}, i\in\{1,2\}$,
        
        \[
        \Pr\big\{Y=y_{i\in\mathcal{T}'}|X_1X_2 = i\big\}>\Pr\big\{Y=y_{i}|X_1X_2=j\in\mathcal{T}'\setminus\{i\}\big\},
        \]
        for $\mathcal{T}' = \{00,10,01,11\}$.
        \item $\Pr\{Y\in\mathcal{E}\} = 1-p$.
    \end{itemize}
\end{definition}

From now on, we work only with $\text{SU-SBC}_{p,W}$. We demonstrate how to reduce the general case of non-perfect MACs to $\text{SU-SBC}_{p,W,W'}$. Since redundant channels or correlations can always be transformed into nonredundant ones, we will henceforth assume that all noisy resources under consideration are nonredundant.

\begin{proposition}
    Given a non-perfect noisy DM-MAC $W: \mathcal{X}_1 \times \mathcal{X}_2 \to \mathcal{Y}$, it can be utilized to obtain a certain SUCO. Similarly, a noisy correlation shared among Alice-1, Alice-2, and Bob can be used to implement an SUCO.
\end{proposition}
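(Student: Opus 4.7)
The plan is to extend the Nascimento–Winter reduction \cite{Winter1, Winter3} from point-to-point channels to the two-sender setting, and to do so in four stages: (i) exploit non-perfectness to obtain a non-trivial ambiguity in Bob's view, (ii) define an ML-partition of $\mathcal{Y}^n$ that produces the five regions demanded by Definition \ref{SBC in MAC}, (iii) symmetrize by local randomness and public reconciliation, and (iv) upgrade from passive to slight unfairness using the MAC test unit.

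First, I would exploit non-perfectness. By Definition \ref{perfectness: MAC}, $H(X_1,X_2\mid Y)>0$, so at least two input pairs induce output distributions with overlapping support. Combined with non-redundancy (Definition \ref{redundant: MAC-channel}), no $W_{x_1,x_2}$ lies in the convex hull of the others, so each extremal pair contributes a distinguishing region in $\mathcal{Y}$. In particular, one can identify subsets of $\mathcal{Y}$ on which a likelihood test recovers only $X_1$ (while $X_2$ is ambiguous), only $X_2$ (while $X_1$ is ambiguous), or the full pair; these are precisely the structural candidates for $\mathcal{E}_{10}$, $\mathcal{E}_{01}$, and $\mathcal{U}_{x_1 x_2}$ of Definition \ref{SBC in MAC}.

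Second, I would construct the reduction protocol. Alice-1 and Alice-2 feed i.i.d.\ uniform bits $X_1^n, X_2^n$ into $W^n$. Bob runs a maximum-likelihood decoder on each coordinate and declares, according to its outcome, that the symbol belongs to $\mathcal{U}_{11}$, $\mathcal{U}_{10}$, $\mathcal{U}_{01}$, $\mathcal{U}_{00}$, $\mathcal{E}_{10}$, $\mathcal{E}_{01}$, or $\mathcal{E}_{00}$. To force the symmetry required by Definition \ref{SBC in MAC}, Alice-$i$ XORs her channel input with a locally drawn uniform mask $R_i^n$ and later reveals $R_i^n$ over the public channel; Bob applies the inverse mask to his decoded labels. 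This homogenizes the conditional distributions across the four elementary pairs, so the symmetry and non-redundancy clauses of Definition \ref{SBC in MAC} are met, with $p$ determined by $\Pr\{Y\in\mathcal{Y}\setminus\mathcal{E}\}$ and the residual sub-channel $W$ inherited from the ML-labeling on $\bigcup \mathcal{U}_{\cdot}$. The correlation version is identical except that the parties already hold i.i.d.\ samples from $P_{X_1 X_2 Y}$ and only perform the labelling and symmetrization steps.

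Third, I would install slight unfairness. Using the three-party test unit described before Definition \ref{BE-MAC-special case}, each channel use is audited with probability $\tfrac{1}{2}$: the parties reveal their inputs/outputs and the transcripts are checked for joint typicality with respect to $P_{X_1 X_2 Y}$. Invoking Lemma \ref{lemm: test unit} with both $d_H(x_1^n,\tilde{x}_1^n)\ge\delta n$ and $d_H(x_2^n,\tilde{x}_2^n)\ge\delta n$ yields detection probability $1-2\exp(-n\epsilon^4/2)$ against any player who deviates from $W$ in more than $\delta n$ positions. This is exactly the quantitative guarantee that upgrades the constructed correlation from an SBC to an SU-SBC, and hence yields an SUCO as claimed.

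The main obstacle will be the joint symmetrization and the clean separation of $\mathcal{E}_{10}$ from $\mathcal{E}_{01}$: in the point-to-point case a single mask suffices, whereas here the two masks $R_1^n,R_2^n$ must simultaneously equalize four conditional distributions while preserving the partial-erasure structure carved out by the ML decoder. A careful choice of the partition, possibly via convex splitting of each $W_{x_1,x_2}$ into a uniform-erasure part and a residual part (mirroring the $(1-p)$ erasure weight in Definition \ref{SBC in MAC}), will be required so that the resulting object matches $\text{SU-SBC}_{p,W,W'}$ on the nose rather than only in average. Once this partition is in hand, the test-unit bound and the public reconciliation together close the reduction.
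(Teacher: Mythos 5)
Your proposal proves a substantially stronger claim than the one asked: you construct an $\text{SU-SBC}_{p,W,W'}$ directly from the MAC, whereas the Proposition only asks for an SUCO, i.e.\ \emph{some} slightly unfair correlation. The paper's own proof of this Proposition is correspondingly minimal: each Alice-$i$ feeds i.i.d.\ realizations of $X_i$ (with the product prior $P_{X_1}P_{X_2}$) into the channel, so the three parties hold i.i.d.\ samples of the joint correlation $P_{X_1X_2Y}$; the test unit audits each use with probability $\tfrac12$; and Lemma~\ref{lemm: test unit} shows that any party deviating from $W$ in more than $\delta n$ positions is caught with probability exponentially close to one. That is the whole argument — the raw correlation plus the audit already \emph{is} a SUCO, with no ML labelling, output-space partition, or symmetrization required. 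Your stage (iv) is essentially the paper's entire proof; stages (i)–(iii) belong to the next proposition (SUCO $\Rightarrow$ $\text{SU-SBC}_{p,W}$), and so you have effectively collapsed the two reductions into one.

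Beyond proving the wrong target, the part you imported from the later proposition also takes a route the paper does not follow and contains a gap you yourself flag. The paper builds the SBC by running the SUCO twice, keeping only pair-symbols in a curated set $\mathcal{L}$ whose members occur with equal probability, and identifying $\mathcal{E}$ and the $\mathcal{U}_{ij}$'s from the structure of those pairs; symmetry is enforced by the choice of $\mathcal{L}$, not by masking. Your XOR-mask-plus-ML-partition plan is plausible in spirit but, as you note, does not obviously equalize four conditional output distributions simultaneously while carving out $\mathcal{E}_{10}$ and $\mathcal{E}_{01}$ cleanly, and a MAC need not be input-symmetric so a single uniform mask per sender does not automatically homogenize $W_{x_1,x_2}$ across $(x_1,x_2)$. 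None of this is needed here: for the Proposition at hand you should strip the argument down to ``i.i.d.\ channel usage gives the correlation; test unit plus Lemma~\ref{lemm: test unit} bounds the unfairness,'' and leave the $\text{SU-SBC}$ construction — in whatever form you can make rigorous — for the next reduction.
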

\begin{proof}
        For any given distribution $P_{X_i}$ possessed by Alice-$i$, $i\in \{1,2\}$, the channel generates a joint distribution $P_{X_1X_2Y}$. Alice-$i$ sends independent realizations of $X_i$ according to her probability distribution over the channel. Note that the input joint distribution is $P_{X_1X_2}(x_1,x_2) = P_{X_1}(x_1)P_{X_2}(x_2)$ due to independence of senders' probability distributions. For each received pair message, Bob will ask, with probability $\frac12$, for the input symbols, and after receiving, he will tell the senders his received message. If one or both senders are cheaters, then they will give wrong information to Bob, and if Bob is a cheater, then he will give wrong information to the senders. Lemma \ref{lemm: test unit} shows that the probability of cheating (in more than $\delta n$ positions) tends to zero as $n\to\infty$. In other words, a cheater can not deviate from the channel statistics in $\delta n$ positions without being detected because the joint typicality test fails. This shows that any non-perfect noisy MAC $W:\mathcal{X}_1\times\mathcal{X}_2\to\mathcal{Y}$ can be used to obtain a certain SUCO/SUNC.    
    \end{proof}
\begin{proposition}
    Given a non-perfect SUCO $P_{X_1X_2Y}$, one can use it to implement a certain $\text{SU-SBC}_{p,W}$.
\end{proposition}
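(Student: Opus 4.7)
The plan is to generalize the point-to-point reduction of \cite{Winter1} to the two-sender setting by composing an input-randomization step, a public-coin symmetrization, and an output relabeling. The end product must satisfy all four requirements of Definition \ref{SBC in MAC-reduced}: the prescribed common conditional probability on $\mathcal{E}$, symmetry across the four input pairs, non-redundancy of the information sets $\mathcal{U}_{ij}$, and the erasure probability $1-p$, while preserving the slight-unfairness bound inherited from the source SUCO.

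First I would have each Alice-$i$ draw $X_i$ uniformly from $\{0,1\}$ using her local randomness, driving the correlation by the uniform product distribution $P_{X_1}P_{X_2}$. To break any remaining asymmetry among the four realizations $(x_1,x_2)\in\{0,1\}^2$, each sender would draw a fresh private mask $R_i\in\{0,1\}$ uniformly and use $\tilde{X}_i=X_i\oplus R_i$ as her actual input to the correlation; once the output $Y$ is sampled, the masks $(R_1,R_2)$ are announced on the public channel, so that from Bob's viewpoint every input pair plays the role of every other pair and the conditional distributions $P_{Y|X_1X_2}(\cdot\,|\,x_1,x_2)$ become interchangeable up to a known relabeling. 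I would then partition $\mathcal{Y}$ into five disjoint sets by placing in $\mathcal{U}_{ij}$ those outputs for which $(i,j)$ is the strictly unique maximum-likelihood pair under the symmetrized $P_{Y|X_1X_2}$, and collecting the remaining outputs into $\mathcal{E}$. Non-redundancy of the source $P_{X_1X_2Y}$ in the sense of Definition \ref{redundant: MAC-correlation} rules out any $\mathcal{U}_{ij}$ being forced empty, while non-perfectness (Definition \ref{perfectness: MAC}) forces $\mathcal{E}\neq\emptyset$, since otherwise $(X_1,X_2)$ would be recoverable from $Y$ with certainty. Taking $p:=\Pr\{Y\notin\mathcal{E}\}$ delivers the target erasure probability $1-p$, and the within-$\mathcal{E}$ probabilities collapse to the common value $1-p$ under the symmetrization.

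The main obstacle is twofold. First, unlike the binary point-to-point case of \cite{Winter1} where a single XOR mask suffices, here there are four input pairs, and I must verify that the joint mask $(R_1,R_2)$ together with its public announcement produces the full symmetry across $\mathcal{T}'=\{00,01,10,11\}$ required by Definition \ref{SBC in MAC-reduced} and does not re-introduce redundancy among the four $\mathcal{U}_{ij}$. Second, the reduction must be compatible with the slight-unfairness of the source correlation: any deviation of a cheating party in more than $\delta n$ positions of the emulated $\text{SU-SBC}_{p,W}$ must still be detected by the joint-typicality test of Lemma \ref{lemm: test unit} applied to $P_{X_1X_2Y}$. Both points are conceptually standard but require careful bookkeeping of how deviations in the source correlation transform under the masking and relabeling, which is where most of the labor of the proof would go.
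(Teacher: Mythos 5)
Your approach is genuinely different from the paper's and contains a gap that I do not think can be repaired. You use a \emph{single} realization of the source correlation together with XOR masking and public reveal, whereas the paper follows \cite{Winter1, Crepeau1} and uses a \emph{two-realization product construction} with rejection sampling: it draws two i.i.d.\ copies of $P_{X_1X_2Y}$, keeps the pair only when the two realized input pairs differ (the acceptance set $\mathcal{L}$), and defines the erasure set $\mathcal{E}\subset\mathcal{Y}\times\mathcal{Y}$ to be the diagonal --- outputs $\big((z,z'),(z,z')\big)$ where both realizations produce the same value. Because $P_{Y|X_1X_2}(z,z'|i)\cdot P_{Y|X_1X_2}(z,z'|j) = P_{Y|X_1X_2}(z,z'|j)\cdot P_{Y|X_1X_2}(z,z'|i)$, the product-conditional probabilities on the diagonal are invariant under swapping the order of the two realizations; rejection onto $\mathcal{L}$ then makes all ordered pairs equiprobable, and non-perfectness ($H(Y|X_1,X_2)>0$ after removing redundancy) guarantees the existence of an output with positive probability under at least four distinct input pairs, hence a non-empty $\mathcal{E}$.

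The gap in your construction is the erasure set. After masking with $(R_1,R_2)$ and announcing them publicly, Bob's observation $(y,r_1,r_2)$ has conditional law
\begin{equation*}
\Pr\{Y=y,R_1=r_1,R_2=r_2 \mid X_1=x_1,X_2=x_2\} = \tfrac14\,P_{Y|X_1X_2}(y\mid x_1\oplus r_1,\, x_2\oplus r_2).
\end{equation*}
For a \emph{fixed} output $(y,r_1,r_2)$, as $(x_1,x_2)$ ranges over $\{0,1\}^2$ the right-hand side just permutes the four original conditionals $P_{Y|X_1X_2}(y\mid\cdot)$; it does not equalize them. Hence your claim that ``the within-$\mathcal{E}$ probabilities collapse to the common value $1-p$ under the symmetrization'' is false --- masking gives relabeling symmetry across the $\mathcal{U}_{ij}$ classes, but cannot create the uniformity that Definition~\ref{SBC in MAC-reduced} demands on $\mathcal{E}$. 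Concretely, Definition~\ref{SBC in MAC-reduced} requires every $y\in\mathcal{E}$ to have \emph{equal} conditional probability under all four input pairs. Your partition places into $\mathcal{E}$ every output whose maximum-likelihood pair is not unique; that set includes two-way and three-way ties, which do not satisfy the constraint, and conversely the set of outputs with a genuine four-way tie --- the only ones that do satisfy it --- may be empty for a generic non-perfect, non-redundant SUCO. The two-realization diagonal trick is precisely what manufactures a non-empty erasure set with the required symmetry, and that mechanism is absent from your single-copy construction. Your handling of non-redundancy of the $\mathcal{U}_{ij}$ and of slight unfairness via Lemma~\ref{lemm: test unit} is in the right spirit, but those pieces cannot rescue the missing erasure structure.
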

\begin{proof}
    Given that the slightly unfair correlation is non-perfect, it follows that after removing all redundancy, $H(Y | X_1, X_2) > 0$. Therefore, there exist at least four distinct pairs of symbols $\big((a,a) \neq (b,b) \neq (a,b) \neq (b,a)\big) \in \mathcal{X}_1 \times \mathcal{X}_2$ such that:
    \begin{align*}
        \big\{ y : &\Pr(Y = y | (X_1,X_2) = (a,a)) > 0 \text{ and }\\
        &\Pr(Y = y | (X_1,X_2) = (a,b)) > 0 \text{ and }\\
        & \Pr(Y = y | (X_1,X_2) = (b,a)) > 0 \text{ and }\\
        &\Pr(Y = y | (X_1,X_2) = (b,b)) > 0 \big\} \neq \emptyset,
    \end{align*}
    since otherwise, the distributions' equivocation (equivocation about the inputs given the output) would be zero.

    Our protocol to achieve $\text{SU-SBC}_{p,W}$ closely follows the approach outlined in \cite{Winter1, Crepeau1}. It uses the source coding protocol $P_{X_1X_2Y}$ twice for two independent realizations. Let:
    \begin{align*}
        \Pr\big((X_1,X_2)(X_1,X'_2)\cup(X_1,X_2)(X'_1,X_2)\cup(X_1,X_2)(X'_1,X'_2)\in \mathcal{L} \big) = \frac{\alpha}{3},
    \end{align*}
    where $\mathcal{L} = \big\lbrace (a,a)(b,b), (a,a)(a,b), (a,a)(b,a),(b,b)(a,a), (b,b)(a,b), (b,b)(b,a), (a,b)(b,b), (a,b)(a,a), (a,b)(b,a),$ $(b,a)\linebreak(a,a), (b,a)(a,b), (b,a)(b,b)\big\rbrace $. As the protocol progresses, if  $\{(X_1,X_2)(X_1,X'_2),(X_1,X_2)(X'_1,X_2),(X_1,X_2)(X'_1,X'_2)\}\linebreak\notin\mathcal{L}$, the senders inform Bob of the value and discard the pair. By the law of large numbers and the i.i.d. distribution of $\mathcal{L}$, we know that the probability of $i \in \mathcal{L}$ is equal $\frac{\alpha}{12}$, then fewer than $\frac{12}{\alpha} (n + \epsilon n)$ (for some $\epsilon > 0$) realizations of $P_{X_1X_2Y}$ are necessary to achieve $n$ realizations of $\text{SBC}_{p,W}$. This constitutes a symmetric basic correlation because the probability of occurrence is equal for each $i\in \mathcal{L}$. Furthermore, Bob’s alphabet $\mathcal{Y} \times \mathcal{Y}$ is partitioned into $\mathcal{E}, \mathcal{U}_{00}, \mathcal{U}_{10}, \mathcal{U}_{01}, \mathcal{U}_{11}$, satisfying the definition:

\begin{itemize}
    \item $\mathcal{E}$ includes all pairs $(z,z') (z,z')$, where $(z,z') (z,z')$ has a positive probability.
    \item $\mathcal{U}_{00}$, $\mathcal{U}_{10}$, $\mathcal{U}_{01}$ and $\mathcal{U}_{11}$ consist of transposes of one another (i.e., swapping the entries) and cannot be empty. If they were, at least one of the members in the above set would be redundant.
\end{itemize}

Bob cannot behave unfairly beyond what is provided by the SUCO $P_{X_1X_2Y}$. However, the senders can introduce bias by attempting actions such as repeating unfair pairs $(x_1,x_2)(x_1,x'_2), (x_1,x_2)(x'_1,x_2)$ or $(x_1,x_2)(x'_1,x'_2)$ for a good pair $i\in\mathcal{L}$. Suppose the senders persist with such strategies for $\delta n$ times (where $\delta > 0$). In that case, Bob can detect their bias by approximating a typicality test, using Lemma \ref{lemm: test unit} and an extended version of \cite[Lemma 6]{Winter1}. This aligns with the definition of $\text{SBC}_{p,W}$.

\end{proof}
    
\section{Main Results and Proofs}\label{Sec: OT-MAC}
\subsection{OT limits}
\subsubsection{Information-Theoretic Formulation of OT}

In the foundational works \cite{Blundo, Nikov}, the information-theoretic security definition for the \emph{receiver's privacy} requires that the sender's final view, denoted by $U = (U', R_A, \mathbf{C})$, be statistically independent of the receiver's choice bit $z \in \{0,1\}$. Here, $U'$ represents the sender's initial state (including inputs), $R_A$ denotes the sender's private randomness, and $\mathbf{C}$ captures the complete public communication between the sender and receiver. However, as highlighted by Crépeau in \cite{Crepeau}, such a strong condition is generally unachievable in many practical settings. In particular, when there is a known dependency between the parties' inputs, the sender's view will unavoidably correlate with the receiver’s input. This is because the sender's input $X$, which is inherently part of her view, may be statistically dependent on the receiver’s input $Z$. To address this, a more appropriate formulation requires that the sender's view be conditionally independent of the receiver’s input, given the sender’s input. Formally, this is expressed via the Markov chain $(U', R_A, \mathbf{C}) \rightarrow X \rightarrow Z$, which implies the conditional mutual information satisfies $I(U', R_A, \mathbf{C}; Z \mid X) = 0$. Nevertheless, Ahlswede and Csiszár \cite{Rudolf1} adopt a slightly relaxed criterion, aligning with the interpretation in \cite{Blundo, Nikov}, where the privacy condition is written as $I(M_0, M_1, X^n, R_A, \mathbf{C}; Z) \rightarrow 0$. This formulation is weaker, as it allows for some residual dependence under the assumption that it diminishes asymptotically.

\parbreak
Formulating the security definition for the \emph{sender} presents additional challenges beyond those encountered in defining receiver security \cite{Blundo, Nikov}. Several issues commonly arise in this context:

\parbreak
Notably, in \cite{Brassard1, Darco}, the model constrains a malicious receiver to alter their input in a purely deterministic manner. That is, the effective input $Z'$—representing the bit ultimately learned by the receiver—is required to be a deterministic function of their original input $Z$. This stands in contrast to the ideal functionality of oblivious transfer, where a malicious receiver is permitted to choose $Z'$ probabilistically, potentially based on arbitrary strategies. Such a deterministic restriction limits the adversary's capabilities in the simulation, potentially weakening the security guarantee for the sender in realistic adversarial settings.

\parbreak
In scenarios where a malicious receiver may alter their effective input from $z$ to $z'$ (with $z, z' \in \{0,1\}$), \cite{Brassard2} allows $Z'$ to depend on the sender's input $X$. Such dependency is not permitted in the ideal model, where parties' inputs and outputs are generated independently of each other through a trusted functionality. Furthermore, the security definition requires that the view of the malicious receiver—denoted by $(V', R_B, Y^n, \mathbf{C})$—be conditionally independent of the sender’s input $X$, given the receiver's original input $Z$ and their output $M_{Z^{'}}$. This corresponds to the Markov chain: $
X \rightarrow (Z, M_{Z^{'}}) \rightarrow (V', R_B, Y^n, \mathbf{C}),
$
which implies: $I(V', R_B, Y^n, \mathbf{C}; X \mid Z, M_{Z^{'}}) = 0$.

\parbreak
However, a more accurate formulation would account for the potential dependence of the receiver's strategy on $Z'$, leading to the stronger Markov condition: $X \rightarrow (Z, Z', M_{Z^{'}}) \rightarrow (V', R_B, Y^n, \mathbf{C})$.
Ahlswede and Csiszár \cite{Rudolf1} propose an alternative criterion under the honest-but-curious model—where Alice and Bob follow the protocol faithfully—that requires:
\[
I(Z, R_B, Y^n, \mathbf{C}; X) = I(Z, R_B, Y^n, \mathbf{C}; M_{\overline{Z}}),
\]
and study its asymptotic behavior in the i.i.d. setting. Here, Alice’s effective input is the message pair $(M_0, M_1)$, while Bob’s input is $Z$. Although this formulation simplifies analysis under passive adversaries, it is overly permissive and can allow certain protocols that are insecure under stronger, malicious models.

\parbreak
In the context of OT, a malicious receiver may alter their effective input from $z$ to $z'$ for $z, z' \in \{0,1\}$. In \cite{Brassard2}, the variable $Z'$—representing the receiver’s effective choice—may depend on the honest sender’s input $X$, which deviates from the ideal functionality where such a dependency is disallowed. Moreover, the receiver's view, denoted by $(V', R_B, Y^n, \mathbf{C})$, is required to be conditionally independent of the sender’s input $X$, given the receiver’s original input $Z$ and the corresponding output $M_{Z^{'}}$. This condition corresponds to the Markov chain: $X \rightarrow (Z, M_{Z^{'}}) \rightarrow (V', R_B, Y^n, \mathbf{C})$, or equivalently, $I(V', R_B, Y^n, \mathbf{C}; X \mid Z, M_{Z^{'}}) = 0$. However, a more accurate and robust formulation accounts for the dependency of the adversary's strategy on the full pair $(Z, Z')$, implying the stronger Markov chain:
$X \rightarrow (Z, Z', M_{Z^{'}}) \rightarrow (V', R_B, Y^n, \mathbf{C})$. Ahlswede and Csiszár \cite{Rudolf1} propose a different formulation in the honest-but-curious setting, where both parties follow the protocol faithfully. In this model, Alice's input is the message pair $(M_0, M_1)$, and Bob's input is $Z$. Their security criterion requires that: $I(Z, R_B, Y^n, \mathbf{C}; X) \triangleq I(Z, R_B, Y^n, \mathbf{C}; M_{\overline{Z}})$, and they study its asymptotic behavior under the i.i.d. assumption. While this approach simplifies the analysis for semi-honest participants, it is too permissive under malicious adversaries and may inadvertently allow protocols with security vulnerabilities.

\subsubsection{OT over a point-to-point noisy channel}
\begin{definition}\cite{Crepeau}\label{thm1'}
    A protocol $\Pi$ securely computes $\binom 21-\text{OT}^{k}$ perfectly if and only if for every pair of algorithms $\Bar{A}= (\Bar{A}_1,\Bar{A}_2)$ that is admissible for protocol $\Pi$ and for all inputs $(X(m_0,m_1), Z)$ and auxiliary input $\mathbf{C}$ (representing the total information transmitted over the noiseless public channel), $\Bar{A}$ produces outputs $(U,V)$ such that the following conditions are satisfied: 
    \begin{itemize}
        \item (Correctness) If both players are honest, then $(U,V) = (\triangle, M_Z)$.
        \item (Security for Alice) If Alice is honest, then we have $U = \triangle$, and there exists a random variable $Z'$, such that
        \[
        I(U';Z'|\mathbf{C},Z) = 0, \,\,\, \text{and}\,\,\, I(U';V|\mathbf{C},Z,Z',M_{Z^{'}}) = 0.
        \]
        \item (Security for Bob) If Bob is honest, then we have
        \[
        I(U;Z|\mathbf{C},U') = 0,
        \]
    \end{itemize}
    where $X, U, V, \mathbf{C}$ are random variables and $\mathbf{C}$ is an additional auxiliary input (total public transmission) available to both players but assumed to be ignored by honest players. Note that $U$ (Alice's final view after completing OT protocol) and $V$ (Bob's final view after completing OT protocol) are $(U', R_A, X^n,  \mathbf{C})$ and $(V', R_B, Y^n, \mathbf{C})$, respectively, where $U'$ and $V'$ in the latter form are Alice's and Bob's initial view, respectively.
\end{definition}

\parbreak
\begin{theorem}\label{thm: impossibility-pp}
Consider a noisy DMC between Alice and Bob. OT with perfect secrecy (unconditional security) is impossible over a DM-MAC if one of the players is an unbounded cheater.
\end{theorem}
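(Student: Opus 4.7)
The plan is to argue by contradiction: assume a protocol $\Pi$ realizes $\binom{2}{1}$-OT$^k$ with perfect information-theoretic security in the sense of Definition~\ref{thm1'}, even when one party is an unbounded cheater. I will exhibit an unbounded malicious Bob whose strategy recovers both of Alice's strings $M_0$ and $M_1$, contradicting the sender-privacy clause $I(U'; V \mid \mathbf{C}, Z, Z', M_{Z'}) = 0$ for every admissible choice of $Z'$.

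First I would invoke the perfect receiver-privacy condition $I(U; Z \mid \mathbf{C}, U') = 0$, which must hold for all honest executions of Bob. Combined with $U = (U', R_A, X^n, \mathbf{C})$, this forces the joint distribution of Alice's complete state together with the public transcript to be identical whether Bob honestly used $Z = 0$ or $Z = 1$. Equivalently, for every transcript $c$ with positive probability, $P(\mathbf{C} = c \mid Z = 0) = P(\mathbf{C} = c \mid Z = 1)$. This exact indistinguishability supplies the leverage for a simulation-style attack.

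The attack proceeds as follows. Bob first runs the protocol honestly with choice $Z = 0$ and recovers $M_0$ from his view $V = (V', R_B, Y^n, \mathbf{C})$ by the correctness condition. He then uses his unbounded computational power to sample a fictitious tuple $(\tilde V', \tilde R_B, \tilde Y^n)$ from the posterior distribution of an honest $Z = 1$ execution conditioned on the already-observed transcript $\mathbf{C}$, and applies the honest decoding function to $(\tilde V', \tilde R_B, \tilde Y^n, \mathbf{C})$. By receiver privacy the transcript $\mathbf{C}$ is exactly as likely under an honest $Z = 1$ run, so the simulated decoder outputs $M_1$ with probability one. Since Bob now holds both strings, no $Z' \in \{0,1\}$ can satisfy the sender-privacy clause, producing the desired contradiction. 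The symmetric direction with an unbounded Alice follows by observing that such an adversary can enumerate candidate channel-input strategies and perform maximum-likelihood testing on her received outputs, violating receiver privacy whenever the channel retains any statistical signature of Bob's hidden input.

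The principal obstacle is legitimizing the sampling step in the presence of channel noise, since Bob must synthesize $\tilde Y^n$ consistent with the hypothetical channel inputs he would have used under $Z = 1$. I would address this by appealing to the known channel law $W(y \mid x)$ together with the equality of the public-transcript marginals: positivity of $P(\mathbf{C} = c \mid Z = 0)$ inherits positivity of $P(\mathbf{C} = c \mid Z = 1)$, so the conditional distribution $P_{V', R_B, Y^n \mid \mathbf{C}, Z = 1}$ is well-defined and classically samplable by an unbounded adversary. A residual subtlety is that Definition~\ref{thm1'} permits a randomized $Z'$; this is handled by noting that the recovery of both strings by the cheating Bob is deterministic relative to the randomness already absorbed into his view, so the obstruction persists for every realization of any admissible $Z'$, and the perfect (zero-information) nature of the hypothesized security leaves no slack for an asymptotic escape.
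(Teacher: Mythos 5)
Your proposal takes a genuinely different route from the paper. The paper argues purely with conditional mutual-information identities: starting from the Crépeau-style conditions $I(X;Z'|Z)=0$ and $I(X;V|Z,Z',M_{Z'})=0$, it applies the data-processing inequality and the chain rule to force $H(M_{Z'}\mid Z,Z')=0$, which is incompatible with non-degenerate message distributions. You instead build an explicit rewinding-style adversary — a simulation attack — rather than manipulating the security constraints directly.

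Unfortunately, the attack has a gap that the noise of the channel opens up, and it is precisely the gap that distinguishes the noisy setting from the noiseless setting where the classical ``no OT from scratch'' argument succeeds. Your cheating Bob resamples a fictitious tuple $(\tilde V',\tilde R_B,\tilde Y^n)$ from the posterior $P_{V',R_B,Y^n\mid \mathbf{C}=c,\,Z=1}$ and applies the honest decoder. But this posterior marginalises over Alice's private state (her randomness $R_A$ and channel inputs $X^n$), because that state is not part of what Bob observed — only the transcript $\mathbf{C}$ is. Consequently the output of the decoder on the resampled tuple is distributed as $P_{M_1\mid \mathbf{C}=c}$, which is a fresh draw from the conditional law of $M_1$ given the transcript, not the true message Alice is holding. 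In fact sender privacy for the honest $Z=0$ run guarantees that $\mathbf{C}$ is (statistically) independent of $M_1$, so the decoded value is essentially a uniform random string uncorrelated with the real $M_1$; the attack learns nothing, and no contradiction arises. (In the noiseless case the transcript alone determines Bob's output, which is why this style of argument works there; the whole point of a noisy correlation is that $Y^n$ carries information about $X^n$ that $\mathbf{C}$ does not, and you cannot resample that correlation after the fact.) A secondary issue: the condition you invoke, $I(U;Z\mid\mathbf{C},U')=0$, is \emph{conditioned} on $\mathbf{C}$, so by itself it does not imply $P(\mathbf{C}=c\mid Z=0)=P(\mathbf{C}=c\mid Z=1)$; you would need an unconditioned receiver-privacy statement for that step. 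To repair the proof you would need to work, as the paper does, with the information-theoretic constraints themselves rather than try to actualise the cheating strategy.
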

\begin{proof}
    In Appendix \ref{App: proof-thm: impossibility-pp}.
\end{proof}

\subsubsection{OT over a DM-MAC}
\parbreak
\begin{definition}\label{thm2'}
    A protocol $\Pi$ securely computes $\binom 21-\text{OT}^{k_1,k_2}$ between non-colluding players perfectly if and only if for every algorithm $\Bar{A}= (\Bar{A}_1,\Bar{A}_2,\Bar{A}_3)$ that is admissible for protocol $\Pi$ and for all inputs $(X_i(M_{i0},M_{i1}),Z_i)$ and auxiliary input $\mathbf{C}\triangleq (\mathbf{C}_1,\mathbf{C}_2)$ (total public transmission), $\Bar{A}$ produces outputs $(U_1,U_2,V)$ such that the following conditions are satisfied:
    \begin{itemize}
        \item (Correctness) If all players are honest, then $(U_1,U_2,V) = (\triangle,\triangle, (M_{1Z_1}, M_{2Z_2}))$.
        \item (Security for Alice-$i$) If both senders are honest; then we have $U_i = \triangle, i\in\{1,2\}$, and there exist two random variables (malicious Bob's effective inputs) $Z'_i,i\in\{1,2\}$, such that,
        \begin{align*}
        I(X_i;Z'_i|\mathbf{C},Z_i)_{i\in\{1,2\}} &= 0, \\
        I(X_i;V|\mathbf{C},Z_i,Z'_i,M_{iZ^{'}_i})_{i\in\{1,2\}} &= 0.
        \end{align*}
        
        \item (Security for Bob) If Bob is honest, then we have:
        \[
        I(U_i;Z_i|\mathbf{C},X_i)_{i\in\{1,2\}} = 0,
        \]
    \end{itemize}
    where Alice-$i$'s final view and Bob's final view are $U_i = (U'_i, R_{A_i}, \mathbf{C}_i)$ and $V = (V', R_B, Y^n, \mathbf{C}_i)$, respectively, and $U'_i$ and $V'$ in the latter form are Alice-$i$'s and Bob's initial views, respectively. 
\end{definition}

\parbreak
\begin{remark}[Structural reduction]\label{rem}
 We have assumed two reductions from the general OT over MAC: 1- Both senders act independently, and there is no dependency between their chosen messages. 2- Both senders act honestly or maliciously.\footnote{If we assume that Alice-$i$ is honest while Alice-$\overline{i}$ is malicious, then OT with perfect secrecy is possible between Alice-$i$ and honest Bob due to Cleve's impossibility.}   
\end{remark}

Now, we aim to demonstrate that the reduced version cannot be realized from an information-theoretical perspective.
\parbreak
\begin{theorem}\label{thm: impossibility-MAC}
    Consider the OT setting over the DM-MAC described in Definition \ref{thm2'} and Remark \ref{rem}. OT with perfect secrecy (unconditional security) is impossible over the DM-MAC if Bob (Or both senders) is an unbounded cheater.  
\end{theorem}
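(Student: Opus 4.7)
The plan is to reduce the MAC impossibility to the point-to-point impossibility (Theorem \ref{thm: impossibility-pp}) by leveraging the structural reduction encoded in Remark \ref{rem}, namely that the two senders behave independently (no message dependence) and that they are either jointly honest or jointly malicious. Under these assumptions, the three-party protocol can always be projected, by ``freezing'' one party, onto a residual two-party protocol to which the point-to-point result applies.

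I would handle the two adversarial configurations separately. First, suppose Bob is the unbounded cheater while both senders are honest. Fix an arbitrary realization of Alice-2's inputs $(M_{20},M_{21})$, her local randomness $R_{A_2}$, and the portion $\mathbf{C}_2$ of the public transcript she produces. Because Alice-1 and Alice-2 operate on independent messages and use independent randomness, conditioning on this realization leaves a well-defined interactive protocol $\Pi'$ between Alice-1 and Bob over the \emph{induced} point-to-point DMC obtained by marginalising $W(y\mid x_1,x_2)$ with respect to Alice-2's conditionally known channel inputs $X_2^n$. Correctness of the original MAC protocol implies that, under this conditioning, Bob still outputs $M_{1Z_1}$ correctly; the MAC privacy requirement of Definition \ref{thm2'} against malicious Bob specialises, after conditioning, to the $\binom{2}{1}$-$\mathrm{OT}^{k_1}$ privacy condition of Definition \ref{thm1'}. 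By Theorem \ref{thm: impossibility-pp} applied to $\Pi'$, perfect secrecy is impossible, and averaging over Alice-2's frozen randomness preserves the contradiction. A symmetric argument freezes Alice-1 to rule out perfect secrecy for $M_{2,\overline{Z}_2}$.

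For the second case, suppose both senders are unbounded cheaters and Bob is honest. Here we must violate Bob's privacy condition $I(U_i;Z_i\mid \mathbf{C},X_i)=0$ for some $i$. I would again freeze the honest-looking part of the transcript produced in conjunction with Alice-$\overline{i}$: since the senders are non-colluding and their inputs/randomness are independent, Alice-$i$ can treat Alice-$\overline{i}$'s publicly observable transcript $\mathbf{C}_{\overline{i}}$ as external randomness correlated only with $Z_{\overline{i}}$, which is independent of $Z_i$. Conditioned on $(\mathbf{C}_{\overline{i}}, R_{A_{\overline{i}}}, M_{\overline{i}0}, M_{\overline{i}1})$, the interaction reduces to a two-party protocol between a malicious Alice-$i$ and honest Bob with Bob's input $Z_i$, on the induced DMC obtained by marginalising $W$ over Alice-$\overline{i}$'s contribution. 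Invoking Theorem \ref{thm: impossibility-pp} in its ``malicious sender'' direction yields that perfect hiding of $Z_i$ is impossible, which transfers back to the MAC protocol because the conditioning is independent of $Z_i$ by Remark \ref{rem}.

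The main obstacle I anticipate is justifying that the freezing/marginalisation step yields an \emph{admissible} residual two-party protocol whose effective channel still satisfies the nonredundant, non-perfect hypothesis required by Theorem \ref{thm: impossibility-pp}. One must verify that the marginalised channel $W_{\mathrm{eff}}(y\mid x_i)=\sum_{x_{\overline i}} P_{X_{\overline i}}(x_{\overline i})\,W(y\mid x_1,x_2)$ does not accidentally collapse to a perfect (deterministic) channel; since the original MAC is assumed non-perfect (i.e.\ $H(X_1,X_2\mid Y)>0$ after removing redundancy, cf.\ Definition \ref{perfectness: MAC}), and since Alice-$\overline{i}$'s inputs are produced by an honest random strategy, the induced channel retains strictly positive equivocation, so the hypothesis of Theorem \ref{thm: impossibility-pp} holds. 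Once this is established, the impossibility of the MAC case follows from two applications of the two-party impossibility, with the structural reduction of Remark \ref{rem} ensuring that no further MAC-specific pathway for secrecy exists. \qed
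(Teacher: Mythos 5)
Your proposal takes a genuinely different route from the paper: you argue at the \emph{protocol} level (condition on one sender's view to obtain a residual two-party protocol, then invoke Theorem~\ref{thm: impossibility-pp} as a black box), whereas the paper argues at the level of information quantities, re-running the mutual-information manipulations of the point-to-point proof per sender index~$i$. Crucially, the paper never freezes or marginalizes; it uses the factorization $P_{V \mid \mathbf{C},X_1,X_2,U_1,U_2,Z_1,Z_2} = P_{V \mid \mathbf{C},X_1,U_1,Z_1}\,P_{V \mid \mathbf{C},X_2,U_2,Z_2}$ (which follows from non-collusion) to \emph{construct} the auxiliary variables $M'_{i,0}, M'_{i,1}$ and the Markov chain $X'_i \to (\mathbf{C},X_i,U_i) \to Z_i$, after which the contradiction follows exactly as in Appendix~D. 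Your route would, if it worked, be cleaner and more modular, but it does not survive scrutiny.

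The concrete gap is the freezing step, and it is not the one you anticipate. Fixing $(M_{20},M_{21},R_{A_2},\mathbf{C}_2)$ does not ``leave a well-defined interactive protocol $\Pi'$ over an induced DMC.'' First, $\mathbf{C}_2$ is not exogenous randomness: it is a function of Bob's messages, which are functions of Alice-1's messages, so conditioning on a realization of $\mathbf{C}_2$ imposes constraints on the very transcript $\Pi'$ is supposed to generate, and the residual object need not be a protocol in the sense of Definition~\ref{thm1'} at all. Second, even if one could disentangle this, conditioning on Alice-2's view pins down $X_2^n$ to specific (transcript-dependent) values, so the ``effective channel'' seen by Alice-1 and Bob at time $l$ is $W(\cdot\mid x_1, x_{2,l})$ --- a time-varying, history-dependent channel, not the fixed memoryless $W_{\mathrm{eff}}(y\mid x_1)=\sum_{x_2}P_{X_2}(x_2)W(y\mid x_1,x_2)$ you write down, and not something Theorem~\ref{thm: impossibility-pp}'s hypothesis covers. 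By contrast, the obstacle you \emph{do} flag (the marginalized channel might become perfect) is largely a red herring: $H(X_1,X_2\mid Y)>0$ does not imply $H(X_1\mid Y)>0$, so your equivocation argument is wrong as stated, but if $H(X_1\mid Y)=0$ then OT from Alice-1 is trivially impossible anyway, so the theorem's conclusion would still hold --- you would just need to say that separately rather than appeal to Theorem~\ref{thm: impossibility-pp}. To make your reduction rigorous you would need to replace the freezing/marginalization heuristic with the paper's explicit construction of the per-sender Markov chains, at which point you would have essentially reproduced the paper's proof.
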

\begin{proof}
    In Appendix \ref{App: proof-thm: impossibility-MAC}.
\end{proof}
\subsection{OT over the DM-MAC}
\parbreak Consider Alice-1, Alice-2, and Bob connected by a DM-MAC. Alice-$i$, $i\in\{1,2\}$ possesses two strings of bits $m_{i0}, m_{i1}$, each of which with length $k_i$. They could have OT between themselves (Alice-1$\to$Bob, Alice-2$\to$Bob) as follows: They send their strings over a noisy DM-MAC to Bob, and Bob has to choose one string from the Alice-1's strings ($m_{1Z_1}$) and one string from Alice-2's strings ($m_{2Z_2}$) by inputting two bits to the channel $Z_1,Z_2$. The unselected messages should be kept hidden from Bob's view, while the selected messages should be kept hidden from Alice's view. After completing a protocol, Bob gets $m_{1Z_1},m_{2Z_2}$ based on his choices, while Alice-$i$ gets nothing $\Delta$.

\parbreak Now we consider a reduction from the general non-perfect MAC (Figure \ref{fig: both-MAC}-$(a)$) to a MAC made up of $\text{SBC}_{p,W}$ with non-independently distributed noise\footnote{The concept of noise could be different in OT over multi-user channels compared to a point-to-point channel. Noise can act independently over the users' links to the receiver in a MAC so that the described $\text{SU-SBC}_{p,W}$ can be defined as $\text{SU-SBC}_{p_1, p_2,W}$, where the channel $W$ in the second one has the conditional probabilities $W(y|x_1,x_2) = \frac{1}{p_1p_2}\Pr\{Y=y|X_1=x_1, X_2=x_2\}$. Also, the sets $\mathcal{E}_{01}$ and $\mathcal{E}_{10}$ could not be unified since their probabilities are not the same. ($1-p_i$ is the erasure probability for Alice-$i$ messages)}.

\begin{figure}[t]
    \centering
    \begin{minipage}{0.45\textwidth}
        \centering
        \includegraphics[scale=0.9,trim={1cm 20.4cm 10cm 1.5cm},clip]{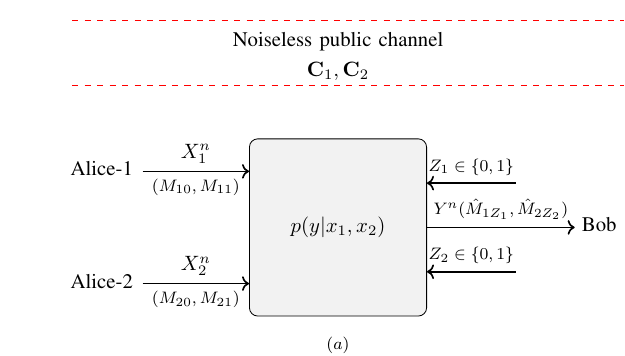}
    
    \end{minipage}
    \hfill
    \begin{minipage}{0.45\textwidth}
        \centering
        \includegraphics[scale=0.9,trim={1cm 20cm 10cm 1.5cm},clip]{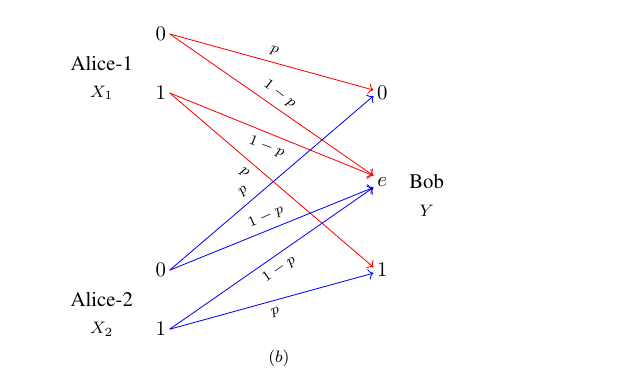}
        
    \end{minipage}
    \caption{$(a)$-OT over a general non-perfect MAC. $(b)$-The non-perfect MAC model reduced to the $\text{SU-SBC}_{p,W,W'}$ described in Definition \ref{SBC in MAC}.}
    \label{fig: both-MAC}
\end{figure}

\begin{definition}\label{def: OT-MAC}
    Let $n, k_1, k_2 \in \mathbb{N} $. An $(n, k_1, k_2)$ protocol involves interaction among Alice-1, Alice-2, and Bob via the setup illustrated in Figure~\ref{fig: both-MAC}-$(b)$. At each time step $l = 1, 2, \ldots, n $, Alice-$i$ transmits a bit $X_{i,l} $ through the MAC. Users alternately exchange messages over a noiseless public channel in multiple rounds, both prior to each transmission and after the final transmission at $l = n $. While the number of rounds may vary, it is finite. Each user's transmission is determined by a function of their input, private randomness, and all prior messages, channel inputs, or channel outputs observed. A positive rate pair $(R_1, R_2)$ is said to be an achievable OT rate pair for the DM-MAC if for $n\to\infty$ there exist $(n, k_1, k_2)$ protocols satisfying $\frac{k_i}{n} \to R_i$ such that for \emph{non-colluding} parties, the asymptotic conditions \eqref{goals: MAC-nColl-1}--\eqref{goals: MAC-nColl-3} hold:
\begin{align}\label{goals: MAC-nColl-1}
    \lim_{n\rightarrow \infty}\text{Pr}\,\left[(\hat{M}_{1Z_1}, \hat{M}_{2Z_2})\neq (M_{1Z_1}, M_{2Z_2})\right] = 0,\\
    \label{goals: MAC-nColl-2}
    \lim_{n\rightarrow \infty}I (M_{1\overline{Z}_1},M_{2\overline{Z}_2} ; V) = 0,\\
    \label{goals: MAC-nColl-3}
    \lim_{n\rightarrow \infty}I (Z_i ; U_i)_{i \in \{1,2\}} = 0,
\end{align}
where the final view of Alice-$i$ and Bob are $U_i = (M_{i0},M_{i1},R_{A_i},X_i^n, \textbf{C})$ and $V = (Z_1, Z_2, R_B, Y^n, \textbf{C})$, respectively, and $\mathbf{C}\triangleq (\textbf{C}_1, \textbf{C}_2)$. The closure of all achievable OT$^{k_1,k_2}$ rate pairs is called the OT capacity region of the MAC. Condition \eqref{goals: MAC-nColl-1} says that Bob correctly learns both $M_{iZ_i}, i\in \{1,2\}$ with negligible probability of error. Condition \eqref{goals: MAC-nColl-2} says that Bob gains negligible information about the unselected messages, and conditions \eqref{goals: MAC-nColl-3} says that Alice-$i$ gains negligible information about Bob's choices $Z_i$. 
\end{definition}
\parbreak In this section, we present a protocol of symmetric basic correlations (SU-SBC) in the case of honest-but-curious players. For a malicious receiver, the protocol achieves an achievable rate region.

\parbreak As previously discussed, from Alice's perspective the SU--SBC is equivalent to supplying a uniform input to a binary MAC. 
In contrast, from Bob's viewpoint the resulting observation can be represented as a distinguishable mixture of three channels: 
(i) a complete erasure channel with probability $(1-p)^2$, 
(ii) a partial erasure channel with probability $2p(1-p)$, and 
(iii) a DM--MAC with probability $p^2$.

\parbreak Now, we present an OT protocol over noisy DM-MAC reduced to the noisy correlation (Figure \ref{fig: both-MAC}-$(b)$) with non-colluding honest-but-curious parties. We assume that $p<\frac12$ and that Alice-1, Alice-2 and Bob are in possession of $n$ realizations $\text{SU-SBC}_{p,W}$. Alice-$i$'s data and Bob's data are denoted by $X_{i,1},\cdots,X_{i,n}$ for $i\in\{1,2\}$ and $Y_1,\cdots,Y_n$, respectively. The output of the channel $W$ is denoted by $Z_1,\cdots,Z_n$ on inputs $X_{i,1},\cdots,X_{i,n}$ for $i\in\{1,2\}$. The output of the partial erasure channel $W':X_1X_2\to\mathcal{E}_{10}\cup\mathcal{E}_{01}$ is denoted by $Z'_1,\cdots,Z'_n$ on inputs $X_{i,1},\cdots,X_{i,n}$ for $i\in\{1,2\}$. 

\parbreak Let $s_i$ and $r_i$, for $i\in\{1,2\}$ be four parameters and $h_{ij} : \mathcal{R}_{ij}\times\mathcal{X}_i^n\to\{0,1\}^{s_in}$, and $\kappa_{ij} : \mathcal{T}_{ij}\times\mathcal{X}_i^n\to\{0,1\}^{r_in}$, $i\in\{1,2\}, j\in\{0,1\}$ be two-universal hash functions. The $\binom{2}{1}$-OT$^{k_1,k_2}$ rate pair is defined as $(r_1 = \frac{k_1}{n}, r_2 = \frac{k_2}{n})$. Note that the strings are ciphered by $\kappa_{ij} : \mathcal{T}_{ij}\times\mathcal{X}_i^n\to\{0,1\}^{r_in}$, $i\in\{1,2\}, j\in\{0,1\}$ and $h_{ij} : \mathcal{R}_{ij}\times\mathcal{X}_i^n\to\{0,1\}^{s_in}$ are used as privacy amplification.

\parbreak We now prove a lemma that says the above protocol without the hashes $h_{ij}: \mathcal{R}_{ij}\times\mathcal{X}_i^n\to\{0,1\}^{s_in}$ is still correct and secure.
\begin{lemma}\label{lemm: conditions}
    Protocol 1 satisfies the conditions \eqref{goals: MAC-nColl-1}-\eqref{goals: MAC-nColl-3} even without privacy amplification.
\end{lemma}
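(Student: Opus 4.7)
The plan is to verify the three conditions \eqref{goals: MAC-nColl-1}--\eqref{goals: MAC-nColl-3} separately, leveraging the structure of the $\text{SU-SBC}_{p,W}$ (Definition~\ref{SBC in MAC-reduced}) together with the cipher hashes $\kappa_{ij}$; the extra hashes $h_{ij}$ only serve to shorten or smooth the key and can be shown to be redundant because the cipher hashes themselves already meet the hypotheses of the distributed leftover hash lemma (Lemma~\ref{DLHL}).

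For correctness \eqref{goals: MAC-nColl-1}, I would argue that Bob identifies, from his channel outputs $Y^n$, the indices at which the pair $(X_{1,l},X_{2,l})$ is recoverable from the $W$-portion of the mixture; by construction of $\text{SU-SBC}_{p,W}$ the fraction of such indices concentrates around $p$, so for large $n$ he can extract sets $S_{1,Z_1}$ and $S_{2,Z_2}$ of size $n r_i = k_i$ that carry no erasures. On these sets Bob recovers $\mathbf{X}_i|_{S_{i,Z_i}}$ with vanishing error probability from the non-redundancy of $W$ and a typicality/ML decoding argument, and hence, upon receiving $M_{iZ_i}\oplus \kappa_{i Z_i}(T_{i Z_i},\mathbf{X}_i|_{S_{i,Z_i}})$ over the public channel, he correctly decodes $M_{iZ_i}$.

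For Bob's security \eqref{goals: MAC-nColl-3}, I would exploit the inherent symmetry of the $\text{SU-SBC}_{p,W}$: the erasure pattern on Bob's side is generated by the channel independently of any inputs or choice bits, so the distribution of the partition $(S_{i,0},S_{i,1})$ communicated by Bob to Alice-$i$ is invariant under swapping $Z_i\leftrightarrow 1-Z_i$. Therefore $Z_i$ is statistically independent of $U_i=(M_{i0},M_{i1},R_{A_i},X_i^n,\mathbf{C})$, giving $I(Z_i;U_i)=0$ exactly, not merely asymptotically.

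The main obstacle is security for Alice \eqref{goals: MAC-nColl-2}. The key step is to show that, conditional on Bob's full view (including $\mathbf{C}$, $Y^n$, $R_B$, and the selected keys), the strings $\mathbf{X}_i|_{S_{i,\overline{Z}_i}}$ retain enough smooth conditional min-entropy to be mapped by $\kappa_{i\overline{Z}_i}$ to a distribution close to uniform. On the positions $S_{i,\overline{Z}_i}$, which lie in the erasure region $\mathcal{E}$, Bob's marginal $P_{X_i\mid Y\in\mathcal{E}}$ is uniform on $\{0,1\}$ by Definition~\ref{SBC in MAC-reduced}, so Lemma~\ref{lemm: Holenstein} gives $H_\infty^\epsilon(\mathbf{X}_i|_{S_{i,\overline{Z}_i}} \mid V) \geq k_i \cdot 1 - 4\sqrt{k_i \log(1/\epsilon)}$ up to side-information corrections handled via inequality \eqref{eq: holenstein-1}. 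Since the target output length is $r_i n = k_i$, for any $\gamma>0$ one can take $k_i$ slightly below $n(1-p-\gamma)$ so that the DLHL bound $H_\infty^{\epsilon'}(\mathbf{X}_i|_{S_{i,\overline{Z}_i}}\mid V) \geq k_i + 2\log(1/\epsilon)$ is satisfied; applying Lemma~\ref{DLHL} with $m=2$ then yields that the pair of keys $(\kappa_{1\overline{Z}_1},\kappa_{2\overline{Z}_2})$ is $\mathcal{O}(\epsilon+\epsilon')$-close to uniform with respect to $V$. Finally, the Shannon-mutual-information bound stated in Lemma~\ref{DLHL} converts statistical closeness into $I(M_{1\overline{Z}_1},M_{2\overline{Z}_2};V)\to 0$, completing the proof. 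The delicate accounting in the above min-entropy bound, in particular removing the extra conditioning on the public transcript $\mathbf{C}$ and on the recovered key on the sibling side, is where Lemma~\ref{lemm: min-entropy vs smooth} and inequality \eqref{eq: holenstein-1 + holenstein-1} will be invoked.
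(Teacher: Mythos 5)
Your correctness argument and your sender-cannot-learn-$Z_i$ argument mirror the paper's: both hinge on the Chernoff-bound abort analysis for step $(2)$ and on the observation that the erasure pattern is channel-generated independently of Bob's choice and that $|S_{i0}|=|S_{i1}|$, so $I(Z_i;U_i)=0$ exactly. Where you genuinely diverge is in the proof of condition \eqref{goals: MAC-nColl-2}. You route the argument through the smooth min-entropy toolkit — Lemma~\ref{lemm: Holenstein}, inequality \eqref{eq: holenstein-1}, and the distributed leftover hash lemma (Lemma~\ref{DLHL}) — and then convert statistical closeness to uniform into vanishing mutual information. The paper does something lighter: after the same chain-rule decomposition $I(M_{1\overline Z_1},M_{2\overline Z_2};V)=\sum_i I(M_{i\overline Z_i};Z_i,Y^n,\mathbf C_i)$, it reduces the problem to lower bounding $H\bigl(\kappa_{i\overline Z_i}(\mathbf X_i|_{S_{i\overline Z_i}})\,\big|\,Y^n|_{S_{i\overline Z_i}},\kappa_{i\overline Z_i}\bigr)$, observes that $S_{i\overline Z_i}\subset E_i$ consists entirely of erased positions so the conditional Rényi entropy of order $2$ equals $\Delta(y^n|_{s_{i\overline z_i}})\geq nr_i$, and applies the Bennett-style hash lemma (Lemma~\ref{entropy hash}) to get the Shannon-entropy bound $n(r_i-\lambda')-2^{-n\lambda'}/\ln 2$ directly. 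Your route buys alignment with the proof machinery of Theorems~\ref{thm: OT capacity2} and~\ref{thm: OT capacity-malicious} and is more robust to non-fully-erased side information, but at the cost of having to verify the DLHL hypothesis for all subsets $\mathcal S$, which you don't explicitly do; the paper's route is more elementary and exploits precisely the all-erased structure of $S_{i\overline Z_i}$ to get an exact Rényi-entropy count. Two small corrections: your threshold $k_i<n(1-p-\gamma)$ is not the relevant constraint — the set $S_{i\overline Z_i}$ has cardinality $(p-\eta)n$ by construction, so the bound scales with $p$, not $1-p$; and the $h_{ij}$ in Protocol~1 are used for Bob's decoding verification in step $(5)$ (and are themselves a source of leakage to bound), so describing them as merely shortening the key understates what Lemma~\ref{lemm: conditions} is actually claiming.
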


\begin{proof}
    In Appendix \ref{App: proof-lemm: conditions}.
\end{proof}

\parbreak As a central block of proving the upper bound on OT rate, we want to consider the problem of pre-generated secret key agreement over MAC by public discussion. We use the results in the proof of Theorem \ref{thm: OT capacity}.

\begin{protocol}{OT over noisy DM-MAC (Definition \ref{SBC in MAC-reduced})}
\label{protocol: Main}
    \textit{Parameters:} \begin{itemize}
        \item $p<\frac12$
        \item $\gamma>0, \eta>0$. 
        \item The rate of the protocol is $r_i-\gamma = \frac{k_i}{n}$, and $k_i$ is the length of Alice-$i$'s strings. 
    \end{itemize}
    \sbline
    \textit{Goal.} Alice-$i$ sends two strings $m_{i0}$ and $m_{i1}, i \in \{1,2\}$ to Bob. At the end of the protocol, Bob gets $m_{iZ_i}, Z_i\in \{0,1\}, i\in\{1,2\}$ based on his choices while Alice-$i$ gets nothing $\Delta$.
    \sbline
    \textit{The protocol:}
      \begin{enumerate}
        \item Alice-$i$ transmits an $n$-tuple $\mathbf{X}_i= X_i^n$ of i.i.d. Bernoulli($\frac12$) bits over the reduced noisy DM-MAC. 
    
        \item Bob receives the $n$-tuple $\mathbf{Y}= Y^n$. Bob forms the sets 
        \[
        \overline{E}_i:=\left\{j\in\{1,2,\cdots,n\}:Y_j=(\hat{x}_i\neq e, \hat{x}_{\overline{i}})\right\}
        \]
        \[
        E_i:=\left\{j\in\{1,2,\cdots,n\}:Y_j=(\hat{x}_i=e, \hat{x}_{\overline{i}})\right\}
        \]
        If $\lvert \overline{E}_i\rvert<r_in$ or $\lvert E_i \rvert<r_in$, Bob aborts the protocol.
        \item Bob creates the following sets:
        \[
        S_{iZ_i}\sim \text{Unif}\left\{A\subset\overline{E}_i: \lvert A \rvert=(p-\eta)n\right\}
        \]
        \[
        S_{i\overline{Z}_i}\sim \text{Unif}\left\{A\subset E_i: \lvert A \rvert=(p-\eta)n\right\}
        \]
        Bob reveals $S_{i0}$ and $S_{i1}$ to Alice-$i$ over the noiseless public channel (only the description of the sets). 
        
        \item Alice-$i$ randomly and independently chooses functions $\kappa_{i0}$, $\kappa_{i1}$, $h_{ij} : \mathcal{R}_{ij}\times\mathcal{X}_i^n\to\{0,1\}^{s_in}, i\in\{1,2\}, j\in\{0,1\}$ from a family $\mathcal{K}$ of two-universal hash functions:
        \[
        \kappa_{i0}, \kappa_{i1}: \{0,1\}^{r_in}\rightarrow \{0,1\}^{k_i}
        \]
        Alice-$i$ finally sends the following information to Bob over the noiseless public channel: 
        \[
        h_{i0}, h_{i1}, \kappa_{i0}, \kappa_{i1}, M_{i0}\oplus\kappa_{i0}(\mathbf{X}_i|_{S_{i0}}), M_{i1}\oplus\kappa_{i1}(\mathbf{X}_i|_{S_{i1}})
        \]
        and the total randomness in her possession.
        \item Bob knows $\kappa_{iZ_i}$, $\mathbf{X}_i|_{S_{iZ_i}}$ one can decode $M_{iZ_i}$. At first, he computes $\hat{\mathbf{X}}_i|_{S_{iZ_i}}$ so that: 
        \begin{itemize}
            \item $(\hat{\mathbf{X}}_1|_{S_{1Z_1}}, \hat{\mathbf{X}}_2|_{S_{2Z_2}})$ and $\mathbf{Y}|_{S_{1Z_1}, S_{2Z_2}}$ are $\epsilon$-conditional typical according to $W$, that is, $\mathbf{Y}|_{S_{1Z_1}, S_{2Z_2}}\in \mathcal{T}^n_{W, \epsilon}(\hat{\mathbf{X}}_1|_{S_{1Z_1}}, \hat{\mathbf{X}}_2|_{S_{2Z_2}})$;
            \item $h_{i}(R_{iZ_i}, \mathbf{X}_i|_{S_{iZ_i}}) = h_{i}(R_{iZ_i}, \hat{\mathbf{\mathbf{X}}}_i|_{S_{iZ_i}}), i\in\{1,2\}$;
            
            If there is more than one such $(\hat{\mathbf{X}}_1|_{S_{1Z_1}}, \hat{\mathbf{X}}_2|_{S_{2Z_2}})$ or none, Bob outputs an error.
        \end{itemize}
        
      \end{enumerate}
    
\end{protocol} 

\subsection{Secret Key Agreement over MAC by Public Discussion} 
Secret key agreement over a simple point-to-point channel is initially addressed in \cite{Maurer} and \cite{Rudolf2}. In \cite{Maurer}, the author considers that Alice aims to share a secret key with Bob in the presence of a passive wiretapper. A noiseless one-way channel assists the main noisy channel with unlimited capacity. Also, it is assumed that Eve can receive all messages sent over the public channel without error, but the wiretapper can not change the messages without being detected. 

\parbreak A generalization of the above setting can be found in \cite{Gohari1, Gohari2}. Also, the problem of secret key agreement over MAC is studied in \cite{Aref}, in which the authors assumed that each transmitter plays the role of wiretapper for the other transmitter. This is the MAC with confidential message \cite{Poor}. Consider the following settings: Alice-1 and Alice-2 aim to share secret keys with Bob in the presence of a passive wiretapper. Alice-$i$, Bob, and Eve govern the input $X_i$, the output $Y$, and $E$, respectively. Furthermore, a public channel with unlimited capacity is available for one-way communication from the senders to Bob. Both resources (MAC and the public channel) are available for communication, but not at the same time. First, the senders are allowed to use the MAC $n$ times, and then they can use the public channel only once.

\parbreak Alice-$i$ and Bob use a protocol in which, at each step, Alice-$i$ sends a message to Bob depending on $X_i$ and all the messages previously received by Bob and Bob sends a message to Alice-$i$ depending on $Y$ and all the messages previously received by Alice-$i$. Let $C_{i,k}, i\in\{1,2\}, k \in \{\text{odd}\}$ the use of public channel from Alice-$i$ to Bob, and $C_{i,k}, i\in\{1,2\}, k \in \{\text{even}\}$ the use of public channel from Bob to the senders. Also, all legitimate parties can benefit from randomness statistically independent of $X_i, Y$, and $E$.

\parbreak At the end of round-$l$, Alice-$i$ computes a key $S_i$ as a function of $X_i$ and $C_i^l \triangleq [C_{i,1}, \cdots, C_{i,l}], l\, \text{is even}$, and Bob computes keys $S'_i$ as a function of $Y$ and $C_i^l$. The goal is to maximize $H(S_1,S_2)$ while $S'_1$ and $S'_2$ agree with very high probability and Eve has negligible information about $S_i$ and $S'_i, i \in \{1,2\}$:

\begin{align}\label{SKA-MAC-1}
    H(C_{i,k}|C_i^{k-1},X_1,X_2) & = 0, \qquad\text{For odd}\, k\\\label{SKA-MAC-2}
    H(C_{i,k}|C_i^{k-1},Y) & = 0, \qquad\text{For even}\,k\\\label{SKA-MAC-3}
    H(S_i|C_i^l,X_i) & = 0,\\\label{SKA-MAC-4}
    H(S_1,S_2|C_1^l,C_2^l ,X_1,X_2) & = 0,\\\label{SKA-MAC-6}
    H(S'_1,S'_2|C_1^l,C_2^l, Y) & = 0,\\\label{SKA-MAC-7}
    \text{Pr}\lbrace (S_1,S_2) \neq (S'_1,S'_2) \rbrace & \leq \epsilon,\\\label{SKA-MAC-8}
    I(S_i;C_i^l, E)& \leq \delta_i,\\
    \label{SKA-MAC-9} I(S_1,S_2;C_1^l,C_2^l, E)& \leq \delta^{\prime},
\end{align}
for $i \in \{1,2\}$, where $\epsilon, \delta_i$, and $\delta^{\prime}$ are specified small numbers. 

\begin{lemma}\label{lemm1}

    \begin{equation}
        H(S_1,S_2|S'_1,S'_2) \leq h(\epsilon_1) +h(\epsilon_2) + \log_2(|S_1|-1) + \log_2(|S_2|-1).
    \end{equation}
        
\end{lemma}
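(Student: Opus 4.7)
The statement is a pair-version of Fano's inequality, so the plan is to reduce it to two applications of the scalar Fano bound via the chain rule of entropy. Let $\epsilon_1 \triangleq \Pr\{S_1 \neq S'_1\}$ and $\epsilon_2 \triangleq \Pr\{S_2 \neq S'_2\}$; these are the individual error probabilities implicit in condition \eqref{SKA-MAC-7} (noting that by the union bound each $\epsilon_i \leq \epsilon$, but for the lemma we only need $\epsilon_i$ itself).

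First, I would decompose the joint conditional entropy using the chain rule:
\begin{equation*}
H(S_1,S_2 \mid S'_1,S'_2) = H(S_1 \mid S'_1, S'_2) + H(S_2 \mid S_1, S'_1, S'_2).
\end{equation*}
Since conditioning reduces entropy, the first term is bounded by $H(S_1 \mid S'_1)$ and the second by $H(S_2 \mid S'_2)$. Hence
\begin{equation*}
H(S_1,S_2 \mid S'_1,S'_2) \leq H(S_1 \mid S'_1) + H(S_2 \mid S'_2).
\end{equation*}

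Next, I would apply the standard Fano inequality to each marginal term, treating $S'_i$ as an estimator of $S_i$: for $i\in\{1,2\}$,
\begin{equation*}
H(S_i \mid S'_i) \leq h(\epsilon_i) + \epsilon_i \log_2(|S_i|-1).
\end{equation*}
Since $\epsilon_i \leq 1$, the factor $\epsilon_i$ on the $\log_2(|S_i|-1)$ term may be dropped to obtain the coarser but sufficient bound $H(S_i \mid S'_i) \leq h(\epsilon_i) + \log_2(|S_i|-1)$. Summing over $i=1,2$ gives the claim.

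There is no real obstacle here; the result is essentially bookkeeping on top of Fano. The only subtlety worth spelling out is the justification that $\epsilon_1$ and $\epsilon_2$ as defined above are legitimate error probabilities for each scalar Fano invocation, which follows directly because $\{S_i \neq S'_i\} \subseteq \{(S_1,S_2) \neq (S'_1,S'_2)\}$, and hence the bounds are well-defined regardless of possible correlation between the two error events.
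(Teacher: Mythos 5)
Your proof is correct and follows essentially the same route as the paper: chain rule on the joint conditional entropy, reduce to $H(S_1\mid S'_1)+H(S_2\mid S'_2)$, then apply the scalar Fano bound to each term. The one small difference is that you justify the intermediate inequality by the universal fact that conditioning reduces entropy, whereas the paper invokes an independence argument; your justification is actually cleaner since it needs no structural assumption and only an inequality is required.
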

\begin{proof}

    \begin{equation*}
        H(S_1,S_2|S'_1,S'_2) = H(S_1|S'_1,S'_2) + H(S_2|S_1,S'_1,S'_2) \leq H(S_1|S'_1) + H(S_2|S'_2),
    \end{equation*}
    where the last inequality is due to the fact that $S_i$ is independent of $S_{\overline{i}}$ and $S'_{\overline{i}}$. According to Fano's lemma, conditions (\ref{SKA-MAC-7}), (\ref{SKA-MAC-8}) implies that $H(S_i|S'_i)\leq h(\epsilon_i) + \log_2(|S_i|-1)$. $|S_i|$ is the number of distinct values that $S_i$ takes on with non-zero probability. Note that $H(S_1,S_2|S'_1,S'_2)\rightarrow 0$ as $\epsilon_i, \epsilon' \rightarrow 0$.
\end{proof}

\begin{theorem}\label{thm: key agreement}
    For every key agreement protocol satisfying \eqref{SKA-MAC-1}-\eqref{SKA-MAC-9}, we have:
    \begin{align*}
        H(S_1) & \leq I(X_1;Y|C_1^l, X_2, E) + H(S_1|S'_1) + I(S_1;C_1^l, E),\\
        H(S_2) & \leq I(X_2;Y|C_2^l, X_1, E) + H(S_2|S'_2) + I(S_2;C_2^l, E),\\
        H(S_1,S_2) & \leq I(X_1,X_2;Y|C_1^l,C_2^l, E) + H(S_1,S_2|S'_1,S'_2) + I(S_1,S_2;C_1^l,C_2^l, E),
    \end{align*}
    and for the case with constant $E$: 
    \begin{align*}
        H(S_1) & \leq I(X_1;Y|C_1^l, X_2) + H(S_1|S'_1) + I(S_1;C_1^l),\\
        H(S_2) & \leq I(X_2;Y|C_2^l, X_1) + H(S_2|S'_2) + I(S_2;C_2^l),\\
        H(S_1,S_2) & \leq I(X_1,X_2;Y|C_1^l,C_2^l) + H(S_1,S_2|S'_1,S'_2) + I(S_1,S_2;C_1^l,C_2^l).
    \end{align*}
\end{theorem}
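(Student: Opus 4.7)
The plan is to extend the classical Maurer / Ahlswede--Csisz\'ar converse for secret-key agreement by public discussion to the present MAC configuration. I will work out $H(S_1)$ in detail; the bound on $H(S_2)$ follows by swapping the sender indices, the joint bound on $H(S_1,S_2)$ is a parallel argument with both keys treated together, and the constant-$E$ statements are recovered by simply dropping $E$ throughout.

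Step one is the usual Fano-style decomposition
\[
H(S_1) = I(S_1;S'_1) + H(S_1 \mid S'_1),
\]
which reduces the task to upper bounding $I(S_1;S'_1)$. By \eqref{SKA-MAC-6}, $S'_1$ is a deterministic function of $(Y,C_1^l)$, so the data-processing inequality gives $I(S_1;S'_1)\le I(S_1;Y,C_1^l)$; enlarging the observation with the side-information variables $X_2,E$ only increases mutual information, yielding
\[
I(S_1;S'_1) \le I(S_1;Y,C_1^l,X_2,E).
\]
Next I apply the chain rule and exploit that $S_1$ is a deterministic function of $(X_1,C_1^l)$:
\[
I(S_1;Y,C_1^l,X_2,E) = I(S_1;C_1^l,E) + I(S_1;X_2 \mid C_1^l,E) + I(S_1;Y \mid X_2,C_1^l,E),
\]
with $I(S_1;Y\mid X_2,C_1^l,E) \le I(X_1;Y\mid X_2,C_1^l,E)$ by a second use of data processing. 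Substituting back delivers the claimed inequality provided the cross-term $I(S_1;X_2\mid C_1^l,E)$ is zero.

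The vanishing of that cross-term is the main obstacle. The plan is to establish the Markov chain $X_2 - (C_1^l,E) - S_1$, which, because $S_1$ is a function of $(X_1,C_1^l)$, reduces to $X_2 - (C_1^l,E) - X_1$. I would invoke the a priori independence $X_1\perp X_2$ of the non-colluding senders together with the round-by-round update rules \eqref{SKA-MAC-1}--\eqref{SKA-MAC-2}: each new message of $C_1^l$ is a function of quantities already present in the conditioning sigma-algebra, so an induction on the round index $k$ should show that the independence of $X_1$ and $X_2$ is preserved once $C_1^l$ and $E$ are fixed. In full generality this may need a mild structural assumption on the protocol; if the Markov property ever fails, the same chain of inequalities still yields the weaker statement in which $I(S_1;C_1^l,E)$ is replaced by $I(S_1;X_2,C_1^l,E)$.

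For the joint bound, the same template applies starting from $H(S_1,S_2)=I(S_1,S_2;S'_1,S'_2)+H(S_1,S_2\mid S'_1,S'_2)$ and using that $(S'_1,S'_2)$ is a function of $(Y,C_1^l,C_2^l)$, together with
\[
I(S_1,S_2;Y\mid C_1^l,C_2^l,E) \le I(X_1,X_2;Y\mid C_1^l,C_2^l,E);
\]
no cross-term needs to be annihilated here because both inputs now sit on the same side of the bound, so this case is in fact cleaner than the per-user ones.
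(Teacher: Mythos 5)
Your decomposition $H(S_1)=I(S_1;S_1')+H(S_1\mid S_1')$ followed by data-processing and a chain-rule split is the classical Maurer/Ahlswede--Csisz\'ar route, and it genuinely differs from the paper's argument, which instead starts from $H(S_i)=H(S_i\mid X_{\overline{i}})$ (invoking $S_i\perp X_{\overline{i}}$), writes this as $I(S_i;C_i^l,E\mid X_{\overline{i}})+H(S_i\mid C_i^l,E,X_{\overline{i}})$, and then bounds the second term by repeatedly using $H(S_i\mid C_i^l,X_i)=0$ (condition \eqref{SKA-MAC-3}), so that $X_{\overline{i}}$ is carried in the conditioning throughout and a cross-term never arises. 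Your route, by contrast, places $X_2$ on the observation side of a mutual information and must then discard $I(S_1;X_2\mid C_1^l,E)$.

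That discarding is the genuine gap, and it cannot be repaired by the argument you sketch. The Markov chain $X_2-(C_1^l,E)-X_1$ does \emph{not} follow from the a priori independence of the senders' inputs: $E$ is Eve's observation of the MAC output and therefore is a function of both $X_1$ and $X_2$, so conditioning on $E$ directly correlates them; moreover $C_1^l$ contains Bob's public replies, which depend on $Y$, which again depends on $X_2$. Your fallback ``replace $I(S_1;C_1^l,E)$ by $I(S_1;X_2,C_1^l,E)$'' is where an honest version of your proof must stop, and it is strictly weaker than the stated theorem. A second, smaller slip: condition \eqref{SKA-MAC-6} only asserts $H(S_1',S_2'\mid C_1^l,C_2^l,Y)=0$, so $S_1'$ is a function of $(Y,C_1^l,C_2^l)$, not of $(Y,C_1^l)$ alone; your data-processing enlargement would therefore have to include $C_2^l$, which further aggravates the cross-term problem. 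Your treatment of the joint bound $H(S_1,S_2)$ is, as you note, clean and essentially matches the paper's joint-entropy calculation, since there both $X_1,X_2$ already sit on the same side of the final mutual information.
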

\begin{proof}
    In Appendix \ref{App: proof-thm: key agreement}.
\end{proof}

\begin{corollary}\label{cor1}
        The secret key rate region $\mathcal{R}_{\text{SK}}$ of $X_1, X_2$, and $Y$ with respect to the constant random variable $E$ is upper bounded as:
        \begin{align}%
\mathcal{R}_{\text{SK}}&\subseteq
\left\{ \begin{array}{rl}
  (R_1,R_2) \,:\;
	R_1 &\leq  \max_{P_{X_1}P_{X_2}} I(X_1;Y|X_2)\\
  R_2 & \leq \max_{P_{X_1}P_{X_2}} I(X_2;Y|X_1)\\
	R_1+R_2 & \leq \max_{P_{X_1}P_{X_2}} I(X_1,X_2;Y)
	\end{array}
\right\} \,,
\label{eq:inRetx_In1}
\end{align}
        
        for some distribution $p(x_1)p(x_2)$ on $\mathcal{X}_1\times\mathcal{X}_2$.
\end{corollary}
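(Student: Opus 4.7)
The plan is to derive this corollary directly from Theorem \ref{thm: key agreement} by specializing to a constant $E$, applying it over $n$ uses of the memoryless MAC, and passing to the asymptotic limit $n\to\infty$. First, I would rewrite the three bounds of Theorem \ref{thm: key agreement} with the vector variables $X_1^n, X_2^n, Y^n$ in place of the single letters, divide each bound by $n$, and identify $R_i := \lim_{n\to\infty} H(S_i)/n$.

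Next, I would show that all correction terms vanish asymptotically. The reliability condition \eqref{SKA-MAC-7} combined with Lemma \ref{lemm1} and Fano's inequality gives $\frac{1}{n}H(S_*\mid S'_*)\to 0$, while the secrecy conditions \eqref{SKA-MAC-8} and \eqref{SKA-MAC-9} directly yield $\frac{1}{n}I(S_*;C_*^l)\to 0$. Hence only the channel mutual-information terms survive, and the three bounds become
\begin{align*}
R_1 &\le \limsup_{n\to\infty} \tfrac{1}{n}I(X_1^n;Y^n\mid C_1^l, X_2^n),\\
R_2 &\le \limsup_{n\to\infty} \tfrac{1}{n}I(X_2^n;Y^n\mid C_2^l, X_1^n),\\
R_1+R_2 &\le \limsup_{n\to\infty} \tfrac{1}{n}I(X_1^n, X_2^n;Y^n\mid C_1^l, C_2^l).
\end{align*}

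The final step is to single-letterize these expressions. Using the memoryless property of the MAC and the independence of $X_1^n$ and $X_2^n$ implied by \eqref{SKA-MAC-1}, I would apply the standard chain-rule decomposition $I(X_1^n;Y^n\mid X_2^n)=\sum_k I(X_{1,k};Y_k\mid X_{2,k})$ and introduce a time-sharing random variable $Q\sim\mathrm{Unif}\{1,\ldots,n\}$, setting $X_i := X_{i,Q}$ and $Y := Y_Q$, so that the sums reduce to the single-letter forms $I(X_1;Y\mid X_2)$, $I(X_2;Y\mid X_1)$, and $I(X_1,X_2;Y)$. Taking the supremum over product input distributions $P_{X_1}P_{X_2}$ then yields the stated region. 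I expect the main obstacle to be handling the conditioning on the interactive public discussion $C_i^l$: one must argue that $C_i^l$ can be dropped without inflating the mutual information in the limit. I would justify this via the Markov structure $C_1^l-(X_1^n,Y^n)-X_2^n$ (Alice-1's half of the public dialogue is never directly informed by Alice-2's channel symbols) together with a data-processing step that makes any residual gap vanish after normalization.
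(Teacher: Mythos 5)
Your proposed route through Theorem~\ref{thm: key agreement} has a genuine gap in the last step, precisely where you flagged the difficulty. The Markov chain $C_1^l - (X_1^n, Y^n) - X_2^n$ that you invoke does \emph{not} hold under the paper's model: condition~\eqref{SKA-MAC-1} explicitly allows $C_{i,k}$ (for odd $k$) to be a function of \emph{both} $X_1$ and $X_2$, so Alice-1's transcript is not shielded from Alice-2's symbols by $(X_1^n, Y^n)$. And even in a setting where each sender's message depends only on her own observation, the Markov chain by itself does not let you replace $I(X_1^n;Y^n \mid C_1^l, X_2^n)$ by $I(X_1^n;Y^n \mid X_2^n)$: conditioning can either raise or lower mutual information, so a one-shot data-processing step cannot close the gap. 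The standard way to remove the transcript (in the Maurer/Ahlswede--Csisz\'ar tradition) is an induction over communication rounds, using at each step that the newly-sent message is a deterministic function of the speaker's observation and the prior transcript, so that $I(\cdot;\cdot \mid C^{k}) \le I(\cdot;\cdot \mid C^{k-1})$. You would need this round-by-round argument rather than a Markov-chain/DPI appeal.

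The paper's own proof of Corollary~\ref{cor1} sidesteps the issue entirely: it does not instantiate Theorem~\ref{thm: key agreement}. Instead it builds a fresh Fano-plus-secrecy chain of inequalities, tailored to the one-way public-discussion structure fixed in the setup (the MAC is used $n$ times, then each $\mathbf{C}_i$ is sent once, with $\mathbf{C}_i = f_i(S_i, E^n)$). Steps $(c)$ and $(d)$ of inequality~\eqref{eq: upper bound: Salimi} use exactly this functional structure of $\mathbf{C}_i$ to drop or absorb the transcript conditioning, and step $(f)$ then single-letterizes via the memoryless property in the same way you intended. So your overall plan (Fano, secrecy, single-letterization) matches the paper's, but you route it through Theorem~\ref{thm: key agreement} where the paper does not, and the mechanism you propose for eliminating the transcript conditioning does not work as stated.
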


\begin{proof}
    In Appendix \ref{App: proof-cor: cor1}.
\end{proof}

Now, we present the main theorems of the paper. The OT achievable rate pairs and OT capacity are defined in Definition \ref{def: OT-MAC}. The OT capacity region is denoted by $\mathcal{C}_{\text{OT}}$. 
\begin{theorem}[honest-but-curious players]\label{thm: OT capacity}
    The OT capacity region of DM-MAC for honest-but-curious parties, is such that (upper bound):
    \begin{align}%
\mathcal{C}_{\text{OT}}&\subseteq
\left\{ \begin{array}{rl}
  (R_1,R_2) \,:\;
	R_1 &\leq  \max_{P_{X_1}P_{X_2}}\min\{ I(X_1;Y|X_2), H(X_1|Y)\}\\
  R_2 & \leq \max_{P_{X_1}P_{X_2}}\min\{ I(X_2;Y|X_1),H(X_2|Y)\}\\
	R_1+R_2 & \leq \max_{P_{X_1}P_{X_2}}\min\{I(X_1,X_2;Y),H(X_1,X_2|Y)\}
	\end{array}
\right\} \,,
\label{eq:inRetx_In2}
\end{align}
for some distribution $p(x_1)p(x_2)$ on $\mathcal{X}_1\times\mathcal{X}_2$.
\end{theorem}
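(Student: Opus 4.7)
The plan is to derive the outer bound via two complementary converse arguments---one yielding the mutual-information constraints and one yielding the conditional-entropy constraints---and to combine them by taking a per-coordinate minimum so as to produce the form \eqref{eq:inRetx_In2}. Throughout I would fix a generic $(n, k_1, k_2)$ protocol satisfying \eqref{goals: MAC-nColl-1}--\eqref{goals: MAC-nColl-3}, let $n \to \infty$, and keep only the surviving terms.

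For the mutual-information family, I would reduce OT to secret-key agreement as follows. After the OT protocol terminates, I append one round in which Bob broadcasts $(Z_1, Z_2)$ over the noiseless public channel. Alice-$i$ then recovers $M_{iZ_i}$ from her private inputs together with the broadcast, while Bob already holds $\hat M_{iZ_i}$ from his final view $V$. The two parties now share a common key $S_i = M_{iZ_i}$ of length $k_i$; the key is uniform because $M_{i0}, M_{i1}$ are independent of $Z_i$, and condition \eqref{goals: MAC-nColl-1} guarantees agreement with vanishing error. Invoking Corollary~\ref{cor1} with constant $E$ then delivers
\begin{align*}
R_1 &\le \max_{P_{X_1}P_{X_2}} I(X_1;Y|X_2), \\
R_2 &\le \max_{P_{X_1}P_{X_2}} I(X_2;Y|X_1), \\
R_1 + R_2 &\le \max_{P_{X_1}P_{X_2}} I(X_1,X_2;Y),
\end{align*}
where the restriction to product input distributions is enforced by the non-colluding assumption of Remark~\ref{rem}.

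For the entropy family, I would begin from $H(M_{i\overline{Z}_i}|V) \ge k_i - o(n)$ and its joint analog, which follow from uniformity of the messages together with the obliviousness condition \eqref{goals: MAC-nColl-2}. The target is to show that the hidden message can only be buried in the randomness of $X_i^n$ that is unresolved by Bob's channel outputs, giving $H(M_{i\overline{Z}_i}|V) \le H(X_i^n|Y^n) + o(n)$ and its joint counterpart. The key tool here is Lemma~\ref{lemma: Rudolf}: because Alice-$i$ is oblivious to $Z_i$ through \eqref{goals: MAC-nColl-3}, it shows that conditioning Bob's equivocation on $Z_i = 0$ versus $Z_i = 1$ shifts by only a vanishing amount, so I may swap the role of the hidden and the revealed message and then invoke Fano's inequality on the revealed one. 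A final single-letterization with a time-sharing variable $T$ uniform on $\{1, \dots, n\}$ and the memoryless structure of the MAC yields $H(X_1^n, X_2^n|Y^n) \le n \max_{P_{X_1} P_{X_2}} H(X_1, X_2 | Y)$ and its individual versions, closing the bound.

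The main obstacle sits squarely in the swap step of the entropy family: since the public transcript $\mathbf{C}$ can be arbitrarily long, one must argue carefully that Alice-$i$ cannot use it to mask $M_{i\overline{Z}_i}$ beyond the channel's equivocation budget $H(X_i|Y)$. Lemma~\ref{lemma: Rudolf} combined with the honest-but-curious assumption is what makes this work, by ruling out adversarial deviation in the transcript and letting the chosen message be tied to Bob's view through Fano's inequality; together with the non-colluding assumption, which forces the maximizing input distribution to factor as $P_{X_1}P_{X_2}$, this completes the converse.
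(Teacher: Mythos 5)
Your high-level plan matches the paper's: derive the mutual-information constraints from a secret-key-agreement converse and the conditional-entropy constraints from an obliviousness-plus-equivocation argument, then intersect. But both branches have a gap that would stall a careful write-up.

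In the mutual-information branch, you reduce to Corollary~\ref{cor1} by appending a round in which Bob broadcasts $(Z_1,Z_2)$ and taking $S_i = M_{iZ_i}$ as the key. You verify uniformity and agreement, but Corollary~\ref{cor1} also requires the secrecy condition \eqref{SKA-MAC-8}, i.e.\ $I(S_i;\mathbf{C},Z_1,Z_2)=o(n)$ for a virtual eavesdropper who sees the whole public transcript plus the broadcast choices. Nothing in \eqref{goals: MAC-nColl-1}--\eqref{goals: MAC-nColl-3} asserts that the \emph{selected} message is hidden from such an eavesdropper; the OT security conditions only constrain what the two players learn. To get secrecy you must go through exactly the chain the paper uses: from \eqref{goals: MAC-nColl-2} one first obtains $I(M_{i\overline{Z}_i};\mathbf{C}_i\mid Z_i,Z_{\overline{i}})=o(n)$, and then Lemma~\ref{lemma: Rudolf} applied to \eqref{goals: MAC-nColl-3} shows that this quantity hardly changes when the conditioning value of $Z_i$ is flipped, so the same bound holds for the message that, after the swap, plays the role of $M_{iZ_i}$. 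Omitting this swap leaves the secrecy requirement of Corollary~\ref{cor1} unverified.

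In the conditional-entropy branch you correctly identify the obstacle---the arbitrarily long transcript $\mathbf{C}$ could in principle hide $M_{i\overline{Z}_i}$ beyond the channel's equivocation budget---but you attribute its resolution to Lemma~\ref{lemma: Rudolf} together with the honest-but-curious assumption. That is not what closes the gap. Lemma~\ref{lemma: Rudolf} only controls the dependence on the conditioning value of $Z_i$; it says nothing about whether the transcript can leak $(M_{i0},M_{i1},R_{A_i})$ to Bob's side beyond what $X_i^n$ already determines. The paper closes the gap with a separate inductive chain-rule argument over the alternating rounds of public communication, establishing $I(M_{i0},M_{i1},R_{A_i};\,Y^n,R_B\mid X_i^n,\mathbf{C}_i,Z_i,Z_{\overline{i}})=0$ by peeling off one message at a time and using that each message is a deterministic (or local-randomness) function of the sending party's current view. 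Only after that, together with Fano and the Lemma~\ref{lemma: Rudolf} swap, does the bound $k_i\le H(X_i^n\mid Y^n)+o(n)$ and its joint analogue follow. Without that inductive Markov step your entropy bound is not proved, even informally.
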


\begin{proof}
    In Appendix \ref{App: proof-thm: OT Capacity-upper}.
\end{proof}
\begin{theorem}[honest-but-curious players]\label{thm: OT capacity2}
    The OT capacity region of DM-MAC reduced to the noisy $\text{SU-SBC}_{p,W}$ for honest-but-curious parties, is:
    \begin{align}%
\mathcal{C}_{\text{OT}}& =
\left\{ \begin{array}{rl}
  (R_1,R_2) \,:\;
	R_1 &\leq  \max_{P_{X_1}P_{X_2}} I(X_1;Y|X_2)\\
  R_2 & \leq \max_{P_{X_1}P_{X_2}} I(X_2;Y|X_1)\\
	R_1+R_2 & \leq \max_{P_{X_1}P_{X_2}} I(X_1,X_2;Y)
	\end{array}
\right\} ,
\end{align}
        for some distribution $p(x_1)p(x_2)$ on $\mathcal{X}_1\times\mathcal{X}_2$.
\end{theorem}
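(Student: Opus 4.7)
The plan is to prove Theorem~\ref{thm: OT capacity2} by establishing matching converse and achievability bounds for the $\text{SU-SBC}_{p,W}$ correlation of Definition~\ref{SBC in MAC-reduced}. For the converse, I would invoke Theorem~\ref{thm: OT capacity}, whose outer bound takes the form $\min\{I(X_i;Y|X_{\bar{i}}), H(X_i|Y)\}$ on individual rates and the analogous minimum on the sum rate. The remaining task is to verify that, on the $\text{SU-SBC}_{p,W}$ with $p<\tfrac{1}{2}$, the mutual-information terms are always the binding constraints. Using the decomposition of $Y$ into the fully-erased event $\{Y \in \mathcal{E}\}$ (probability $1-p$) and the indication event $\{Y \notin \mathcal{E}\}$ (probability $p$), together with uniform binary inputs, a direct computation gives $H(X_i|Y) \geq 1-p$ and $H(X_1,X_2|Y) \geq 2(1-p)$, while $I(X_i;Y|X_{\bar{i}}) \leq p$ and $I(X_1,X_2;Y) \leq 2p$. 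Since $p < 1-p$ under the assumption $p<\tfrac{1}{2}$, the mutual-information terms dominate, recovering the claimed region. As an alternative route, I can also derive the converse by reducing OT to secret-key agreement: for fixed choices $(Z_1,Z_2)$, the pairs $(S_i,S'_i) = (M_{iZ_i},\hat M_{iZ_i})$ together with the public transcript $\mathbf{C}$ satisfy the key-agreement conditions \eqref{SKA-MAC-1}--\eqref{SKA-MAC-9}, and Corollary~\ref{cor1} yields exactly the claimed bound.

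For achievability, I would use Protocol~\ref{protocol: Main}, whose correctness and privacy for honest-but-curious players are already established in Lemma~\ref{lemm: conditions}. Let $\eta>0$ be arbitrary; by the weak law of large numbers, the sets $\overline{E}_i$ and $E_i$ have cardinalities concentrated around $pn$ and $(1-p)n$ respectively, so the abort event of step~2 vanishes asymptotically whenever $r_i \leq p - \eta$. Each Alice-$i$ extracts a key from the $(p-\eta)n$ bits indexed by $S_{i Z_i}$ via the two-universal hash $\kappa_{i Z_i}$, which by the distributed leftover hash lemma (Lemma~\ref{DLHL}) produces an $\epsilon$-uniform key of length $k_i = r_i n$ provided the conditional min-entropy of those bits given Bob's residual view is at least $k_i + 2\log(1/\epsilon)$. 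Together with the typicality-based decoding of step~5 (validated via Lemma~\ref{lemm: test unit}) this yields per-sender rates approaching $\max_{P_{X_1}P_{X_2}} I(X_i;Y|X_{\bar{i}})$, which under uniform binary input equals $p$. The sum-rate claim is obtained by applying Lemma~\ref{DLHL} jointly to both senders with $m=2$, aggregating the per-sender min-entropies through inequality~\eqref{eq: holenstein-1 + holenstein-1}.

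The main technical obstacle will be verifying the joint conditional min-entropy hypothesis of Lemma~\ref{DLHL}, namely that
\[
H_\infty^{\epsilon'}\!\bigl(\mathbf{X}_1|_{S_{1\bar{Z}_1}}, \mathbf{X}_2|_{S_{2\bar{Z}_2}} \,\big|\, \mathbf{X}_1|_{S_{1 Z_1}}, \mathbf{X}_2|_{S_{2 Z_2}}, \mathbf{Y}, S_{10}, S_{11}, S_{20}, S_{21}\bigr) \geq k_1 + k_2 + 2\log(1/\epsilon).
\]
I would reduce this to Lemma~\ref{lemm: Holenstein}, which converts Shannon conditional entropies into smooth min-entropies in the i.i.d.\ setting, exploiting that on coordinates in $E_i$ the channel output $\mathbf{Y}$ carries no information about $\mathbf{X}_i$. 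The delicate point is that, unlike the single-sender case, Bob's view simultaneously contains information about both senders through the non-erased coordinates, so one must carefully isolate the contribution of the $S_{i\bar{Z}_i}$ positions, where Bob provably has no information about the unselected string. Once this joint min-entropy bound is established, Lemma~\ref{DLHL} ensures that $\bigl(\kappa_{1\bar{Z}_1}(\mathbf{X}_1|_{S_{1\bar{Z}_1}}), \kappa_{2\bar{Z}_2}(\mathbf{X}_2|_{S_{2\bar{Z}_2}})\bigr)$ is close to uniform with respect to Bob's entire public-and-channel view, so the one-time-pads $M_{i\bar{Z}_i}\oplus \kappa_{i\bar{Z}_i}(\mathbf{X}_i|_{S_{i\bar{Z}_i}})$ leak asymptotically no information about $M_{i\bar{Z}_i}$, completing the achievability proof.
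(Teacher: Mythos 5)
Your proposal takes essentially the same route as the paper's: achievability via Protocol~\ref{protocol: Main}, with the smooth conditional min-entropy bounds built from the Holenstein inequalities and Lemma~\ref{lemm: Holenstein}, privacy via the distributed leftover hash lemma (Lemma~\ref{DLHL}), and the converse inherited from Theorem~\ref{thm: OT capacity}. Your added explicit verification that, for $p<\tfrac12$ and near-uniform inputs, $I(X_i;Y|X_{\bar i}) \le p < 1-p \le H(X_i|Y)$ so the mutual-information terms are binding in the Theorem~\ref{thm: OT capacity} outer bound is a worthwhile clarification of a step the paper treats only implicitly when it asserts that the lower and upper bounds coincide.
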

\begin{proof}
    In Appendix \ref{App: proof-thm: OT capacity}.
\end{proof}
A rate pair $(R_1,R_2)$ is achievable if there exists
a sequence of $\binom{2}{1}$-OT$^{k_1,k_2}$ protocols satisfying \eqref{goals: MAC-nColl-1}-\eqref{goals: MAC-nColl-3}
such that\linebreak $\lim_{n\to\infty}(\frac{k_1}{n},\frac{k_2}{n})= (R_1, R_2)$. 
\begin{theorem}[malicious Bob]\label{thm: OT capacity-malicious}
    An achievable rate region for OT $(\mathcal{R}_{\text{OT}})$ over the DM-MAC reduced to the noisy $\text{SU-SBC}_{p,W}$ with malicious Bob is:
    \begin{align}%
\mathcal{R}_{\text{OT}}(P_{Y|X_1,X_2})& =\bigcup_{p_{X_1} p_{X_2} }
\left\{ \begin{array}{rl}
  (R_1,R_2) \,:\;
	R_1 & <  \frac12\max_{P_{X_1}P_{X_2}}\{ I(X_1;Y|X_2)+I(X_1;X_2|Y)\}\\
  R_2 & <  \frac12\max_{P_{X_1}P_{X_2}}\{ I(X_2;Y|X_1)+I(X_1;X_2|Y)\}\\
	 R_1+R_2 & <  \frac12\max_{P_{X_1}P_{X_2}} I(X_1, X_2;Y)
	\end{array}
\right\} ,
\end{align}
        for some distribution $p(x_1)p(x_2)$ on $\mathcal{X}_1\times\mathcal{X}_2$.
\end{theorem}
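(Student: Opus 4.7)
The plan is to adapt Protocol~\ref{protocol: Main} to a malicious Bob by exploiting the slight-unfairness control of Lemma~\ref{lemm: test unit} and then applying the distributed leftover hash lemma (Lemma~\ref{DLHL}) jointly across the two senders. The protocol keeps its skeleton: each Alice-$i$ transmits an i.i.d.\ uniform $n$-string $X_i^n$ through $\text{SU-SBC}_{p,W}$; Bob declares two disjoint index sets $S_{i0}, S_{i1}$ of size $(p-\eta)n$ each over the noiseless public channel; Alice-$i$ returns the masked messages $M_{ij}\oplus\kappa_{ij}(X_i|_{S_{ij}})$ together with the consistency hashes $h_{ij}(R_{ij}, X_i|_{S_{ij}})$. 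The only addition is a global joint-typicality test of the full transcript with respect to $P_{X_1X_2Y}$; by Lemma~\ref{lemm: test unit}, any deviation of Bob from the SBC statistics in more than $\delta n$ positions is caught with probability $1-o(1)$.

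Correctness~\eqref{goals: MAC-nColl-1} and sender-privacy~\eqref{goals: MAC-nColl-3} follow from the arguments used in the proof of Lemma~\ref{lemm: conditions}: conditioned on passing the test, the output restriction $Y|_{S_{1Z_1}\cup S_{2Z_2}}$ is jointly typical with $(X_1|_{S_{1Z_1}}, X_2|_{S_{2Z_2}})$ under $W$ and is recovered by the joint-typicality decoder of Protocol~\ref{protocol: Main}, while Alice-$i$ never observes $Y^n$ so that $I(Z_i;U_i)=0$. The technical heart is receiver-privacy~\eqref{goals: MAC-nColl-2}, which is where the factor $\tfrac12$ arises. Because the two sets $S_{i0}, S_{i1}$ are disjoint and of the same size $(p-\eta)n$, while the typical number of non-erased positions per sender is only $pn$, a pigeonhole argument (combined with the cap $\delta n$ from the typicality test) forces each set to contain at least $\bigl(\tfrac{p}{2}-\eta-\delta\bigr)n$ fully-erased coordinates. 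Applying Lemma~\ref{lemm: Holenstein} to these i.i.d.\ erased coordinates and the chain rule~\eqref{eq: holenstein-1 + holenstein-1} across the two senders yields
\[
H_\infty^\epsilon\bigl((X_1|_{S_{1\bar Z_1}}, X_2|_{S_{2\bar Z_2}}) \mid V\bigr) \;\geq\; \tfrac{n}{2}\,I(X_1,X_2;Y) - o(n),
\]
together with the single-sender bound
\[
H_\infty^\epsilon\bigl(X_i|_{S_{i\bar Z_i}} \mid V, X_{\bar i}^n\bigr) \;\geq\; \tfrac{n}{2}\bigl(I(X_i;Y\mid X_{\bar i}) + I(X_1;X_2\mid Y)\bigr) - o(n),
\]
obtained by revealing $X_{\bar i}^n$ to the analysis (the worst case under non-collusion) and using \eqref{eq: holenstein-1 + holenstein-1} to split off the cross-sender contribution $I(X_1;X_2\mid Y)$.

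With these lower bounds in hand I invoke Lemma~\ref{DLHL} with $m=2$, $\mathcal{S}=\{(1,\bar Z_1),(2,\bar Z_2)\}$, hash families $\kappa_{1\bar Z_1}, \kappa_{2\bar Z_2}$, and auxiliary $Z=V$: whenever $n(r_1+r_2)+2\log(1/\epsilon)$ lies strictly below the joint smooth min-entropy above, the pair $\bigl(\kappa_{1\bar Z_1}(X_1|_{S_{1\bar Z_1}}),\kappa_{2\bar Z_2}(X_2|_{S_{2\bar Z_2}})\bigr)$ is $O(\epsilon)$-close to uniform independent of $V$, yielding $I(M_{1\bar Z_1},M_{2\bar Z_2};V)\to 0$ and hence~\eqref{goals: MAC-nColl-2}. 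The same lemma applied with $|\mathcal{S}|=1$ and auxiliary $Z=(V,X_{\bar i}^n)$ gives the individual-rate constraints. Letting $\eta,\delta,\epsilon\to 0$ in that order recovers the strict inequalities of the theorem.

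The main obstacle I anticipate is the uniform control of the pigeonhole step: since a malicious Bob chooses his partition $(S_{i0},S_{i1})$ adaptively after observing $Y^n$, the balance claim on erased coordinates must hold against every strategy, not merely on average. I expect to handle this by first conditioning on the joint-typicality event, which caps Bob's useful deviation at $\delta n$ coordinates via Lemma~\ref{lemm: test unit}, and then using Lemma~\ref{lemm: min-entropy vs smooth} to convert between min-entropy and smooth min-entropy at a cost of $\log(1/\epsilon)$ that is absorbed into the vanishing slack.
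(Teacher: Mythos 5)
Your proposal captures the correct skeleton --- adapt Protocol~\ref{protocol: Main} for a malicious Bob, use the typicality control (Lemma~\ref{lemm: test unit}) to cap his deviation, derive a factor~$\tfrac12$ from the fact that the non-erasures can be spread over both index sets, then invoke the distributed leftover hash lemma (Lemma~\ref{DLHL}) --- and this is indeed the strategy of the paper. However, two of the central technical steps are not correct as stated and would not yield the target rate region.

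First, the pigeonhole claim. You assert that \emph{each} of the two disjoint sets $S_{i0},S_{i1}$ (each of size $(p-\eta)n$) must contain at least $(\tfrac{p}{2}-\eta-\delta)n$ erased coordinates. This is false in the worst case: a malicious Bob can place all $(p+\eta)n$ non-erasures into one set, leaving that set with essentially no erasures. What the paper's argument actually needs is that the complement of Bob's \emph{effective} choice $S_{i\overline{Z}'_i}$ (the set with the larger number of erasures) has at least $\tfrac12(p-3\eta)n$ erasures, because the balanced split is the worst case for the adversary trying to extract information about \emph{both} strings. This distinction matters; the downstream DLHL application is valid only for the effective-choice set, not for both simultaneously.

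Second, and more seriously, the way you generate the $I(X_1;X_2\mid Y)$ term. You claim to obtain
$H_\infty^\epsilon(X_i|_{S_{i\bar Z_i}} \mid V, X_{\bar i}^n) \geq \tfrac{n}{2}\bigl(I(X_i;Y\mid X_{\bar i}) + I(X_1;X_2\mid Y)\bigr) - o(n)$ "by revealing $X_{\bar i}^n$." But conditioning on $X_{\bar i}^n$ can only decrease the smooth min-entropy; it cannot add a term. In the paper's proof the privacy bound is obtained by introducing an auxiliary variable $V$ that equals an erasure with probability $\tfrac12-\eta$ and equals the $W$-output $Z$ otherwise, which yields $H_\infty^{\epsilon+\epsilon'}(X_1|_{S_{1\bar Z_1}}\mid \mathbf{Y}|_{S_{1\bar Z_1}}) \gtrsim \tfrac{p}{2} n\bigl(H(X_1)+H(X_1\mid Z)\bigr)$ --- the non-erased half contributes $H(X_1\mid Z)$, not zero, so restricting Lemma~\ref{lemm: Holenstein} to "i.i.d.\ erased coordinates" as you propose would discard this half. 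The $I(X_1;X_2\mid Y)$ term then appears only after subtracting the consistency-hash cost $s_1 n > p\,H(X_1\mid X_2,Z)\,n$: the privacy side sees $H(X_1\mid Z)$ (Bob knows $Y$ but is not given $X_2$) while the correctness side pays $H(X_1\mid X_2,Z)$ (the joint decoder sees both outputs), and the gap $p\bigl(H(X_1\mid Z)-H(X_1\mid X_2,Z)\bigr)=I(X_1;X_2\mid Y)$ is what remains in the rate. Your proposal never accounts for the $s_i$ budget at all, and your joint bound $H_\infty^\epsilon\bigl((X_1|_{S_{1\bar Z_1}},X_2|_{S_{2\bar Z_2}})\mid V\bigr) \geq \tfrac{n}{2}I(X_1,X_2;Y)-o(n)$ conflates the raw min-entropy, which is $\tfrac{p}{2}n\bigl(H(X_1)+H(X_2)+H(X_1,X_2\mid Z)\bigr)$, with the net rate that only appears after subtracting $s_1 n + s_2 n$. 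Without the hash-cost bookkeeping the numbers do not come out.
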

If $X_1\to Y\to X_2$ is a Markov chain, then the above achievable rate region is simplified to: 

 \begin{align}%
\mathcal{R}_{\text{OT}}(P_{Y|X_1,X_2})& =\bigcup_{p_{X_1} p_{X_2} }
\left\{ \begin{array}{rl}
  (R_1,R_2) \,:\;
	R_1 & <  \frac12\max_{P_{X_1}P_{X_2}}I(X_1;Y|X_2)\\
   R_2 & <  \frac12\max_{P_{X_1}P_{X_2}} I(X_2;Y|X_1)\\
	 R_1+R_2 & <  \frac12\max_{P_{X_1}P_{X_2}} I(X_1, X_2;Y)
	\end{array}
\right\} ,
\end{align}
\begin{proof}
    In Appendix \ref{App: proof-thm: OT capacity-malicious}.
\end{proof}

\section{Examples}\label{Sec: examples}
\subsection{Noiseless Binary
Adder Channel (BAC)}\label{ex1}
Consider the following OT setting over the noiseless BAC~\cite{chou}: two senders, Alice-1 and Alice-2, each hold two binary strings. Bob must select one string from each sender, with the unchosen strings remaining hidden. Communication occurs via the BAC, where the channel output is given by the sum of the inputs, $ Y = X_1 + X_2 $. This channel deterministically reveals the inputs when they are equal ($0$ or $1$), but produces an erasure symbol ($e$) when they differ, making the output ambiguous. Erasures thus occur in two of the four possible input pairs.  

For the non-colluding, honest-but-curious case, the OT capacity satisfies  
\[
R_1 + R_2 \le \max_{P_{X_1}P_{X_2}} H(X_1, X_2 \mid Y) = \frac{1}{2}.
\]  
If Bob cheats, for example, by altering non-erased outputs into the erasure set $\mathcal{E}$ to extract information about the unselected strings, Alice-1 and Alice-2 can detect such manipulation via a typicality test and abort the protocol. However, any such biasing capability would violate perfect security, as per Cleve’s impossibility result (FCT).

\subsection{A more general BE-MAC} \label{ex2}
Consider a general noisy DM-MAC with two senders ($\mathcal{X}_1\times\mathcal{X}_2\to\mathcal{Y}$, $X_1, X_2\sim \text{Bern}(\frac12)$) as described in Definition \ref{thm2'}. Now, suppose that this channel is reducible to the $\text{SU-SBC}_{p,W}$. Let the probability of erasure $1-p = 0.6$. Then, the OT capacity region is $\{R_1\leq 0.4, R_2\leq 0.4, R_1+R_2\leq 0.8\}$ if the players are honest but curious. The OT capacity for the case of bounded malicious Bob is unknown, but an inner bound on the OT capacity is $\{R_1\leq 0.2, R_2\leq 0.2, R_1+R_2\leq 0.4\}$. Since the cheating capability of Bob is bounded within a certain range, imperfect OT is still possible, but the region is strictly smaller than the case of honest Bob.

\parbreak 
As is clear, the sum rate of the OT capacity of $\text{SU-SBC}_{p,W}$ and $\text{SU-SBC}_{p,W,W'}$ is redundant, independent of the probability of erasure. Now consider the following example.

\subsection{Noisy BAC}\label{ex3}
Let $X_1, X_2 \sim \mathrm{Bern}\!\left(\frac{1}{2}\right)$ be independent.
The channel output $Y$ is given by
\[
Y =
\begin{cases}
X_1 + X_2, & \text{with probability } p \\
e, & \text{with probability } 1-p
\end{cases},
\]
where $e$ denotes an erasure symbol. The OT capacity region for the honest-but-curious case is as follows:
\[
\mathcal{C}_{\text{OT}} = \left\{
(R_1,R_2) \ \middle| \
\begin{aligned}
& R_1 \le p \\
& R_2 \le p \\
& R_1 + R_2 \le 3\,p/2
\end{aligned}
\right\},
\]
and for the case of malicious Bob, the lower bound on OT capacity is:
\[
\mathcal{R}_{\text{OT}} = \left\{
(R_1,R_2) \ \middle| \
\begin{aligned}
& R_1 < 3\,p/4 \\
& R_2 < 3\,p/4 
\end{aligned}
\right\}.
\]

The key distinction between this example and example~\ref{ex1} is that independent OT is not feasible over the noiseless BAC, as the channel lacks sufficient intrinsic uncertainty to support independent OT. In contrast, example~\ref{ex3} presents a different scenario in which the channel exhibits two distinct sources of uncertainty:  
(i) the first arises from the noisy correlation $\text{SU-SBC}_{p,W}$, and  
(ii) the second stems from the inherent ambiguity induced by the channel structure itself.

\section{Conclusion}\label{Sec: conc}

We investigated bounds on the oblivious transfer (OT) capacity of noisy multiple-access channels (MACs), focusing on secure multiparty computation with two non-colluding senders and a single receiver. We proposed a protocol that is correct and secure against honest-but-curious parties and showed that it remains correct and secure even when the receiver is malicious, although the exact OT capacity in this adversarial setting remains open. For honest-but-curious players, we established the OT capacity region for a reduced MAC model, achieving optimal rates. For a malicious receiver, we characterized a feasible achievable rate region. 

This work provides a comprehensive treatment of noisy MACs, introducing reductions to symmetric basic correlations (SBCs) as defined in~\cite{Winter1} and extending OT protocols to multi-sender configurations. Key contributions include a precise characterization of OT capacity in the honest-but-curious regime, the use of information-theoretic bounds to analyze performance, and significant progress toward understanding protocols in the presence of adversarial behavior.

\parbreak

An intriguing direction for future research is the study of OT over MACs with colluding parties. If one or both senders may collude with the receiver—a scenario more reflective of practical systems—the secrecy requirements become substantially more stringent. Moreover, in~\cite{Winter1}, the authors present a protocol that remains correct and secure against a malicious sender under slightly unfair correlations, albeit without achieving positive rates. In the MAC setting, if both senders cheat, it is plausible that no positive OT rate can be achieved. However, the case in which only one sender is malicious remains largely unexplored. Investigating these challenging scenarios forms the foundation of our ongoing research efforts.

\section*{Acknowledgment}

The authors gratefully acknowledge valuable discussions with 
Holger Boche, Rémi A. Chou, Pin-Hsun Lin, and Andreas Winter during the course of this research.

CD and HA acknowledge financial support from the Federal Ministry of Research, Technology and Space (BMFTR) within the program \emph{Souver\"an. Digital. Vernetzt.} as part of the joint project \textbf{6G-life$^2$} (project ID~16KIS2415). They also gratefully acknowledge funding from the BMBF quantum programs \textbf{QUIET} (grant~16KISQ0170) and \textbf{QD-CamNetz} (grant~16KISQ169). Their research is further supported by the German Research Foundation (DFG) through the project \emph{Post-Shannon Theory and Implementation} (grant~DE~1915/2-1).

{\appendices
\section{Proof of Lemma \ref{lemm: min-entropy vs smooth} \label{App: proof-lemm: min-entropy vs smooth}}

\subsection{Upper bound}
    By definition, $H_\infty^\epsilon(X|Y)$ considers the min-entropy over all distributions $P_{X'Y'}$ that are $\epsilon$-close to the original distribution $P_{XY}$. This means that:
    \[
    H_\infty^\epsilon(X|Y) \geq H_\infty(P_{X'|Y'}) \quad \text{for any such } P_{X'Y'}.
    \]
    
    Since the original distribution $P_{XY}$ satisfies $\|P_{XY} - P_{X'Y'}\|_1 = 0 \leq \epsilon$, it is included in the set of distributions over which the supremum is taken. Therefore:
    \[
    H_\infty^\epsilon(X|Y) \geq H_\infty(P_{X|Y}),
    \]
    where $P_{X|Y}$ is the conditional distribution derived from $P_{XY}$. From the definition of $H_\infty(X|Y)$:
    \[
    H_\infty(P_{X|Y}) = H_\infty(X|Y).
    \]
    
    Thus, combining the inequalities:
    \[
    H_\infty^\epsilon(X|Y) \geq H_\infty(X|Y).
    \]
\subsection{Lower bound}
    The relaxation ensures that the min-entropy $H_\infty^\epsilon(X|Y)$ can only increase compared to $H_\infty(X|Y)$, as it maximizes over a larger set of distributions. For any $P_{X'Y'}$, it holds that:
    \[
    H_\infty^\epsilon(X|Y) = -\log \max_{P_{X'Y'}} \max_y \sum_x P_{X'|Y'=y}(x).
    \]
    This is an optimized version of the standard $H_\infty(X|Y)$. However, we consider the $\epsilon$-closeness in terms of probabilities to connect these values. Using standard smoothing arguments, the worst-case difference in probabilities for any outcome is bounded by $\epsilon$. Specifically:
    \[
     P_{X|Y=y}(x).\frac{1}{1-\epsilon} \leq P_{X'|Y'=y}(x),  \]
    taking logs:
    \[
    -\log P_{X|Y=y}(x) +\log (1-\epsilon)\geq -\log \left(  P_{X'|Y'=y}(x)  \right),
    \]
    maximizing over all $P_{X'Y'}$:
    \[
    H_\infty(X|Y) \geq H_\infty^\epsilon(X|Y) - \log(1 / \epsilon).
    \]
    This accounts for the worst-case adjustment introduced by smoothing. \qed
\section{Proof of Lemma \ref{lemm: Holenstein} \label{App: proof-lemm: Holstein}}

Let $P_{XY}$ be a probability distribution over $\mathcal{X} \times \mathcal{Y}$, $P_{X^n Y^n} := (P_{XY})^n$ the $n$-wise direct product. Then, for any $\delta \in [0, \log(|\mathcal{X}|)]$ and $(x^n, y^n)$ chosen according to $P_{X^n Y^n}$, let $(X_i, Y_i) \sim P_{XY}$ i.i.d. for $i = 0, \ldots, n-1$. Define:
\[
Z_i := \log \left( \frac{1}{P_{X|Y}(X_i | Y_i)} \right)
\quad \Rightarrow \quad
\sum_{i=0}^{n-1} Z_i = \log \left( \frac{1}{P_{X^n|Y^n}(X^n | Y^n)} \right).
\]
We know that  $\mathbb{E}[Z_i] = H(X|Y)$, so the total expectation is:
\[
\mathbb{E}\left[ \sum_{i=0}^{n-1} Z_i \right] = n H(X|Y).
\]
Each $Z_i \in [0, \log(|\mathcal{X}|)]$ because: (i) For fixed $y$, $P_{X|Y}(x|y) \geq 0$, and (ii) $\frac{1}{P_{X|Y}(x|y)} \leq |\mathcal{X}|$. So:
    \[
    0 \leq \log \left( \frac{1}{P_{X|Y}(x|y)} \right) \leq \log(|\mathcal{X}|).
    \]

\medskip

Let $Z := \sum_{i=0}^{n-1} Z_i$. Then $Z$ is a sum of $n$ i.i.d. bounded random variables in $[0, \log(|\mathcal{X}|)]$. Hoeffding’s inequality \cite{Hoeffding} states that for independent $Z_i \in [a, b]$:
\[
\Pr\left[ \frac{1}{n} \sum_{i=0}^{n-1} Z_i - \mathbb{E}[Z_i] \geq \delta \right]
\leq \exp \left( - \frac{2n^2 \delta^2}{n (b - a)^2} \right)
= \exp \left( - \frac{2n \delta^2}{(b - a)^2} \right).
\]

Here, $a = 0$, $b = \log(|\mathcal{X}|)$, so:

\begin{align}\label{looser baound}
    \Pr\left[ Z \geq n(H(X|Y) + \delta) \right]&\leq \exp\left( - \frac{2n \delta^2}{\log^2(|\mathcal{X}|)} \right)\notag\\
    & = 2^{ - \frac{2n \delta^2}{\log^2(|\mathcal{X}|)} \cdot \log_2 e }\notag\\
    & \leq 2^{ - \frac{n \delta^2}{16 \log^2(|\mathcal{X}|)} }.
\end{align}
In the last step, we used only a slightly \emph{looser} bound.

Now, set $\delta := 4 \sqrt{ \frac{\log(1/\epsilon)}{n} } \log(|\mathcal{X}|)$.
Then, from \eqref{looser baound}:
\[
\Pr_{(U, V) \sim P_{X^n Y^n}} \left[ \log \left( \frac{1}{P_{X^n|Y^n}(U|V)} \right) \geq n(H(X|Y) + \delta) \right] \leq \epsilon.
\]

Thus, with probability at least $1 - \epsilon$, we have:
\[
P_{X^n|Y^n}(x^n | y^n) \geq 2^{-nH(X|Y) - \delta}.
\]

Therefore, the $\epsilon$-smooth conditional min-entropy satisfies:
\[
H^\epsilon_\infty(X^n | Y^n) \geq nH(X|Y) - \delta = nH(X|Y) - 4 \sqrt{n \log(1/\epsilon)} \log(|\mathcal{X}|).
\]

This completes the proof. \qed
\section{Proof of Lemma \ref{lemm: test unit}\label{App: proof-lemm: test unit}}

This lemma is an extended version of \cite[Lemma 5]{Winter1}. It evaluates the likelihood of passing a "typicality test" when input sequences are fed into a noisy MAC and output sequences are generated. The lemma bounds the probability that the channel's output remains consistent with a typical input-output relationship, even under deviations in the input strings (e.g., due to cheating). In simpler terms, this lemma demonstrates that if someone attempts to manipulate the inputs or outputs, the probability of such manipulations going undetected is exponentially small under specific conditions.

\parbreak Divide $X_1^n$ and $X_2^n$ into smaller "blocks" where manipulations occurred. Each block contains at least one incorrect symbol. Use the law of large numbers: Since $\delta n$ positions were changed, the overall likelihood of $(X_1^n, X_2^n, Z)$ remaining jointly typical decreases exponentially:

\parbreak Define the sets $\mathcal{I}_{x_1}$ and $\mathcal{I}_{x_2}$: 

\[
\pi_{x_1}:= \pi(x_1|x_1^n) = \lvert \mathcal{I}_{x_1} \rvert \qquad\text{and}\qquad \pi_{x_2}:= \pi(x_2|x_2^n) = \lvert \mathcal{I}_{x_2} \rvert 
.\]

These sets identify positions where the input symbols $x_1$ and $x_2$ appear in $X_1^n$ and $X_2^n$, respectively. Assume that the sequences $x_1^{\mathcal{I}_{x_1}}$ and $x_2^{\mathcal{I}_{x_2}}$ are manipulated. By Hamming distance properties, the cardinalates satisfy: 

\[
\pi_{x_1} \geq\frac{1}{\lvert \mathcal{X}_1 \rvert}\delta n, \qquad \pi_{x_2} \geq\frac{1}{\lvert \mathcal{X}_2 \rvert}\delta n.
\]

For the MAC output, we analyze the empirical distributions $W_{(\tilde{x}_1,\tilde{x}_2)|_{k}}$ over positions $\mathcal{I}_{x_1}\cap\mathcal{I}_{x_2}$. For the joint input-output behavior at positions $k$: 

\[
\norm{\frac{1}{\pi_{x_1}\pi_{x_2}}\sum_{k\in \mathcal{I}_{x_1}\cap\mathcal{I}_{x_2}} W_{(\tilde{x}_1,\tilde{x}_2)|_{k}} - W_{x_1,x_2}}\geq \frac{1}{\lvert\mathcal{X}_1\rvert\lvert\mathcal{X}_2\rvert}\delta^2\eta.
\]

Here, $W_{(\tilde{x}_1,\tilde{x}_2)|_{k}}$ is the empirical output distribution, and $W_{x_1,x_2}$ is the expected output distribution given inputs $(x_1,x_2)$. The deviation occurs because the joint input behavior $(x_1,x_2)$ does not match the expected channel behavior. By the \emph{pigeonhole principle}, there exists at least one output symbol $z\in\mathcal{Z}$ such that:

\[
\left\lvert\frac{1}{\pi_{x_1}\pi_{x_2}}\sum_{k\in \mathcal{I}_{x_1}\cap\mathcal{I}_{x_2}} W_{(\tilde{x}_1,\tilde{x}_2)|_{k}}(z) - W_{x_1,x_2}(z)\right\rvert\geq \frac{1}{\lvert\mathcal{X}_1\rvert\lvert\mathcal{X}_2\rvert\lvert\mathcal{Z}\rvert}\delta^2\eta.
\]
$W_{(\tilde{x}_1,\tilde{x}_2)|_{k}}$  is the probability of output $z$ at position $k$. This inequality ensures at least one output symbol $z$ where the deviation is significant.

\parbreak Now, consider the number of positions where $z$ occurs in the output sequence $Z^n$. Define:
\[
\pi(z|z^{\mathcal{I}_{x_1},\mathcal{I}_{x_2}}):= \text{count of $z$ in $Z^n$ over positions $\mathcal{I}_{x_1}\cap\mathcal{I}_{x_2}$.}
\]
Then:

\[
\left\lvert \pi(z|z^{\mathcal{I}_{x_1},\mathcal{I}_{x_2}}) - \sum_{k\in \mathcal{I}_{x_1}\cap\mathcal{I}_{x_2}} W_{(\tilde{x}_1,\tilde{x}_2)|_{k}} \right\rvert \geq  \frac{1}{2\lvert\mathcal{X}_1\rvert\lvert\mathcal{X}_2\rvert\lvert\mathcal{Z}\rvert}\delta^2\eta \pi_{x_1}\pi_{x_2}.
\]
This bounds the deviation in counts of $z$ compared to the expected behavior.

\parbreak Refine the bound further by introducing sets:
\[
\mathcal{J}_{x_1x_2y} = \left\{ 
k\in \mathcal{I}_{x_1}\cap\mathcal{I}_{x_2}: y_k = y\right\}, \qquad \pi_{x_1x_2y}:=\lvert\mathcal{J}_{x_1x_2y}\rvert.
\]
Then:
\[
\pi_{x_1x_2y}\geq \frac{1}{4\lvert\mathcal{X}_1\rvert^2\lvert\mathcal{X}_2\rvert^2\lvert\mathcal{Z}\rvert}\delta^2\eta^2 \pi_{x_1}\pi_{x_2}\geq \frac{1}{4\lvert\mathcal{X}_1\rvert^3\lvert\mathcal{X}_2\rvert^3\lvert\mathcal{Z}\rvert}\delta^4\eta^2 n.
\]

Using the Chernoff bound \cite[Lemma 4]{Winter1} completes the proof. \qed
\section{Proof of Theorem \ref{thm: impossibility-pp}}\label{App: proof-thm: impossibility-pp}

Consider the general case in Definition \ref{thm1'}, then we have (Omitting the random variable $\mathbf{C}$, since it is also considered in the final views and is assumed to be ignored by honest players): 

\parbreak\noindent\emph{Bob's privacy:} Alice should not learn the receiver's effective choice $Z'$:
\begin{equation}\label{eq.7}
     I(X;Z'|Z) = 0,
\end{equation}
\begin{equation}\label{eq.8}
    I(X;V|Z,Z',M_{Z^{'}}) = 0.
\end{equation}
This means that:
\begin{equation}\label{eq.9}
    H(M_{\overline{Z^{'}}}|V,Z,Z',M_{Z^{'}}) = H(M_{\overline{Z^{'}}}).
\end{equation}
This implies that given the information $V$, the choice $Z$, the effective choice $Z'$ and $M_{Z^{'}}$, Bob has no additional information about $M_{\overline{Z^{'}}}$. Consider the mutual information between Alice's bits and the information received by Bob $I(X;V) \geq I(M_0,M_1;V)$. Since Bob should learn $M_{Z^{'}}$ and nothing about $M_{\overline{Z^{'}}}$, we have:
\begin{align}\label{eq.10}
    I(M_{Z^{'}},M_{\overline{Z^{'}}} ; V|Z,Z',M_{Z^{'}}) =& I(M_{Z^{'}};V|Z,Z',M_{Z^{'}}) + I(M_{\overline{Z^{'}}};V|Z,Z',M_{Z^{'}}).
\end{align}
The first term should not be equal to zero because $V$ must also convey the value of  $M_{Z^{'}}$ to Bob. The second term should be zero due to the perfect secrecy of OT. So consider the second term:
\begin{align}\label{eq.11}
    I(M_{\overline{Z^{'}}};V|Z,Z',M_{Z^{'}}) & = I(M_{\overline{Z^{'}}};V|Z,Z')\notag\\
    & = H(M_{\overline{Z^{'}}}| Z,Z') - H(M_{\overline{Z^{'}}}|V, Z,Z')\notag\\
    & = H(M_{\overline{Z^{'}}}| Z,Z') - H(M_{\overline{Z^{'}}})\notag\\
    & = - I(M_{\overline{Z^{'}}}; Z,Z'),
\end{align}
where the first equality is happened due the fact that  $M_{\overline{Z^{'}}}$ and $M_{Z^{'}}$ are independent, and the third equality follows from \eqref{eq.9}. 

\parbreak As $I(M_{\overline{Z^{'}}}; Z,Z')$ is a non-negative quantity, then we proved that $I(M_{\overline{Z^{'}}};Y|M_{Z^{'}},Z,Z')$ is not a positive quantity. It is impossible unless $I(M_{\overline{Z^{'}}};Z,Z') = 0$, which implies that all inputs $(X, Z)$ are independent (We already clarified that this is often unattainable and in the most general case, we assume there is a known dependency between the inputs). Combining \eqref{eq.10} and \eqref{eq.11}, we have:
\begin{align}\label{eq.12}
    I(X ; V|Z,Z',M_{Z^{'}})&\stackrel{(a)}{\geq} I(M_{Z^{'}},M_{\overline{Z^{'}}} ; V|Z,Z',M_{Z^{'}})\notag\\
    &\stackrel{(b)}{=} I(M_{Z^{'}};V|Z,Z',M_{Z^{'}})\notag\\
    & \,= H(M_{Z^{'}}|Z,Z',M_{Z^{'}})\notag -H(M_{Z^{'}}|V,Z,Z',M_{Z^{'}})\notag\\
    & \,= 0,
\end{align}
where $(a)$ is due to the Markov chain $(M_0,M_1)\to X \to V$ and data processing inequality (DPI), and $(b)$ is since given $V$ and $Z'$, the uncertainty about $M_{Z^{'}}$ is zero (correctness criterion).
Also assume the usual OT correctness: given $V$ and the (effective) index $Z'$, Bob recovers $M_{Z^{'}}$ with certainty, so
\begin{align}\label{corr}
    H(M_{Z^{'}} \mid V, Z, Z') = 0.
\end{align}
Finally assume the messages are independent a priori and non-degenerate:
\[
M_0 \perp M_1, \qquad H(M_0) > 0, \quad H(M_1) > 0.
\]
From the causal structure of the protocol we have the Markov chain $(M_0, M_1) \longrightarrow X \longrightarrow V$. Now consider \eqref{eq.7}:
\[
I(X; Z' \mid Z) = 0.
\]
Again use the Markov chain $(M_0, M_1) \to X \to Z'$ (the variable $Z'$ depends only on the transcript / view which in turn depends on $X$). By conditional DPI we get
\[
I(M_0, M_1; Z' \mid Z) \le I(X; Z' \mid Z) = 0,
\]
hence
\begin{align}\label{mid1}
    I(M_0, M_1; Z' \mid Z) = 0.
\end{align}
Chain-rule on \eqref{mid1} gives
\[
0 = I(M_{Z^{'}}; Z' \mid Z) + I(M_{\overline{Z^{'}}}; Z' \mid Z, M_{Z^{'}}).
\]
Both terms are nonnegative, so each must be zero:
\[
I(M_{Z^{'}}; Z' \mid Z)  = 0, \qquad I(M_{\overline{Z^{'}}}; Z' \mid Z, M_{Z^{'}}) = 0. 
\]
From the first equality we deduce
\[
H(M_{Z^{'}} \mid Z, Z') = H(M_{Z^{'}} \mid Z).
\tag{6}
\]
(Conditional independence of $M_Z$ and $Z'$ given $Z$ means conditioning on $Z'$ does not further reduce  
entropy beyond conditioning on $Z$ alone.)
By correctness \eqref{corr} we have  
\[
H(M_{Z^{'}} \mid V, Z, Z') = 0.
\]
Combine this with \eqref{eq.10}
\begin{align}\label{mid2}
    I(M_0, M_1 ; V \mid Z, Z', M_{Z^{'}}) = 0.
\end{align}
The latter says that modulo $(Z, Z', M_{Z^{'}})$, $V$ is independent of $(M_0, M_1)$ — equivalently, all dependence of  
$V$ on the message space is already captured by the conditioned $M_{Z^{'}}$.  
Concretely, we can rewrite \eqref{eq.10} as
\[
H(V \mid Z, Z', M_{Z^{'}})
=
H(V \mid Z, Z', M_{Z^{'}}, M_0, M_1).
\]

But given $M_0, M_1$ and the protocol, $V$ is determined stochastically through the protocol; the important  
observation is that \eqref{mid2} implies that, after conditioning on $(Z, Z', M_{Z^{'}})$, $V$ carries no additional information  
about the messages. Together with \eqref{corr} this implies that the uncertainty of $M_{Z^{'}}$ given $Z, Z'$ must be no  
larger than its uncertainty after seeing $V$:
\begin{align}\label{mid3}
    H(M_{Z^{'}} \mid Z, Z') = H(M_{Z^{'}} \mid V, Z, Z') = 0,
\end{align}
where the last equality is due to \eqref{corr}.

This means that the message $M_{Z^{'}}$ must be determined (with probability 1) by $(Z, Z')$. Under the usual OT input model (two independent non-degenerate messages), this is
impossible: $M_{Z^{'}}$ has positive entropy conditioned on any small amount of side information unless the
messages are degenerate (deterministic functions of the choices). Thus we get a contradiction to the
assumption that $H(M_0) > 0$ or $H(M_1) > 0$. Therefore the perfect-information equalities \eqref{eq.7} and \eqref{eq.8}
cannot hold simultaneously for nontrivial messages.\qed

\section{Proof of Theorem \ref{thm: impossibility-MAC}}\label{App: proof-thm: impossibility-MAC}
 
The proof follows from the extended version presented in \cite{Crepeau}. 
Consider the intuitive security criterion for Bob: if Alice-$i$ acts maliciously, she sends 
$X'_i$ instead of $X_i$. This means that Alice-$i$ may replace any subset of bits by 
deterministic or random bits of her choice, so that the Hamming distance between the original 
and modified sequences satisfies $0 \leq d_H \leq n$. 

In such a case, the security constraint requires that the mutual information between Bob and 
Alice-$i$'s effective input, conditioned on $(X_i,\mathbf{C})$, is zero 
$(X'_i \rightarrow (\mathbf{C},X_i) \rightarrow Y)$. 
Likewise, the mutual information between Bob's final view $(V,Y)$ and Alice-$i$'s final view, 
conditioned on $(\mathbf{C}, X_i, X'_i)$, must also be zero 
$(U_i \rightarrow (\mathbf{C},X_i,X'_i) \rightarrow (V,Y))$:
\[
    I(Y;X'_i \mid \mathbf{C},X_i) = 0, 
    \qquad\text{and}\qquad 
    I(V,Y;U_i \mid \mathbf{C},X_i,X'_i) = 0,
\]
where $Y = Z_i$ and $V = M'_{iZ_i}$. Combining these conditions, we obtain
\begin{align*}
   I(Z_i;X'_i \mid &\mathbf{C},X_i) 
   + I(M'_{iZ_i},Z_i;U_i \mid \mathbf{C},X_i,X'_i)\\
   &\,= I(Z_i;X'_i \mid \mathbf{C},X_i)
      + I(Z_i;U_i \mid \mathbf{C},X_i,X'_i)
      + I(M'_{iZ_i};U_i \mid \mathbf{C},Z_i,X_i,X'_i)\\
   &\stackrel{(a)}{=} 
      I(Z_i;X'_i \mid \mathbf{C},X_i)
      + I(Z_i;U_i \mid \mathbf{C},X_i,X'_i)\\
   &\,= I(Z_i;X'_i,U_i \mid \mathbf{C},X_i)\\
   &\,= I(Z_i;U_i \mid \mathbf{C},X_i)
      + I(Z_i;X'_i \mid \mathbf{C},X_i,U_i)\\
   &\stackrel{(b)}{=} 
      I(Z_i;U_i \mid \mathbf{C},X_i)\\
   &\,= 0,
\end{align*}
where $(a)$ follows from $I(M'_{iZ_i};U_i \mid \mathbf{C},Z_i,X_i,X'_i) = 0$, and 
$(b)$ is justified as follows. Let $M'_{iZ_i} = (M'_{i,0}, M'_{i,1})$ where, for each 
$l \in \{0,1\}$, the value $M'_{i,l}$ is chosen according to the distribution 
$P_{V \mid \mathbf{C},X_i,U_i,Z_i=l}$, except that $M'_{iZ_i} = V$, which corresponds to 
Bob's final view. Since 
\[
    P_{V \mid \mathbf{C},X_1,X_2,U_1,U_2,Z_1,Z_2}
    = P_{V \mid \mathbf{C},X_1,U_1,Z_1}\,
      P_{V \mid \mathbf{C},X_2,U_2,Z_2}
\]
due to independence of the transmitters, both $M'_{i,l}$ for $l \in \{0,1\}$ follow the 
distribution $P_{V \mid \mathbf{C},X_i,U_i,Z_i=l}$. This construction ensures the Markov chain 
\[
    X'_i \rightarrow (\mathbf{C},X_i,U_i) \rightarrow Z_i,
\]
which implies 
\[
    I(X'_i;Z_i \mid \mathbf{C},X_i,U_i) = 0,
\]
for $i \in \{1,2\}$.

\parbreak Without considering the random variable $\mathbf{C}$, in the case of malicious Bob, we have: 

\parbreak\noindent\emph{Bob's privacy:} Alice-$i$ should not learn the receiver's effective choice $Z'_i$:
\begin{equation}\label{eq.13}
     I(X_i;Z'_i|Z_i) = 0,
\end{equation}
\begin{equation}\label{eq.14}
    I(X_i;V|Z_i,Z'_i,M_{iZ^{'}_i}) = 0.
\end{equation}

This means that:
\begin{equation}\label{eq.15}
    H(M_{i\overline{Z^{'}_i}}|V,Z_i,Z'_i,M_{iZ^{'}_i}) = H(M_{i\overline{Z^{'}_i}}).
\end{equation}
This implies that given the information $V$, the choice $Z_i$, the effective choice $Z'_i$ and $M_{iZ^{'}_i}$, Bob has no additional information about $M_{i\overline{Z^{'}_i}}$. Consider the mutual information between Alice's bits and the information received by Bob $I(X_i;V) = I(M_{i0},M_{i1};V)$. Since Bob should learn $M_{iZ^{'}_i}$ and nothing about $M_{i\overline{Z^{'}_i}}$, we have:
\begin{align}\label{eq.16}
    I(M_{iZ^{'}_i}&, M_{i\overline{Z^{'}_i}} ; V|Z_i,Z'_i,M_{iZ^{'}_i}) = I(M_{iZ^{'}_i};V|Z_i,Z'_i,M_{iZ^{'}_i}) + I(M_{i\overline{Z^{'}_i}};V|Z_i,Z'_i,M_{iZ^{'}_i}).
\end{align}
The first term should not equal zero because $V$ must also convey the value of $M_{iZ^{'}_i}$ to Bob. The second term should be zero due to the perfect secrecy of OT. So consider the second term:
\begin{align}\label{eq.17}
    I(M_{i\overline{Z^{'}_i}};V|Z_i,Z'_i,M_{iZ^{'}_i}) & = I(M_{i\overline{Z^{'}_i}};V|Z_i,Z'_i)\notag\\
    & = H(M_{i\overline{Z^{'}_i}}| Z_i,Z'_i) - H(M_{i\overline{Z^{'}_i}}|V, Z_i,Z'_i)\notag\\
    & = H(M_{i\overline{Z^{'}_i}}| Z_i,Z'_i) - H(M_{i\overline{Z^{'}_i}})\notag\\
    & = - I(M_{i\overline{Z^{'}_i}}; Z_i,Z'_i),
\end{align}
where the first equality is due the fact that  $M_{i\overline{Z^{'}_i}}$ and $M_{iZ^{'}_i}$ are independent, and the third equality follows from \eqref{eq.15}. 

\parbreak As $I(M_{i\overline{Z^{'}_i}}; Z_i,Z'_i)$ is a non-negative quantity, then we proved that $I(M_{i\overline{Z^{'}_i}};Y|M_{iZ^{'}_i},Z_i,Z'_i)$ is not a positive quantity, unless $I(M_{i\overline{Z^{'}_i}};Z_i,Z'_i) = 0$, which implies that all inputs $(X_i, Z_i)$ are independent. Combining \eqref{eq.16} and \eqref{eq.17}, we have:
\begin{align}\label{eq.18}
    I(X_i; V|Z_i,Z'_i,M_{iZ^{'}_i})&\geq I(M_{iZ^{'}_i},M_{i\overline{Z^{'}_i}} ; V|Z_i,Z'_i,M_{iZ^{'}_i})\notag\\
    & = I(M_{iZ^{'}_i};V|Z_i,Z'_i)\notag\\
    & = 0,
\end{align}
where the first equality is because given $V$ and $Z'_i$, the uncertainty about $M_{iZ^{'}_i}$ is zero. \qed

\section{Proof of Lemma \ref{lemm: conditions}}\label{App: proof-lemm: conditions}

\begin{itemize}
        \item Due to the Chernoff bound, we know that the probability of aborting the protocol by Bob in step 2 tends to zero as $n\rightarrow \infty$. When $|\overline{E}_i|<r_i n$ and $|E_i|<r_i n$, then Bob knows $\mathbf{X}_i|_{S_{iZ_i}}$. Since Bob also knows $\kappa_{iZ_i}$, Bob can compute the key $\kappa_{iZ_i}(\mathbf{X}_i|_{S_{iZ_i}})$. Then Bob can recover $M_{iZ_i}$ from $M_{iZ_i}\oplus\kappa_{iZ_i}(\mathbf{X}|_{S_{iZ_i}})$ sent by Alice-$i$. Then, 
        \[
        \lim_{n\rightarrow \infty}\text{Pr}\,\left[(\hat{M}_{1Z_1}, \hat{M}_{2Z_2})\neq (M_{1Z_1}, M_{2Z_2})\right] = 0.\\
        \]
        \item 
        \begin{align}
            I (M_{1\overline{Z}_1},M_{2\overline{Z}_2} ; V) & \,= I (M_{1\overline{Z}_1},M_{2\overline{Z}_2} ; Z_1, Z_2, Y^n, \mathbf{C})\notag\\
            & \,= I (M_{1\overline{Z}_1},M_{2\overline{Z}_2} ; Z_1, Z_2, Y^n, \mathbf{C}_1, \mathbf{C}_2)\notag\\
            & \,= I (M_{1\overline{Z}_1} ; Z_1, Z_2, Y^n, \mathbf{C}_1, \mathbf{C}_2)\notag + I (M_{2\overline{Z}_2} ; Z_1, Z_2, Y^n, \mathbf{C}_1, \mathbf{C}_2|M_{1\overline{Z}_1})\notag\\
            & \overset{(a)}{=} I (M_{1\overline{Z}_1} ; Z_1, Y^n, \mathbf{C}_1)\notag + I (M_{2\overline{Z}_2} ; Z_2, Y^n, \mathbf{C}_2|M_{1\overline{Z}_1})\notag\\
            & \overset{(b)}{=} I (M_{1\overline{Z}_1} ; Z_1, Y^n, \mathbf{C}_1)\notag + I (M_{2\overline{Z}_2} ; Z_2, Y^n, \mathbf{C}_2, M_{1\overline{Z}_1})\notag\\
            & \overset{(c)}{=} I (M_{1\overline{Z}_1} ; Z_1, Y^n, \mathbf{C}_1) + I (M_{2\overline{Z}_2} ; Z_2, Y^n, \mathbf{C}_2),\label{MAC-nColl-middle}
        \end{align}
        where $(a)$ follows from the fact that $M_{iZ_i}- (Z_i, Y^n, \mathbf{C}_i)- (Z_{\overline{i}}, \mathbf{C}_{\overline{i}})$ is a Markov chain, $(b)$ is due to the independency of $M_{iZ_i}$ from $M_{\overline{i}Z_{\overline{i}}}$, and $(c)$ is due the Markov chain $M_{2\overline{Z}_2}-(Z_2,Y^n, \mathbf{C}_2)-M_{1\overline{Z}_1}$. Now, it suffices to show that $I (M_{i\overline{Z}_i} ; Z_i, Y^n, \mathbf{C}_i)_{i\in\{1,2\}}$ tends to zero as $n\rightarrow\infty$. 
        
        \begin{align}
          I (  M_{i\overline{Z}_i};Z_i, Y^n, \mathbf{C}_i)_{i\in\{1,2\}} & \,= I (  M_{i\overline{Z}_i};Z_i, Y^n, S_{i0}, S_{i1}, \kappa_{i0}, \kappa_{i1}, M_{i0}\oplus\kappa_{i0}(\mathbf{X}_i|_{S_{i0}}), M_{i1}\oplus\kappa_{i1}(\mathbf{X}_i|_{S_{i1}}))\notag\\
          & \,= I (  M_{i\overline{Z}_i};Z_i, Y^n, S_{iZ_i}, S_{i\overline{Z}_i},\kappa_{iZ_i}, \kappa_{i\overline{Z}_i}, M_{iZ_i}\oplus\kappa_{iZ_i}(\mathbf{X}_i|_{S_{iZ_i}}), M_{i\overline{Z}_i}\oplus\kappa_{i\overline{Z}_i}(\mathbf{X}_i|_{S_{i\overline{Z}_i}}))\notag\\
          & \,= I (  M_{i\overline{Z}_i};M_{i\overline{Z}_i}\oplus\kappa_{i\overline{Z}_i}(\mathbf{X}_i|_{S_{i\overline{Z}_i}})|Z_i,Y^n, S_{iZ_i}, S_{i\overline{Z}_i}, \kappa_{iZ_i}, \kappa_{i\overline{Z}_i}, M_{iZ_i}\oplus\kappa_{iZ_i}(\mathbf{X}_i|_{S_{iZ_i}}))\notag\\
          & \quad + I (  M_{i\overline{Z}_i};Z_i, Y^n, S_{iZ_i}, S_{i\overline{Z}_i}, \kappa_{iZ_i}, \kappa_{i\overline{Z}_i}, M_{iZ_i}\oplus\kappa_{Z_i}(\mathbf{X}_i|_{S_{iZ_i}}))\notag\\
          & \overset{(a)}{=} I (  M_{i\overline{Z}_i};M_{i\overline{Z}_i}\oplus\kappa_{i\overline{Z}_i}(\mathbf{X}_i|_{S_{i\overline{Z}_i}})|Z_i, Y^n, S_{iZ_i}, S_{i\overline{Z}_i}, \kappa_{iZ_i}, \kappa_{i\overline{Z}_i}, M_{iZ_i}\oplus\kappa_{iZ_i}(\mathbf{X}_i|_{S_{iZ_i}}))\notag\\
          & \,= H (M_{i\overline{Z}_i}\oplus\kappa_{i\overline{Z}_i}(\mathbf{X}_i|_{S_{i\overline{Z}_i}})|Z_i, Y^n, S_{iZ_i}, S_{i\overline{Z}_i}, \kappa_{iZ_i}, \kappa_{i\overline{Z}_i}, M_{iZ_i}\oplus\kappa_{iZ_i}(\mathbf{X}_i|_{S_{iZ_i}}))\notag\\
          & \quad - H (M_{i\overline{Z}_i}\oplus\kappa_{i\overline{Z}_i}(\mathbf{X}_i|_{S_{i\overline{Z}_i}})| M_{i\overline{Z}_i}, Z_i, Y^n, S_{iZ_i}, S_{i\overline{Z}_i}, \kappa_{iZ_i}, \kappa_{i\overline{Z}_i}, M_{iZ_i}\oplus\kappa_{iZ_i}(\mathbf{X}_i|_{S_{iZ_i}}))\notag\\
          & \overset{(b)}{\leq} n (r_i-\lambda')\notag\\
          & \quad - H (\kappa_{i\overline{Z}_i}(\mathbf{X}_i|_{S_{i\overline{Z}_i}})| M_{i\overline{Z}_i}, Z_i, (Y^n|_{S_{i\overline{Z}_i}},Y^n|_{S_{iZ_i}}) , S_{iZ_i}, S_{i\overline{Z}_i}, \kappa_{iZ_i}, \kappa_{i\overline{Z}_i}, M_{iZ_i}\oplus\kappa_{iZ_i}(\mathbf{X}_i|_{S_{iZ_i}}))\notag\\\label{final2: MAC-nColl}
          & \overset{(c)}{=} n (r_i-\lambda')- H (\kappa_{i\overline{Z}_i}(\mathbf{X}_i|_{S_{i\overline{Z}_i}})| Y^n|_{S_{i\overline{Z}_i}},\kappa_{i\overline{Z}_i}),
        \end{align}
where $(a)$ follows from the independency of $M_{i\overline{Z}_i}$ from $(Z_i, Y^n, S_{iZ_i}, S_{i\overline{Z}_i}, \kappa_{iZ_i}, \kappa_{i\overline{Z}_i}, M_{iZ_i}\oplus\kappa_{iZ_i}(\mathbf{X}_i|_{S_{iZ_i}})), i \in \{1,2\}$, $(b)$ follows since $\kappa_{i\overline{Z}_i}(\mathbf{X}_i|_{S_{i\overline{Z}_i}})$ is $n (r_i-\lambda'), i\in \{1,2\}$ bits long and $(c)$ follows since $\kappa_{i\overline{Z}_i}(\mathbf{X}_i|_{S_{i\overline{Z}_i}})- (Y^n|_{S_{i\overline{Z}_i}},\kappa_{i\overline{Z}_i})-M_{i\overline{Z}_i}, Z_i, Y^n , S_{i\overline{Z}_i}, S_{i{Z_i}}, \kappa_{iZ_i}, M_{iZ_i}\oplus\kappa_{iZ_i}(\mathbf{X}_i|_{S_{iZ_i}})),{i\in\{1,2\}}$ is a Markov chain.
We know that,
\[
H_2(\mathbf{X}_i|_{S_{i\overline{Z}_i}}|Y^n|_{S_{i\overline{Z}_i}}=y^n|_{s_{i\overline{z}_i}}) = \Delta(y^n|_{s_{i\overline{z}_i}}) \geq n r_i, i \in \{1,2\},\]

since the size of the set $S_{i\overline{Z}_i}$ is at least $n r_i$. Then, from Lemma \ref{entropy hash} we have:
\begin{align*}
    H(\kappa(\mathbf{X}_i|_{S_{i\overline{Z}_i}})|\kappa, Y^n|_{S_{i\overline{Z}_i}}=y^n|_{s_{i\overline{z}_i}})
    & \geq n(r_i-\lambda') - \frac{2^{n(r_i-\lambda')-nr_i}}{\ln 2}\\
    & = n(r_i-\lambda') - \frac{2^{-n\lambda'}}{\ln 2}.
\end{align*}
Then, \eqref{final2: MAC-nColl} tends to: 
\begin{align*}
    \lim_{n\rightarrow\infty} I (  M_{i\overline{Z}_i};V)_{i\in\{1,2\}}& \leq \lim_{n\rightarrow\infty}\left[ n(r_i-\lambda') - n(r_i-\lambda') + \frac{2^{-n\lambda'}}{\ln 2}\right]\\
    & = \lim_{n\rightarrow\infty} \frac{2^{-n\lambda'}}{\ln 2}\\
    & = 0 .
\end{align*}
Then, from \eqref{MAC-nColl-middle}, we have $\lim_{n\rightarrow\infty} I (M_{1\overline{Z}_1},M_{2\overline{Z}_2} ; V) = 0$. 
    \item \begin{align*}I(Z_i ; U_i) & \,= I(Z_i ; M_{i0}, M_{i1}, X_i^n, R_{A_i}, \mathbf{C}_i)\\
    & \,= I(Z_i ; M_{i0}, M_{i1}, X_i^n, S_{i0}, S_{i1}, \kappa_{i0}, \kappa_{i1}, M_{i0}\oplus\kappa_{i0}(\mathbf{X}_i|_{S_{i0}}), M_{i1}\oplus\kappa_{i1}(\mathbf{X}_i|_{S_{i1}}), R_{A_i}) \\
    & \,= I(Z_i ; M_{i0}, M_{i1}, X_i^n, S_{i0}, S_{i1}, \kappa_{i0}, \kappa_{i1}, \kappa_0(\mathbf{X}_i|_{S_{i0}}),\kappa_1(\mathbf{X}_i|_{S_{i1}}), R_{A_i})\\
    & \,= I(Z_i ; M_{i0}, M_{i1}, X_i^n, S_{i0}, S_{i1}, \kappa_{i0}, \kappa_{i1}, R_{A_i}) \\
    & \overset{(a)}{=} I(Z_i ; X_i^n, S_{i0}, S_{i1}) \\
    &\overset{(b)}{=} I(Z_i ; S_{i0}, S_{i1}) \\
    &\overset{(c)}{=} 0,
\end{align*}
where $R_{A_i}=(R^{(i)}=(R_{i0}, R_{i1}), T^{(i)}=(T_{i0}, T_{i1}))$, $(a)$ follows since $M_{i0}, M_{i1}, \kappa_{i0}, \kappa_{i1} \perp (Z_i, X_i^n, S_{i0}, S_{i1}, R_{A_i})$, $(b)$ follows since $X_i^n \perp (Z_i, S_{i0}, S_{i1})$, and $(c)$ follows since the channel acts independently on each input bit and $|S_{i0}| = |S_{i1}|$.
\end{itemize}
\qed
\section{Proof of Theorem \ref{thm: key agreement}\label{App: proof-thm: key agreement}}

\begin{align*}
        H(S_i) & = H(S_i|X_{\overline{i}})\\
        & = I(S_i; C_i^l, E|X_{\overline{i}}) + H(S_i|C_i^l ,E, X_{\overline{i}}).
    \end{align*}
    Consider the last term of the above expression,
    \begin{align*}
        H(S_i|C_i^l, E, X_{\overline{i}}) & \,= H(S_i,X_i|C_i^l, E, X_{\overline{i}}) - H(X_i|S_i, C_i^l, E, X_{\overline{i}})\\
        & \,= H(X_i|C_i^l, E, X_{\overline{i}}) + H(S_i|C_i^l, E, X_i,X_{\overline{i}}) - H(X_i|S_i,C_i^l, E, X_{\overline{i}})\\
        & \stackrel{(a)}{=} H(X_i|C_i^l, E X_{\overline{i}}) - H(X_i|S_i, C_i^l, E, X_{\overline{i}})\\
        & \stackrel{(b)}{\leq} H(X_i|C_i^l, E, X_{\overline{i}}) - H(X_i|S_i, C_i^l, E, X_{\overline{i}} Y)\\
        & \,= H(X_i|C_i^l, E, X_{\overline{i}}) - H(X_i,S_i|C_i^l, E, Y, X_{\overline{i}}) + H(S_i|C_i^l, E, Y, X_{\overline{i}})\\
        & \stackrel{(c)}{=} H(X_i|C_i^l, E, X_{\overline{i}}) - H(X_i|C_i^l, E, Y, X_{\overline{i}}) + H(S_i|C_i^l, E, Y, X_{\overline{i}})\\
        & \stackrel{(d)}{\leq} I(X_i;Y|X_{\overline{i}}, C_i^l, E) + H(S_i|C_i^l, E, Y)\\
        & \stackrel{(e)}{\leq} I(X_i;Y|X_{\overline{i}}, C_i^l, E) + H(S_i|S'_i),\\
    \end{align*}
    for $i \in \{1,2\}$, where $(a)$ and $(c)$ follow from the condition \eqref{SKA-MAC-3}, $(b)$ is due to the fact that conditioning does not increase the entropy, $(d)$ is due to the fact that $S_i\bot X_{\overline{i}}$, and $(e)$ follows from:
    \begin{align*}
        H(S_i,S'_i|C_i^l, Y, E) & = H(S_i|C_i^l, Y, E) + H(S'_i|S_i, C_i^l, Y, E)\\
        & = H(S'_i|C_i^l, Y, E) + H(S_i|S'_i, C_i^l, Y, E).
    \end{align*}
    Considering the condition \eqref{SKA-MAC-6}, implies that:
    \begin{equation*}
        H(S_i|C_i^l, Y, E) = H(S_i|S'_i, C_i^l, Y, E) \leq H(S_i|S'_i).
    \end{equation*}
    Putting everything together completes the proof for the individual rates.

    \parbreak For the joint entropy, by the same reasoning as above, we have:
    \begin{align}\label{joint entropy: SKA1}
       H(S_1,S_2)= I(S_1,S_2; C_1^l, C_2^l, E) + H(S_1,S_2|C_1^l, C_2^l, E).
    \end{align}
    Consider the last term of the above expression,
    \begin{align}
        H(S_1,S_2|C_1^l, C_2^l, E) & \,= H(S_1, S_2, X_1, X_2|C_1^l, C_2^l, E) - H(X_1,X_2|S_1,S_2, C_1^l, C_2^l, E)\notag\\
        & \,= H(X_1,X_2|C_1^l, C_2^l, E) + H(S_1,S_2|C_1^l,C_2^l, E, X_1, X_2) - H(X_1,X_2|S_1, S_2, C_1^l, C_2^l, E)\notag\\
        & \,= H(X_1,X_2|C_1^l, C_2^l, E) - H(X_1,X_2|S_1, S_2, C_1^l, C_2^l, E)\notag\\
        & \,\leq H(X_1,X_2|C_1^l, C_2^l, E) - H(X_1,X_2|S_1,S_2, C_1^l, C_2^l, E, Y)\notag\\
        & \,= H(X_1,X_2|C_1^l, C_2^l, E) - H(X_1,X_2, S_1, S_2|C_1^l, C_2^l, E, Y) + H(S_1,S_2|C_1^l, C_2^l, E, Y)\notag\\
        & \,= H(X_1,X_2|C_1^l, C_2^l, E) - H(X_1,X_2|C_1^l, C_2^l, E, Y) + H(S_1,S_2|C_1^l, C_2^l, E, Y)\notag\\
        & \stackrel{(a)}{\leq} I(X_1,X_2;Y|C_1^l, C_2^l, E) + H(S_1,S_2|C_1^l, C_2^l, E, Y)\notag\\
        & \,\leq I(X_1,X_2;Y|C_1^l, C_2^l, E) + H(S_1,S_2|S'_1, S'_2),\label{joint entropy: SKA2}
    \end{align}
    where $(a)$ follows from \eqref{SKA-MAC-4}. \qed

\section{Proof of Corollary \ref{cor1}\label{App: proof-cor: cor1}}

Alice-1 and Alice-2 each independently generate uniformly distributed keys $S_1$ and $S_2$, respectively. They then produce channel inputs as stochastic functions of these keys, resulting in $X_1^n = f'_1(S_1)$ and $X_2^n = f'_2(S_2)$. These inputs are sent over the DM-MAC. The outputs $Y^n$ and $E^n$ are subsequently received by Bob and Eve, respectively. Following this, Alice-$i$ generates $\mathbf{C}_i = f_i(S_i,E^n)$. These functions $(\mathbf{C}_i, i\in\{1,2\})$ are then transmitted over the public channel so that the receiver can reconstruct the keys. It is important to note that all the functions mentioned above are stochastic. According to Fano's inequality, for any arbitrarily small $\epsilon\geq 0$, we have:
    \begin{align*}
        H(S_1,S_2|Y^n, \textbf{C}_1,\textbf{C}_2) & \leq h(\epsilon) + \epsilon (H(S_1)+H(S_2))\\
        & \leq h(\epsilon) + \epsilon (nR_1 + nR_2 + 2n\epsilon)\\
        & \leq n (\frac{h(\epsilon)}{n} + \epsilon (R_1 + R_2 + 2n\epsilon)) \triangleq n\epsilon'.
    \end{align*}
    It is clear that $\epsilon'\rightarrow 0$ if $\epsilon\rightarrow 0$. Also, two security criteria should be fulfilled for arbitrarily small $\epsilon\geq 0$: $I(S_1; E^n \mathbf{C}_1)\leq n\epsilon$, $I(S_2; E^n \mathbf{C}_2)\leq n\epsilon$ (Condition \eqref{SKA-MAC-8}):
    \begin{align}\label{eq: upper bound: Salimi}
        R_i & \,\leq \frac{1}{n} H(S_i) + \epsilon\notag\\
        & \stackrel{(a)}{\leq} \frac{1}{n} H(S_i|E^n, \mathbf{C}_i) + 2\epsilon\notag\\
        & \, = \frac{1}{n} H(S_i|X_{\overline{i}}^n, E^n, \mathbf{C}_i) + 2\epsilon\notag\\
        & \stackrel{(b)}{\leq} \frac{1}{n} H(S_i|X_{\overline{i}}^n, E^n, \mathbf{C}_i) - \frac{1}{n} H(S_i|Y^n, \mathbf{C}_i, \mathbf{C}_{\overline{i}}) + 2\epsilon + \epsilon'\notag\\
        & \, \leq \frac{1}{n} H(S_i|X_{\overline{i}}^n, E^n, \mathbf{C}_i) - \frac{1}{n} H(S_i|Y^n, X_{\overline{i}}^n, E^n, \mathbf{C}_i, \mathbf{C}_{\overline{i}}) + 2\epsilon + \epsilon'\notag\\
        & \stackrel{(c)}{=} \frac{1}{n} H(S_i|X_{\overline{i}}^n, E^n, \mathbf{C}_i) - \frac{1}{n} H(S_i|Y^n, X_{\overline{i}}^n, E^n, \mathbf{C}_i) + 2\epsilon + \epsilon'\notag\\
        & \, = \frac{1}{n}I(S_i;Y^n|X_{\overline{i}}^n, E^n, \mathbf{C}_i) + 2\epsilon + \epsilon'\notag\\
        & \, = \frac{1}{n} H(Y^n|X_{\overline{i}}^n, E^n, \mathbf{C}_i) - \frac{1}{n} H(Y^n|X_{\overline{i}}^n, \mathbf{C}_i, E^n, S_i) + 2\epsilon + \epsilon'\notag\\
        & \, \leq \frac{1}{n} H(Y^n|X_{\overline{i}}^n, E^n) - \frac{1}{n} H(Y^n|X_i^n, X_{\overline{i}}^n, \mathbf{C}_i, E^n, S_i) + 2\epsilon + \epsilon'\notag\\
        & \stackrel{(d)}{=} \frac{1}{n} H(Y^n|X_{\overline{i}}^n, E^n) - \frac{1}{n} H(Y^n|X_i^n, X_{\overline{i}}^n, E^n, S_i) + 2\epsilon + \epsilon'\notag\\
        & \stackrel{(e)}{=} \frac{1}{n} H(Y^n|X_{\overline{i}}^n, E^n) - \frac{1}{n} H(Y^n|X_i^n, X_{\overline{i}}^n, E^n) + 2\epsilon + \epsilon'\notag\\
        & \stackrel{(f)}{\leq} \max_{P_{X_1}P_{X_2}}I(X_i;Y|X_{\overline{i}}, E) + 2\epsilon + \epsilon',
    \end{align}
    for $i \in \{1,2\}$, where $(a)$ follows from the security criteria, $(b)$ follows from Fano's lemma, $(c)$ and $(d)$ follow from the fact that $C_i = f_i(S_i,E^n)$, $(e)$ follows from the Markov chain $Y^n-(X_1^n,X_2^n)-(S_1,S_2)$, and $(f)$ follows from the memoryless property of the channel and an argument similar to \cite[Theorem 4]{Maurer}. As is mentioned before and proved in \cite{Maurer} for the case of constant random variable $E$, we can remove the impact of $E$ from the above mutual information quantity. The whole above process can similarly be repeated for the joint entropy $H(S_1,S_2)$. This completes the proof. \qed

\section{Proof of Theorem \ref{thm: OT capacity}}\label{App: proof-thm: OT Capacity-upper}

\subsection{The first upper bound on OT capacity}
Consider the system model illustrated in Figure \ref{fig: both-MAC}-$(a)$. To prove the upper bound for $\binom 21-\text{OT}^{k_1,k_2}$ capacity, consider that $(n, k_1, k_2)$ protocols fulfilling conditions \eqref{goals: MAC-nColl-1}-\eqref{goals: MAC-nColl-3}. Condition \eqref{goals: MAC-nColl-2} can be rewritten as:
    \begin{align}
        \lim_{n\rightarrow \infty}I (M_{1\overline{Z}_1},M_{2\overline{Z}_2} ; V) &= \lim_{n\rightarrow \infty}I (M_{1\overline{Z}_1} ; V) +  \lim_{n\rightarrow \infty}I (M_{2\overline{Z}_2} ; V|M_{1\overline{Z}_1})\label{first step: general}\\
        & = \lim_{n\rightarrow \infty}I (M_{1\overline{Z}_1} ; V) +  \lim_{n\rightarrow \infty}I (M_{2\overline{Z}_2} ; V)\\
        & = 0, \notag
    \end{align} 
    This means that $\lim_{n\rightarrow \infty}I (M_{i\overline{Z}_i} ; V)_{i \in \{1,2\}} = 0$ and its relaxed version: $I (M_{i\overline{Z}_i} ; V)_{i \in \{1,2\}} = o(n)$:
    \begin{align}\label{smoothed condition}
        I (M_{i\overline{Z}_i} ; V)_{i \in \{1,2\}} = I (M_{i\overline{Z}_i} ; Z_i, Z_{\overline{i}}, R_B, Y^n, \textbf{C})_{i \in \{1,2\}} = o(n).
    \end{align}
    $I (Z_i, Z_{\overline{i}}, R_B, Y^n, \textbf{C} ; M_{i\overline{Z}_i})_{i \in \{1,2\}} = I (Z_i, Z_{\overline{i}} ; M_{i\overline{Z}_i})_{i \in \{1,2\}} + I (R_B, Y^n, \textbf{C} ; M_{i\overline{Z}_i}| Z_i, Z_{\overline{i}})_{i \in \{1,2\}} \linebreak\stackrel{(a)}{=} I (R_B, Y^n, \textbf{C}_i ; M_{i\overline{Z}_i}| Z_i, Z_{\overline{i}})_{i \in \{1,2\}}$, where $(a)$ follows from the fact that $I (Z_i, Z_{\overline{i}} ; M_{i\overline{Z}_i})_{i \in \{1,2\}} = 0$. So, Condition \eqref{smoothed condition} implies that $I (R_B, Y^n, \textbf{C} ; M_{i\overline{Z}_i}| Z_i, Z_{\overline{i}})_{i \in \{1,2\}}\rightarrow 0$. Instead of using Condition \eqref{smoothed condition}, we have: 
    \begin{equation}\label{goal-MAC-ncoll- 3: smoothed final}
        I (R_B, Y^n, \textbf{C}_i ; M_{i\overline{Z}_i}| Z_i, Z_{\overline{i}})_{i \in \{1,2\}} = o(n).
    \end{equation}
    Given a DM-MAC $\{\mathcal{W}: \mathcal{X}_1\mathcal{X}_2\rightarrow \mathcal{Y}\}$, consider $\binom 21-\text{OT}^{k_1,k_2}$ protocols that satisfy \eqref{goals: MAC-nColl-1}, \eqref{goals: MAC-nColl-3}, and \eqref{goal-MAC-ncoll- 3: smoothed final}. According to Lemma \ref{lemma: Rudolf}, Condition \eqref{goals: MAC-nColl-3} implies:
    
    \begin{align}
       H(M_{i\overline{z}_i}|X_i^n, \mathbf{C}_i, Z_i=\overline{z}_i)_{i\in\{1,2\}} -& H(M_{i\overline{z}_i}|X_i^n, \mathbf{C}_i, Z_i=z_i)_{i\in\{1,2\}}\\
       &\,= H(M_{i\overline{z}_i}|\mathbf{C}_i, Z_i=\overline{z}_i)_{i\in\{1,2\}}\notag - H(M_{i\overline{z}_i}|\mathbf{C}_i, Z_i=z_i)_{i\in\{1,2\}}\\
        & \stackrel{(a)}{=} H(M_{i\overline{z}_i}|\mathbf{C}_i, Z_i=\overline{z}_i)_{i\in\{1,2\}} - H(M_{i\overline{z}_i}|\mathbf{C}_i, Z_i=z_i)_{i\in\{1,2\}} \notag\\
        & \qquad\qquad\qquad\qquad\qquad\qquad\quad \,- H(M_{i\overline{z}_i}|Z_i=\overline{z}_i)_{i\in\{1,2\}}\notag\\
        &\qquad\qquad\qquad\qquad\qquad\qquad\quad \,+ H(M_{i\overline{z}_i}|Z_i=z_i)_{i\in\{1,2\}}\notag\\\label{o-1}
        & = I(M_{i\overline{z}_i};\mathbf{C}_i|Z_i=\overline{z}_i)_{i\in\{1,2\}}-I(M_{i\overline{z}_i};\mathbf{C}_i|Z_i=z_i)_{i\in\{1,2\}}\\\label{o}
        & = o(n),
    \end{align}
       where $(a)$ follows from the fact that $H(M_{i\overline{z}_i}|Z_i=\overline{z}_i) = H(M_{i\overline{z}_i}|Z_i=z_i) = k_i$.

       \parbreak Suppose $Z_i = z_i$ and $Z_{\overline{i}} = z_{\overline{i}}$ in \eqref{goal-MAC-ncoll- 3: smoothed final}, then we have $I (R_B, Y^n, \textbf{C}_i ; M_{i\overline{z}_i}| Z_i = z_i, Z_{\overline{i}} = z_{\overline{i}})_{i \in \{1,2\}} = o(n)$ combined with \eqref{o-1} and \eqref{o} concludes: 
       \begin{equation}\label{o+1}
       I(M_{i\overline{z}_i};\mathbf{C}_i|Z_i = z_i, Z_{\overline{i}} = z_{\overline{i}})_{i \in \{1,2\}} = o(n).
       \end{equation}
       
       Conditions \eqref{goals: MAC-nColl-1} and \eqref{o+1} without conditioning on $(Z_i = z_i, Z_{\overline{i}} = z_{\overline{i}})_{i \in \{1,2\}}$ are akin to those defining a secret key for Alice-$i$ and Bob with weak secrecy (conditions \eqref{SKA-MAC-7} and \eqref{SKA-MAC-8}, respectively), ensuring security from an eavesdropper who observes their public communication $\mathbf{C}_i$. So, $k_i$ would constitute such a secret key by definition, as demonstrated in Corollary \ref{cor1}. Thus, we get:
       \begin{equation}
           k_i = H(M_{i\overline{z}_i})\leq\sum_{l=1}^{n} I(X_{i,l};Y_l|X_{\overline{i},l}) + o(n).
       \end{equation}
       From the memoryless property of the channel we have: $\frac{k_i}{n}=R_i\leq I(X_i;Y|X_{\overline{i}})$. By a similar calculation for the joint entropy where $k_1 + k_2 = H(M_{1\overline{z}_1},M_{2\overline{z}_2})\leq\sum_{l=1}^{n} I(X_{1,l},X_{2,l};Y_l) + o(n)$ and from the memoryless property of the channel we have: $\frac{k_1 + k_2}{n}=R_1+R_2\leq I(X_1,X_2;Y)$.
       \subsection{The second upper bound on OT capacity}
       At the first step, we prove that $I(M_{iz_i}; Y^n, R_{B} | X_i^n, \mathbf{C}_i, Z_i = z_i,Z_{\overline{i}} = z_{\overline{i}}) = 0$ for $i\in\{1,2\}$. This means that there is no mutual information between Bob's received messages and his received sequence given his chosen bits, Alice's encoded strings, and the total public transmission.
       Define for $i \in \{1,2\}$, $l \in [1, n]$, $t \in [1, r_l]$, $C_{i, l, 1:t} \triangleq (C_{0, i,l}(j), C_{i, l}(j))_{j \in [1, t]}$ as the messages exchanged between Alice-$i$ and Bob between the first and the $t$-th communication rounds occurring after the $l$-th channel usage. Let $C_i^l \triangleq (C_{i, j, 1:r_j})_{j \in [1, l]}$ represent all messages exchanged by Server 1 with the client before the $l+1$-th channel use. Let $l \in [1, n]$ and $j \in [1, r_l]$. Then, we have:
       \begin{align}\label{eq 1}
    I(M_{i0}, M_{i1}, R_{A_i}; Y^l, R_{B} | X_i^l, C_i^{l-1},&C_{i,l,1:j}, Z_i,Z_{\overline{i}})\notag\\
    & \overset{(a)}{\leq} I(M_{i0}, M_{i1}, R_{A_i}; Y^l, R_{B}, C_{0,i,l}(j) | X_i^l, C_i^{l-1},C_{i,l,1:(j-1)}, C_{i,l}(j), Z_i, Z_{\overline{i}})\\ 
    & \overset{(b)}{=} I(M_{i0}, M_{i1}, R_{A_i}; Y^l, R_{B} | X_i^l, C_i^{l-1},C_{i,l,1:(j-1)}, C_{i,l}(j), Z_i, Z_{\overline{i}})\notag\\
    & \leq I(M_{i0}, M_{i1}, R_{A_i}; Y^l, R_{B}, C_{i,l}(j) | X_i^l, C_i^{l-1},C_{i,l,1:(j-1)}, Z_i, Z_{\overline{i}})\notag\\
    &\overset{(c)}{=} I(M_{i0}, M_{i1}, R_{A_i}; Y^l, R_{B} | X_i^l, C_i^{l-1},C_{i,l,1:(j-1)}, Z_i, Z_{\overline{i}})\label{eq 2} \\
    &\overset{(d)}{\leq} I(M_{i0}, M_{i1}, R_{A_i}; Y^l, R_{B} | X_i^l, C_i^{l-1}, Z_i, Z_{\overline{i}}) \notag\\
    &\overset{(e)}{=} I(M_{i0}, M_{i1}, R_{A_i}; Y^{l-1}, R_{B} | X_i^l, C_i^{l-1}, Z_i, Z_{\overline{i}}) \notag\\
    &\leq I(M_{i0}, M_{i1}, R_{A_i}, X_{i,l}; Y^{l-1}, R_{B} | X_i^{l-1}, C_i^{l-1}, Z_i, Z_{\overline{i}})\notag\\
    &\overset{(f)}{=} I(M_{i0}, M_{i1}, R_{A_i}; Y^{l-1}, R_{B} | X_i^{l-1}, C_i^{l-1}, Z_i, Z_{\overline{i}})\label{eq 3}\\
    & \overset{(g)}{=} 0,    
\end{align}
where the steps are justified as follows:
\begin{itemize}
    \item $(a)$ follows by the definition of $C_{i, l, 1:j} = (C_{i, l, 1:(j-1)}, C_{0, i,l}(j), C_{1, l}(j))$ and by the chain rule.
    \item $(b)$ follows by the chain rule and because $C_{0, i,l}(j)$ depends on $(Z_i, Z_{\overline{i}}, R_{B}, Y^l, C_{i,l}(j), C_{i, l, 1:(j-1)})$.
    \item $(c)$ follows by the chain rule and because $C_{i,l}(j)$ depends on $(M_{i0}, M_{i1}, R_{A_i}, C_{i, l, 1:(j-1)}, C_i^{l-1})$.
    \item $(d)$ follows by previous iterations $(j-1)$ of Equations \eqref{eq 1} to \eqref{eq 2}.
    \item $(e)$ follows from the Markov chain: $(M_{i0},M_{i1},R_{A_i})-(Y^{l-1}, R_{B}, X_i^l, C_i^{l-1}, Z_i, Z_{\overline{i}})-Y_{l}$.
    \item $(f)$ follows the chain rule and the fact that $X_{i,l}$ is a function of $(M_{i0}, M_{i1}, R_{A_i}, C_i^{l-1})$.
    \item $(g)$ follows from the following calculation:
    \begin{align}
        I(M_{i0}, M_{i1}, R_{A_i}; Y^{l}, R_{B} | X_i^{l}, C_i^{l}, Z_i, Z_{\overline{i}}) & \leq I(M_{i0}, M_{i1}, R_{A_i}; Y^{l-1}, R_{B} | X_i^{l-1}, C_i^{l-1}, Z_i, Z_{\overline{i}})\\
        &\leq I(M_{i0}, M_{i1}, R_{A_i};R_{B} |Z_i, Z_{\overline{i}})\\
        & = 0.
    \end{align}
    Then, for any $z_i\in\{0,1\}, i\in\{1,2\}$ we have:
    \begin{align}
        I(M_{iz_i}; Y^n, R_{B} | X_i^n, \mathbf{C}_i, Z_i = z_i,Z_{\overline{i}} = z_{\overline{i}}) &\leq I(M_{i0}, M_{i1}, R_{A_i}; Y^{n}, R_{B} | X_i^{n}, \mathbf{C}_i, Z_i = z_i,Z_{\overline{i}} = z_{\overline{i}})\notag\\
        & = 0.\label{eq: imp}
    \end{align}
    Now we can prove that for any $z_i\in \{0, 1\}$, we have $H(M_{i\overline{Z}_i}|X_i^n, \textbf{C}, Z_i, Z_{\overline{i}}) = o(n), i\in\{1,2\}$ which means that the uncertainty about the unchosen messages given the encoded inputs, Bob's inputs and the total public communication is negligible as $n\to\infty$:
    \begin{align*}
        H(M_{i\overline{z}_i}|X_i^n, \textbf{C},Z_i = z_i, Z_{\overline{i}}= z_{\overline{i}}) & \,\leq H(M_{i\overline{z}_i}|X_i^n, \textbf{C}_i,Z_i = z_i, Z_{\overline{i}}= z_{\overline{i}})\\
        & \overset{(a)}{\leq} H(M_{i\overline{z}_i}|X_i^n, \textbf{C}_i,Z_i = z_i, Z_{\overline{i}}= z_{\overline{i}}) + o(n)\\
        & \stackrel{(b)}{=} H(M_{i\overline{z}_i}|Y^n, R_{B} , X_i^n, \textbf{C}_i,Z_i = z_i, Z_{\overline{i}}= z_{\overline{i}}) + o(n)\\
        & \stackrel{(c)}{\leq} H(M_{i\overline{z}_i}|Y^n, R_{B} , \hat{M}_{i\overline{z}_i}, \textbf{C}_i, Z_i = z_i, Z_{\overline{i}}= z_{\overline{i}}) + o(n)\\
        & \stackrel{(d)}{\leq} o(n).
    \end{align*}
    Finally, 
    \begin{align}
    H(M_{i\overline{Z}_i}|X_i^n, \textbf{C}_i,Z_i = z_i, Z_{\overline{i}}= z_{\overline{i}}) & = \sum_{z_i, z_{\overline{i}}} \mathbb{P} [(Z_i, Z_{\overline{i}}) = (z_i, z_{\overline{i}})] \times H(M_{i\overline{z}_i}|X_i^n, \textbf{C}_i, Z_i = z_i, Z_{\overline{i}}= z_{\overline{i}})\notag\\
    & = o(n),\label{eq: imp2}
    \end{align}
    where $(a)$ follows from \eqref{o}, $(b)$ follows from \eqref{eq: imp}, $(c)$ holds because $\hat{M}_{i\overline{z}_i}$ is a function of $(Y_i^n, R_{B_i}, \mathbf{C}_i)$, and $(d)$ holds by Fano's inequality and \eqref{goals: MAC-nColl-1}.

    Now, we have: 
    we have:

\begin{align*}
           k_i = H(M_{i\overline{z}_i}|Z_i=z_i, Z_{\overline{i}}=z_{\overline{i}}) & \,= H(M_{i\overline{z}_i}| R_{B}, Y^n, \textbf{C}_i,Z_i=z_i, Z_{\overline{i}}=z_{\overline{i}}) + o(n)\notag\\
           & \,= H(M_{i\overline{z}_i}| R_{B}, Y^n, \textbf{C}_i,Z_i=z_i, Z_{\overline{i}}=z_{\overline{i}})\\
           & \qquad + H(X_i^n|M_{i\overline{z}_i},R_{B}, Y^n, \textbf{C}_i,Z_i=z_i, Z_{\overline{i}}=z_{\overline{i}})\notag\\
           & \qquad -H(X_i^n|M_{i\overline{z}_i},R_{B}, Y^n, \textbf{C}_i,Z_i=z_i, Z_{\overline{i}}=z_{\overline{i}}) + o(n)\notag\\
           & \,= H(M_{i\overline{z}_i},X_i^n|R_{B}, Y^n, \textbf{C}_i, Z_i=z_i, Z_{\overline{i}}=z_{\overline{i}})\\
           & \qquad - H(X_i^n|M_{i\overline{z}_i},R_{B}, Y^n, \textbf{C}_i,Z_i=z_i, Z_{\overline{i}}=z_{\overline{i}}) + o(n)\notag\\
           & \,\leq H(M_{i\overline{z}_i},X_i^n|R_{B}, Y^n, \textbf{C}_i,Z_i=z_i, Z_{\overline{i}}=z_{\overline{i}}) + o(n)\notag\\
           & \,= H(X_i^n|R_{B}, Y^n, \textbf{C}_i, Z_i=z_i, Z_{\overline{i}}=z_{\overline{i}})\\
           & \qquad + H(M_{i\overline{z}_i}|X_i^n, R_{B}, Y^n, \textbf{C}_i, Z_i=z_i, Z_{\overline{i}}=z_{\overline{i}}) + o(n)\notag\\
           & \,\leq H(X_i^n|R_{B}, Y^n, \textbf{C}_i, Z_i=z_i, Z_{\overline{i}}=z_{\overline{i}})\\
           & \qquad + H(M_{i\overline{z}_i}|X_i^n, \textbf{C}_i, Z_i=z_i, Z_{\overline{i}}=z_{\overline{i}}) + o(n)\notag\\ 
           & \stackrel{(a)}{=}  H(X_i^n|R_{B}, Y^n, \textbf{C}_i, Z_i=z_i, Z_{\overline{i}}=z_{\overline{i}}) + o(n)\notag\\
           & \stackrel{(b)}{\leq} H(X_i^n|Y^n,Z_i=z_i, Z_{\overline{i}}=z_{\overline{i}}) + o(n)\notag\\
           & \leq \sum_{l=[1:n]} H(X_{i,l}|Y_{l}, Z_i=z_i, Z_{\overline{i}}=z_{\overline{i}}) + o(n)\notag\\
           & \stackrel{(c)}{\leq} \sum_{l=1}^{n} H(X_{i,l}|Y_{l}) + o(n), 
       \end{align*}
       where $(a)$ follows from \eqref{eq: imp2}, $(b)$ is because conditioning does not increase the entropy, and $(c)$ follows from the arguments presented in \cite{Rudolf1}. Similarly, we can prove that $k_1+k_2 = H(M_{1\overline{z}_1},M_{2\overline{z}_2}|Z_1=z_1, Z_2=z_2)\leq \sum_{l=1}^{n} H(X_{1,l},X_{2,l}|Y_{l}) + o(n)$.
\end{itemize}
Then, the final upper bound presented in Theorem \ref{thm: OT capacity} is proved. \qed

\section{Proof of Theorem \ref{thm: OT capacity2}\label{App: proof-thm: OT capacity}}

        We must bound three conditional min-entropies to get three bounds on the lower bound:
        \begin{equation}\label{eq: first}
            H^\epsilon_{\infty}(\mathbf{X}_1|_{S_{1\overline{Z}_1}}|h_{10}(R_{10}, \mathbf{X}_1|_{S_{10}}), h_{11}(R_{11}, \mathbf{X}_1|_{S_{11}}), Y^n, R^{(1)}, T^{(1)}),
        \end{equation}
        \begin{equation}\label{eq: second}
            H^\epsilon_{\infty}(\mathbf{X}_2|_{S_{2\overline{Z}_2}}|h_{20}(R_{20}, \mathbf{X}_2|_{S_{20}}), h_{21}(R_{21}, \mathbf{X}_2|_{S_{21}}), Y^n, R^{(2)}, T^{(2)}),
        \end{equation}
        \begin{equation}\label{eq: third}
            H^\epsilon_{\infty}(\mathbf{X}_1|_{S_{1\overline{Z}_1}},\mathbf{X}_2|_{S_{2\overline{Z}_2}}|h_{10}(R_{10}, \mathbf{X}_1|_{S_{10}}), h_{11}(R_{11}, \mathbf{X}_1|_{S_{11}}), h_{20}(R_{20}, \mathbf{X}_2|_{S_{20}}), h_{21}(R_{21}, \mathbf{X}_2|_{S_{21}}), Y^n, R_{A_1}, R_{A_2}).
        \end{equation}
        Consider \eqref{eq: first}. Since the $\text{SU-SBC}_{p,W}$ is i.i.d., we have: 
        \begin{align*}
            H^\epsilon_{\infty}(\mathbf{X}_1|_{S_{1\overline{Z}_1}}|h_{10}(R_{10}, \mathbf{X}_1|_{S_{10}}), h_{11}(R_{11}, \mathbf{X}_1|_{S_{11}})&, Y^n, R^{(1)}, T^{(1)})\\
            & = H^\epsilon_{\infty}(\mathbf{X}_1|_{S_{1\overline{Z}_1}}|h_{1j}(R_{1j}, \mathbf{X}_1|_{S_{1\overline{Z}_1}}), \mathbf{Y}|_{S_{1\overline{Z}_1}}, R^{(1)}, T^{(1)}).
        \end{align*}
        
        Applying \eqref{eq: holenstein-1} for $\epsilon, \epsilon'>0$, we have: 
        \begin{align}\label{eq: after applying holenstein-1}
            H_\infty^{\epsilon+\epsilon'}(\mathbf{X}_1&|_{S_{1\overline{Z}_1}}|h_{1j}(R_{1j}, \mathbf{X}_1|_{S_{1\overline{Z}_1}}), \mathbf{Y}|_{S_{1\overline{Z}_1}}, R^{(1)}, T^{(1)})\notag\\
            & \geq H_\infty(\mathbf{X}_1|_{S_{1\overline{Z}_1}}|\mathbf{Y}|_{S_{1\overline{Z}_1}}, R^{(1)}, T^{(1)}) + H_\infty^\epsilon(h_{1j}(R_{1j}, \mathbf{X}_1|_{S_{1\overline{Z}_1}})|\mathbf{X}_1|_{S_{1\overline{Z}_1}},\mathbf{Y}|_{S_{1\overline{Z}_1}}, R^{(1)}, T^{(1)})\notag\\
            & \qquad\qquad\qquad\qquad\qquad\qquad\qquad\quad- H_0(h_{1j}(R_{1j}, \mathbf{X}_1|_{S_{1\overline{Z}_1}})|\mathbf{Y}|_{S_{1\overline{Z}_1}}, R^{(1)}, T^{(1)}) - \log(\frac{1}{\epsilon'}).
        \end{align}
        Note that $H_0(h_{1j}(R_{1j}, \mathbf{X}_1|_{S_{1\overline{Z}_1}})|\mathbf{Y}|_{S_{1\overline{Z}_1}}, R^{(1)}, T^{(1)})$ limits the number of distinct possible outputs, restricting the amount of information Alice-1 can gain about Bob's choice: 
        \[
        H_0(h_{1j}(R_{1j}, \mathbf{X}_1|_{S_{1\overline{Z}_1}})|\mathbf{Y}|_{S_{1\overline{Z}_1}}, R^{(1)}, T^{(1)})\leq s_1 n.
        \]
        Knowing that $H_\infty^\epsilon(h_{1j}(R_{1j}, \mathbf{X}_1|_{S_{1\overline{Z}_1}})|\mathbf{X}_1|_{S_{1\overline{Z}_1}},\mathbf{Y}|_{S_{1\overline{Z}_1}}, R^{(1)}, T^{(1)}) = 0$, \eqref{eq: after applying holenstein-1} is simplified to:
        \begin{align}\label{eq: middle}
            H_\infty^{\epsilon+\epsilon'}(\mathbf{X}_1|_{S_{1\overline{Z}_1}}|h_{1j}(R_{1j}, \mathbf{X}_1|_{S_{1\overline{Z}_1}}), \mathbf{Y}|_{S_{1\overline{Z}_1}}, R^{(1)}, T^{(1)})&\,\geq H_{\infty}(\mathbf{X}_1|_{S_{1\overline{Z}_1}}|\mathbf{Y}|_{S_{1\overline{Z}_1}}) - s_1 n-\log(\frac{1}{\epsilon'})\notag\\
            & \overset{(a)}{\geq} H^\epsilon_{\infty}(\mathbf{X}_1|_{S_{1\overline{Z}_1}}|\mathbf{Y}|_{S_{1\overline{Z}_1}}) - s_1 n-\log(\frac{1}{\epsilon})-\log(\frac{1}{\epsilon'}),
        \end{align}
        where $(a)$ follows from Lemma \ref{lemm: min-entropy vs smooth}. Let $V$ be an i.i.d. random variable so that $V = e$ (erasure) with probability $\frac12-\eta$ and $V = Z$ (the output of channel $W$ on input $(X_1,X_2)$). With negligible error probability and $\text{SU-SBC}_{p,W}$ being i.i.d., for $S_{1\overline{Z}_1}$, we have:
        \begin{align}\label{eq: middle+1}
            H_\infty^{\epsilon+\epsilon'}(\mathbf{X}_1|_{ S_{1\overline{Z}_1}}|\mathbf{Y}|_{S_{1\overline{Z}_1}})&\,\geq H_\infty^{\epsilon}(\mathbf{X}_1|_{\lvert S_{1\overline{Z}_1}\lvert}|\mathbf{V}|_{\lvert S_{1\overline{Z}_1}\lvert})\notag\\
            & \overset{(a)}{\geq} \lvert S_{1\overline{Z}_1}\rvert H(X_1|V)-4\sqrt{\lvert S_{1\overline{Z}_1}\rvert\log(1/\epsilon)}\log \lvert \mathcal{X}_1 \rvert\notag\\
            & \,\geq (p-\eta)n H(X_1|V)-4\sqrt{(p-\eta)n\log(1/\epsilon)}\notag\notag\\
            & \,\geq pn H(X_1|V)-\eta n H(X_1|V)-4\sqrt{(p-\eta)n\log(1/\epsilon)}\notag\notag\\
            & \,\geq pn H(X_1|V)-\eta n -4\sqrt{(p-\eta)n\log(1/\epsilon)}\notag\notag\\
            & \overset{(b)}{\geq} pn (1-2\eta)H(X_1)+2\eta n H(X_1|Z)-\eta n-4\sqrt{(p-\eta)n\log(1/\epsilon)}\notag\notag\\
            & \,\geq pn H(X_1)-2\eta n-4\sqrt{n\log(1/\epsilon)},
        \end{align}
        where $(a)$ follows from Lemma \ref{lemm: Holenstein}, $\lvert\mathcal{X}_1\rvert = 2$, and $(b)$ follows from this fact that honest Bob doesn't split the erasures received from Alice-1 between $S_{10}$ and $S_{11}$, with probability exponentially close to one, the total number of non-erased symbols Bob receives from each sender will not exceed $(p + \eta)n$, so the number of non-erasures in $\mathbf{Y}|_{S_{1\overline{Z}_1}}$ is at most $\lvert(p-\eta)n - (p + \eta)n\rvert = 2n\eta$.
        
        \parbreak Putting $\epsilon = \epsilon + \epsilon'$ in \eqref{eq: middle}, then putting \eqref{eq: middle+1} to \eqref{eq: middle}, we have: 
        \begin{align}
            H_\infty^{\epsilon+2\epsilon'}(\mathbf{X}_1|_{S_{1\overline{Z}_1}}|h_{1j}(R_{1j}, \mathbf{X}_1|_{S_{1\overline{Z}_1}}), \mathbf{Y}|_{S_{1\overline{Z}_1}}, R^{(1)}, T^{(1)})&\,\geq pn H(X_1)-2\eta n-4\sqrt{n\log(1/\epsilon)} - s_1 n-\log(\frac{1}{\epsilon})-\log(\frac{1}{\epsilon'})\notag\\
            & \overset{(a)}{=} pn H(X_1) - s_1 n -2\eta n-4\sqrt{n\alpha}-n (\alpha+\alpha'),
        \end{align}
        where $(a)$ follows from setting $\epsilon = 2^{-\alpha n}$ and  $\epsilon' = 2^{-\alpha' n}$ ($\epsilon$ and $\epsilon'$ are negligible in $n$). For any $\delta \geq (\alpha+\alpha' + 2\eta +4\sqrt{\alpha})> 0$, we have:
        \[
        H_\infty^{\epsilon+2\epsilon'}(\mathbf{X}_1|_{S_{1\overline{Z}_1}}|h_{1j}(R_{1j}, \mathbf{X}_1|_{S_{1\overline{Z}_1}}), \mathbf{Y}|_{S_{1\overline{Z}_1}}, R^{(1)}, T^{(1)})\geq pn H(X_1) - s_1 n -\delta n.
        \]
        From Lemma \ref{DLHL}, we know that if we set $r_1<pH(X_1)-s_1$ and appropriately choose the constant $\delta, \eta, \alpha$ and $\alpha'$, Bob can not obtain non-trivial information about the unselected string. The proof for Alice-2 is the same.

        \parbreak Now we consider \eqref{eq: third}. Since the $\text{SU-SBC}_{p,W}$ is i.i.d., \eqref{eq: third} can be written as: 
        \begin{align*}
            H^\epsilon_{\infty}(\mathbf{X}_1|_{S_{1\overline{Z}_1}}&,\mathbf{X}_2|_{S_{2\overline{Z}_2}}|h_{1j}(R_{1j}, \mathbf{X}_1|_{S_{1\overline{Z}_1}}), h_{2j}(R_{2j}, \mathbf{X}_2|_{S_{2\overline{Z}_2}}), \mathbf{Y}|_{S_{1\overline{Z}_1},S_{2\overline{Z}_2}}, R_{A_1}, R_{A_2})\notag\\
            & \overset{(a)}{\geq} H_\infty(\mathbf{X}_1|_{S_{1\overline{Z}_1}}|h_{1j}(R_{1j}, \mathbf{X}_1|_{S_{1\overline{Z}_1}}), h_{2j}(R_{2j}, \mathbf{X}_2|_{S_{2\overline{Z}_2}}), \mathbf{Y}|_{S_{1\overline{Z}_1},S_{2\overline{Z}_2}}, R_{A_1}, R_{A_2}) \notag\\
            & \quad + H_\infty^\epsilon(\mathbf{X}_2|_{S_{2\overline{Z}_2}}|\mathbf{X}_1|_{S_{1\overline{Z}_1}}, h_{1j}(R_{1j}, \mathbf{X}_1|_{S_{1\overline{Z}_1}}), h_{2j}(R_{2j}, \mathbf{X}_2|_{S_{2\overline{Z}_2}}), \mathbf{Y}|_{S_{1\overline{Z}_1},S_{2\overline{Z}_2}}, R_{A_1}, R_{A_2})\notag\\
            & \overset{(b)}{\geq} H^\epsilon_\infty(\mathbf{X}_1|_{S_{1\overline{Z}_1}}|h_{1j}(R_{1j}, \mathbf{X}_1|_{S_{1\overline{Z}_1}}), h_{2j}(R_{2j}, \mathbf{X}_2|_{S_{2\overline{Z}_2}}), \mathbf{Y}|_{S_{1\overline{Z}_1},S_{2\overline{Z}_2}}, R_{A_1}, R_{A_2}) \notag\\
            & \quad + H_\infty^\epsilon(\mathbf{X}_2|_{S_{2\overline{Z}_2}}|\mathbf{X}_1|_{S_{1\overline{Z}_1}}, h_{1j}(R_{1j}, \mathbf{X}_1|_{S_{1\overline{Z}_1}}), h_{2j}(R_{2j}, \mathbf{X}_2|_{S_{2\overline{Z}_2}}), \mathbf{Y}|_{S_{1\overline{Z}_1},S_{2\overline{Z}_2}}, R_{A_1}, R_{A_2}) - \log(\frac{1}{\epsilon}),\notag\\
        \end{align*}
        where $(a)$ is due to \eqref{eq: holenstein-1 + holenstein-1}, and $(b)$ is due to Lemma \ref{lemm: min-entropy vs smooth}. Applying \eqref{eq: holenstein-1} $\epsilon, \epsilon'>0$ for each terms, we have:
        \begin{align}\label{eq: sum rate}
            H^{\epsilon+\epsilon'}_\infty(\mathbf{X}_1|_{S_{1\overline{Z}_1}}&|h_{1j}(R_{1j}, \mathbf{X}_1|_{S_{1\overline{Z}_1}}), h_{2j}(R_{2j}, \mathbf{X}_2|_{S_{2\overline{Z}_2}}), \mathbf{Y}|_{S_{1\overline{Z}_1},S_{2\overline{Z}_2}}, R_{A_1}, R_{A_2}) \notag\\
            & \quad + H_\infty^{\epsilon+\epsilon'}(\mathbf{X}_2|_{S_{2\overline{Z}_2}}|\mathbf{X}_1|_{S_{1\overline{Z}_1}}, h_{1j}(R_{1j}, \mathbf{X}_1|_{S_{1\overline{Z}_1}}), h_{2j}(R_{2j}, \mathbf{X}_2|_{S_{2\overline{Z}_2}}), \mathbf{Y}|_{S_{1\overline{Z}_1},S_{2\overline{Z}_2}}, R_{A_1}, R_{A_2}) - \log(\frac{1}{\epsilon})\notag\\
            &\qquad\overset{(a)}{\geq} H_\infty(\mathbf{X}_1|_{S_{1\overline{Z}_1}}| \mathbf{Y}|_{S_{1\overline{Z}_1},S_{2\overline{Z}_2}}, R_{A_1}, R_{A_2})\notag\\
            & \quad\qquad+ H_\infty^\epsilon(h_{1j}(R_{1j}, \mathbf{X}_1|_{S_{1\overline{Z}_1}})|\mathbf{X}_1|_{S_{1\overline{Z}_1}},\mathbf{Y}|_{S_{1\overline{Z}_1},S_{2\overline{Z}_2}}, R_{A_1}, R_{A_2})\notag\\
            & \qquad\quad- H_0(h_{1j}(R_{1j}, \mathbf{X}_1|_{S_{1\overline{Z}_1}})|\mathbf{Y}|_{S_{1\overline{Z}_1},S_{2\overline{Z}_2}}, R_{A_1}, R_{A_2}) - \log(\frac{1}{\epsilon'})\notag\\
            &\qquad\quad + H_\infty(\mathbf{X}_2|_{S_{2\overline{Z}_2}}| \mathbf{Y}|_{S_{1\overline{Z}_1},S_{2\overline{Z}_2}}, R_{A_1}, R_{A_2})\notag\\
            & \quad\qquad+ H_\infty^\epsilon(h_{2j}(R_{2j}, \mathbf{X}_2|_{S_{2\overline{Z}_2}})|\mathbf{X}_1|_{S_{1\overline{Z}_1}},\mathbf{Y}|_{S_{1\overline{Z}_1},S_{2\overline{Z}_2}}, R_{A_1}, R_{A_2})\notag\\
            & \qquad\quad - H_0(h_{2j}(R_{2j}, \mathbf{X}_2|_{S_{2\overline{Z}_2}})|\mathbf{Y}|_{S_{1\overline{Z}_1},S_{2\overline{Z}_2}}, R_{A_1}, R_{A_2}) - \log(\frac{1}{\epsilon'})\notag\\
            & \qquad\overset{(b)}{\geq} H^\epsilon_{\infty}(\mathbf{X}_1|_{S_{1\overline{Z}_1}}|\mathbf{Y}|_{S_{1\overline{Z}_1},S_{2\overline{Z}_2}}) + H^\epsilon_{\infty}(\mathbf{X}_2|_{S_{2\overline{Z}_2}}|\mathbf{X}_1|_{S_{1\overline{Z}_1}},\mathbf{Y}|_{S_{1\overline{Z}_1},S_{2\overline{Z}_2}})\notag\\
            & \qquad\qquad\qquad\qquad\qquad\qquad\qquad\,\,\,\,
            - s_1 n -s_2 n-2\log(\frac{1}{\epsilon})-2
            \log(\frac{1}{\epsilon'}),
        \end{align}
        where $(a)$ is due to \eqref{eq: holenstein-1} and the independence of $X_i$ from $h_{\overline{i},j}$, and $(b)$ is due to the similar simplification as \eqref{eq: after applying holenstein-1}-\eqref{eq: middle}. The first term of \eqref{eq: sum rate} can be bounded similarly to \eqref{eq: middle+1}. For the second term, we have: 
        \begin{align}\label{eq: middle+1: sum rate}
            H_\infty^{\epsilon+\epsilon'}(\mathbf{X}_2|_{ S_{2\overline{Z}_2}}|\mathbf{X}_1|_{ S_{1\overline{Z}_1}}, \mathbf{Y}|_{S_{1\overline{Z}_1},S_{2\overline{Z}_2}})&\,
            \geq H_\infty^{\epsilon}(\mathbf{X}_2|_{\lvert S_{2\overline{Z}_2}\lvert}|\mathbf{X}_1|_{\lvert S_{1\overline{Z}_1}\rvert},\mathbf{V}|_{\lvert S_{1\overline{Z}_1},S_{2\overline{Z}_2}\lvert})\notag\\
            & \overset{(a)}{\geq} \lvert S_{2\overline{Z}_2}\rvert H(X_2|X_1, V)-4\sqrt{\lvert S_{2\overline{Z}_2}\rvert\log(1/\epsilon)}\log \lvert \mathcal{X}_2 \rvert\notag\\
            & \,\geq (p-\eta)n H(X_2|X_1, V)-4\sqrt{(p-\eta)n\log(1/\epsilon)}\notag\notag\\
            & \,\geq pn H(X_2|X_1, V)-\eta n H(X_2|X_1, V)-4\sqrt{(p-\eta)n\log(1/\epsilon)}\notag\notag\\
            & \,\geq pn (1-2\eta)H(X_2|X_1)+2\eta n H(X_2|X_1, Z)-\eta n-4\sqrt{(p-\eta)n\log(1/\epsilon)}\notag\notag\\
            & \,\geq pn H(X_2)-2\eta n-4\sqrt{n\log(1/\epsilon)},
        \end{align}
        where $(a)$ is due to Lemma \ref{lemm: Holenstein} and this fact that in Protocol 1, $\lvert S_{1Z_1} \rvert = \lvert S_{2Z_2} \rvert = (p-\eta)n$.
        
        Then by putting $\epsilon=\epsilon+\epsilon'$, \eqref{eq: sum rate} is as follows: \begin{align}\label{eq: sum rate-2}
        H^{\epsilon+2\epsilon'}_\infty(\mathbf{X}_1|_{S_{1\overline{Z}_1}}&|h_{1j}(R_{1j}, \mathbf{X}_1|_{S_{1\overline{Z}_1}}), h_{2j}(R_{2j}, \mathbf{X}_2|_{S_{2\overline{Z}_2}}), \mathbf{Y}|_{S_{1\overline{Z}_1},S_{2\overline{Z}_2}}, R_{A_1}, R_{A_2}) \notag\\
        & \quad + H_\infty^{\epsilon+\epsilon'}(\mathbf{X}_2|_{S_{2\overline{Z}_2}}|\mathbf{X}_1|_{S_{1\overline{Z}_1}}, h_{1j}(R_{1j}, \mathbf{X}_1|_{S_{1\overline{Z}_1}}), h_{2j}(R_{2j}, \mathbf{X}_2|_{S_{2\overline{Z}_2}}), \mathbf{Y}|_{S_{1\overline{Z}_1},S_{2\overline{Z}_2}}, R_{A_1}, R_{A_2}) - \log(\frac{1}{\epsilon})\notag\\
        & \,\qquad\geq pn \big(H(X_1)+H(X_2)\big)-4\eta n-8\sqrt{n\log(1/\epsilon)}-s_1 n-s_2 n - 2\log(\frac{1}{\epsilon})- 2\log(\frac{1}{\epsilon'})\notag\\
        & \qquad \overset{(a)}{\geq} pn \big(H(X_1)+H(X_2)\big)-s_1 n-s_2 n - 2\delta n,
        \end{align}
        for any $\delta \geq (\alpha+\alpha' + 2\eta +4\sqrt{\alpha})> 0$,  $(a)$ follows from setting $\epsilon = 2^{-\alpha n}$ and  $\epsilon' = 2^{-\alpha' n}$ ($\epsilon$ and $\epsilon'$ are negligible in $n$).

        From Lemma \ref{DLHL}, we know that if we set $r_1+r_2<p(H(X_1)+H(X_2))-s_1-s_2$ and appropriately choose the constant $\delta, \eta, \alpha$ and $\alpha'$, Bob can not obtain non-trivial information about the unselected strings. 
        
        \parbreak Now, we find the appropriate $s_1$ and $s_2$ under which the protocol remains correct and secure. Due to the Chernoff bound, we know that the probability of aborting the protocol by Bob in step $(2)$ tends to zero as $n\rightarrow \infty$. The protocol fails in step $(5)$, if there is more than one pair, such as $(\hat{\mathbf{x}}_1|_{S_{1Z_1}}, \hat{\mathbf{x}}_2|_{S_{2Z_2}})$ where $h_i(\mathbf{X}_i|_{S_{iZ_i}}) = h_i(\hat{\mathbf{X}}_i|_{S_{iZ_i}})$. We know that if all players are honest, then $\mathbf{Z}|_{S_{1Z_1},S_{2Z_2}} = \mathbf{Y}|_{S_{1Z_1},S_{2Z_2}}$ with probability exponentially close to one and the number of paired sequences $(\hat{\mathbf{x}}_1|_{S_{1Z_1}}, \hat{\mathbf{x}}_2|_{S_{2Z_2}})$ jointly typical with $\mathbf{z}|_{S_{1Z_1}, S_{2Z_2}}$ can be upper bounded as follows:
        
        \begin{itemize}
            \item If one of the sequences $\hat{\mathbf{x}}_i|_{S_{iZ_i}}$ is not typical with $(\mathbf{x}_{\overline{i}}|_{S_{\overline{i}Z_{\overline{i}}}}, \mathbf{z}|_{S_{1Z_1}, S_{2Z_2}})$: $2^{\lvert S_{iZ_i} \rvert(H(X_1,X_2,Z)- H(X_{\overline{i}},Z)+ \delta'}) = \linebreak 2^{np(H(X_i|X_{\overline{i}}, Z)+\delta')}$, $\delta'>0$. From \eqref{eq: hashs}, we know that $p\leq 2^{-s_in}2^{pn(H(X_i|X_{\overline{i}},Z)+\delta')}$, then $s_i > p H(X_i|X_{\overline{i}},Z)$.
            \item If both of the sequences $(\hat{\mathbf{x}}_1|_{S_{1Z_1}},\hat{\mathbf{x}}_2|_{S_{2Z_2}})$ are not typical with $ \mathbf{z}|_{S_{1Z_1}, S_{2Z_2}}$: $2^{\lvert S_{iZ_i} \rvert(H(X_1,X_2,Z)- H(Z)+ \delta'}) = \linebreak 2^{np(H(X_1,X_2,Z)- H(Z)+ \delta')}$, $\delta'>0$. From \eqref{eq: hashs-double}, we know that $p\leq 2^{-(s_1+s_2)n}2^{pn(H(X_1,X_2,Z)- H(Z)+ \delta')}$, then $s_1+s_2 > p (H(X_1,X_2,Z)- H(Z)+ \delta')$.
        \end{itemize}

        The final inner bound is as follows:
        \begin{align*}
            r_1 &< p \max_{P_{X_1}P_{X_2}} \big(H(X_1)-H(X_1|X_2, Z)\big) = \max_{P_{X_1}P_{X_2}} I(X_1;Y|X_2),\\
            r_2 &< p \max_{P_{X_1}P_{X_2}} \big(H(X_2)-H(X_2|X_1, Z)\big) = \max_{P_{X_1}P_{X_2}} I(X_2;Y|X_1),\\
            r_1+r_2&<p\max_{P_{X_1}P_{X_2}}\big(H(X_1)+H(X_2)+H(Z)-H(X_1, X_2,Z)\big) = \max_{P_{X_1}P_{X_2}} I(X_1, X_2;Y) .
        \end{align*}

        \parbreak

       \parbreak The lower and upper bounds coincide, then the capacity is proved. \qed
\section{Proof of Theorem \ref{thm: OT capacity-malicious}\label{App: proof-thm: OT capacity-malicious}}\label{{App: proof-thm: OT capacity-malicious}}}
    
The overall structure of the proof is the same as Theorem \ref{thm: OT capacity} wherein all parties are honest. Malicious Bob can benefit from the unfairness of the channel and deviate from the channel statistics in $\delta n$ positions without being detected. He tries to find the unselected strings from both senders. Thus, he could compute sets $S_{i0}$ and $S_{i1}$ so that non-erasures are distributed in both sets. With probability exponentially close to one, the total number of non-erasures Bob receives from each sender will be no larger than $(p+\eta)n$. Thus, for any strategy Bob distributes these non-erasures between two sets $S_{i0}$ and $S_{i1}$, the number of erasures in $S_{i\overline{Z}_i}$ is no less than $(p-\eta)n - \frac{p + \eta}{2}n = \frac{1}{2}(p-3\eta) n$.

\parbreak Again, we must bound \eqref{eq: first}-\eqref{eq: third}, to get three bounds on the lower bound. For \eqref{eq: first} and \eqref{eq: second}, all steps are the same until \eqref{eq: middle}. Let $V$ be an i.i.d. random variable so that $V = e$ (erasure) with probability $\frac12-\eta$ and $V = Z$ (the output of channel $W$ on input $(X_1,X_2)$). As the number of erasures in $S_{i\overline{Z}_i}$ is no less than $\frac{1}{2}(p-3\eta) n$ with negligible error probability and $\text{SU-SBC}_{p,W}$ being i.i.d., by taking the same steps as \eqref{eq: middle+1} for $S_{1\overline{Z}_1}$, we have:
    \begin{align}\label{eq: middle+1-malicious}
            H_\infty^{\epsilon+\epsilon'}(\mathbf{X}_1|_{ S_{1\overline{Z}_1}}|\mathbf{Y}|_{S_{1\overline{Z}_1}})&\geq H_\infty^{\epsilon}(\mathbf{X}_1|_{\lvert S_{1\overline{Z}_1}\lvert}|\mathbf{V}|_{\lvert S_{1\overline{Z}_1}\lvert})\notag\\
            & \geq \frac{p}{2}n\big( H(X_1)+H(X_1|Z)\big)-2\eta n-4\sqrt{n\log(1/\epsilon)}.
        \end{align}

        \parbreak Putting $\epsilon = \epsilon + \epsilon'$ in \eqref{eq: middle}, then putting \eqref{eq: middle+1-malicious} to \eqref{eq: middle}, we have: 
        \begin{align}
            H_\infty^{\epsilon+2\epsilon'}(\mathbf{X}_1|_{S_{1\overline{Z}_1}}|h_{1j}(R_{1j}, \mathbf{X}_1|_{S_{1\overline{Z}_1}}), \mathbf{Y}|_{S_{1\overline{Z}_1}}, R^{(1)}, T^{(1)})&\,\geq \frac{p}{2}n \big(H(X_1)+H(X_1|Z)\big)-2\eta n-4\sqrt{n\log(1/\epsilon)}\notag\\
            & \qquad\qquad\qquad\qquad\qquad\quad\,\,\,\,\,- s_1 n-\log(\frac{1}{\epsilon})-\log(\frac{1}{\epsilon'})\notag\\
            & \overset{(a)}{=} \frac{p}{2}n\big( H(X_1)+H(X_1|Z)\big)-2\eta n-4\sqrt{n\log(1/\epsilon)}\notag\\
            & \qquad\qquad\qquad\qquad\qquad\qquad - s_1 n -2\eta n-4\sqrt{n\alpha}-n (\alpha+\alpha')),
        \end{align}
        where $(a)$ follows from setting $\epsilon = 2^{-\alpha n}$ and  $\epsilon' = 2^{-\alpha' n}$ ($\epsilon$ and $\epsilon'$ are negligible in $n$). For any $\delta \geq (\alpha+\alpha' + 2\eta +4\sqrt{\alpha})> 0$, we have:
        \[
        H_\infty^{\epsilon+2\epsilon'}(\mathbf{X}_1|_{S_{1\overline{Z}_1}}|h_{1j}(R_{1j}, \mathbf{X}_1|_{S_{1\overline{Z}_1}}), \mathbf{Y}|_{S_{1\overline{Z}_1}}, R^{(1)}, T^{(1)})\geq\frac{p}{2}n \big(H(X_1)+H(X_1|Z)\big) - s_1 n -\delta n.
        \]
        From Lemma \ref{DLHL}, we know that if we set $r_1<\frac{p}{2}n \big(H(X_1)+H(X_1|Z)\big)-s_1$ and appropriately choose the constant $\delta, \eta, \alpha$ and $\alpha'$, Bob can not obtain non-trivial information about the unselected string. The proof for Alice-2 is the same.

        \parbreak Now consider \eqref{eq: sum rate} for the sum-rate. Similarly, it can be bounded from below:  
        \begin{align*}
            H^\epsilon_{\infty}(\mathbf{X}_1|_{S_{1\overline{Z}_1}}|\mathbf{Y}|_{S_{1\overline{Z}_1},S_{2\overline{Z}_2}}) & + H^\epsilon_{\infty}(\mathbf{X}_2|_{S_{2\overline{Z}_2}}|\mathbf{X}_1|_{S_{1\overline{Z}_1}},\mathbf{Y}|_{S_{1\overline{Z}_1},S_{2\overline{Z}_2}})\notag - s_1 n -s_2 n-2\log(\frac{1}{\epsilon})-2
            \log(\frac{1}{\epsilon'})\\
            &\geq \frac{p}{2}n \big(H(X_1)+H(X_2)+H(X_1|Z)+H(X_2|X_1, Z)\big) -4\eta n-8\sqrt{n\log(1/\epsilon)}-s_1 n-s_2 n\\
            & \qquad\qquad\qquad\qquad\qquad\qquad\qquad\qquad\qquad\qquad\qquad\qquad\quad\,\,\,\,- 2\log(\frac{1}{\epsilon})- 2\log(\frac{1}{\epsilon'})\notag\\
            & \overset{(a)}{\geq} \frac{p}{2}n \big(H(X_1)+H(X_2)+H(X_1|Z)+H(X_2|X_1, Z)\big)-s_1 n-s_2 n - 2\delta n,
        \end{align*}
        for any $\delta \geq (\alpha+\alpha' + 2\eta +4\sqrt{\alpha})> 0$,  $(a)$ follows from setting $\epsilon = 2^{-\alpha n}$ and  $\epsilon' = 2^{-\alpha' n}$ ($\epsilon$ and $\epsilon'$ are negligible in $n$).

        \parbreak Up to now, we proved that if $r_i>\frac{p}{2} n\big(H(X_i) + H(X_i|Z)\big)-s
        _i n -\delta n$ and $r_1 + r_2>\frac{p}{2} n\big(H(X_1) + H(X_1|Z)+H(X_2) + H(X_2|X_1, Z)\big)-s_1
        n - s_2 n -2\delta n$, then Protocol 1 is private against malicious Bob. Since we proved for $s_i > p H(X_i|X_{\overline{i}},Z)$ and $s_1+s_2 > p (H(X_1,X_2,Z)- H(Z)+ \delta^{\prime})$, Protocol 1 is correct for honest players, Then the above region can be written as: 
        \begin{align*}
            R_1 &<  \frac12\max_{P_{X_1}P_{X_2}}\big\{ I(X_1;Y|X_2)+I(X_1;X_2|Y)\big\},\\
            R_2 &<  \frac12\max_{P_{X_1}P_{X_2}}\big\{ I(X_2;Y|X_1)+I(X_1;X_2|Y)\big\},\\
            R_1+R_2 &<  \frac12\max_{P_{X_1}P_{X_2}} I(X_1, X_2;Y),
        \end{align*}

        for some distribution $p(x_1)p(x_2)$ on $\mathcal{X}_1\times\mathcal{X}_2$. This completes the proof.

\bibliography{references}      

\end{document}